\newtheorem{lem}{Lemma}
\newtheorem{defn}{Definition}
\newtheorem{cor}{Corollary}
\newtheorem{thm}{Theorem}
\DeclareMathOperator*{\argmin}{arg\,min}
\begin{document}
\title{Stabilizer ground states for simulating quantum many-body physics: theory, algorithms, and applications}
\author{Jiace Sun}
\affiliation{Division of Chemistry and Chemical Engineering, California Institute of Technology, Pasadena, CA 91125, USA}
\email{jsun3@caltech.edu.}
\author{Lixue Cheng}
\affiliation{Division of Chemistry and Chemical Engineering, California Institute of Technology, Pasadena, CA 91125, USA}
\affiliation{Department of Chemistry, The Hong Kong University of Science and Technology, Clear Water Bay, Kowloon, Hong Kong (SAR) 999077, China}
\author{Shi-Xin Zhang}
\affiliation{Institute of Physics, Chinese Academy of Sciences, Beijing 100190, China}

\begin{abstract}
Stabilizer states, which are also known as the Clifford states, have been commonly utilized in quantum information, quantum error correction, and quantum circuit simulation due to their simple mathematical structure.
In this work, we apply stabilizer states to tackle quantum many-body ground state problems and introduce the concept of stabilizer ground states.
We establish an equivalence formalism for identifying stabilizer ground states of general Pauli Hamiltonians.
Moreover, we develop an exact and linear-scaled algorithm to obtain stabilizer ground states of 1D local Hamiltonians and thus free from discrete optimization.
This proposed equivalence formalism and linear-scaled algorithm are not only applicable to finite-size systems, but also adaptable to infinite periodic systems.
The scalability and efficiency of the algorithms are numerically benchmarked on different Hamiltonians.
Finally, we demonstrate that stabilizer ground states are promising tools for not only qualitative understanding of quantum systems, but also cornerstones of more advanced classical or quantum algorithms.
\end{abstract}

\maketitle

\section{Introduction}

Discovering the ground state of quantum Hamiltonians remains a central challenge in quantum many-body physics. The difficulty stems from the exponential scaling of the Hilbert space dimension, rendering exact calculations impractical in most cases \cite{Kitaev2002a_z, Aharonov2002_z, Kempe2006_z, Schuch2009a_z, Huang2020c_z, lee2023evaluating}. Consequently, a variety of classical and quantum algorithms have been developed to address this problem. Two of the most commonly employed strategies for ground state determination are variational methods and approaches based on imaginary time evolution. The variational approach, represented by techniques such as variational quantum Monte Carlo (VQMC) \cite{Carleo2017_z, Deng2017c_z, Glasser2018b_z, Zhang2019b_z}, density matrix renormalization group (DMRG) \cite{Verstraete2008a_z, Schollwock2011_z, schollwock2005density}, and variational quantum eigensolver (VQE) \cite{kandala2017hardware, cerezo2022variational, tilly2022variational}, constructs a parameterized quantum state ansatz $|\psi(\boldsymbol{\theta})\rangle$ and minimizes the energy with respect to the parameters $\boldsymbol{\theta}$. In contrast, imaginary time evolution methods, including auxiliary-field quantum Monte Carlo (AFQMC) \cite{motta2018ab, lee2022twenty, carlson2011auxiliary}, time-evolving block decimation (TEBD) \cite{schollwock2013matrix, orus2008infinite, haegeman2016unifying}, and quantum imaginary-time evolution (QITE) \cite{motta2020determining, mcardle2019variational}, iteratively apply $e^{-\Delta\beta H}$ to an initial state $|\psi_{0}\rangle$, converging toward the ground state. In both categories, the choice of initial state is crucial. A well-chosen initial state can mitigate non-convex optimization challenges in variational methods and improve the overlap with the true ground state $|\langle\psi_{0}|\psi_{\text{gs}}\rangle|$, which directly impacts the efficiency of imaginary time evolution. The importance of the initial state is equally apparent in other algorithms like Krylov subspace methods \cite{dargel2011adaptive, cortes2022quantum} and quantum phase estimation (QPE) \cite{kitaev1995quantum, kitaev2002classical}.

In most scenarios, mean-field states \cite{chaikin1995principles, kadanoff2009more} are commonly chosen as initial guesses. These states not only simplify computations but also often capture essential aspects of a system's physics, making them useful for qualitative insights \cite{negele1982mean, lykos1963discussion}. However, mean-field approximations can be overly simplistic for general many-body systems, and they frequently fail to describe more complex phenomena, such as those found in topological systems \cite{kitaev2006anyons, chen2012symmetry} and multireference systems \cite{lyakh2012multireference, lischka2018multireference}.  Therefore, alternative choices of quantum states with both mathematical simplicity and physical expressivity are highly demanded.

During the development of quantum computing \cite{nielsen2001quantum,gottesman1998theory} in recent decades, a new type of states called stabilizer states has gained significant attention. From one perspective, it is the set of quantum states ``stabilized'' by a maximum number of Pauli operators, which implies a polynomial-sized classical description \cite{gottesman1998heisenberg,aaronson2004improved}. From another perspective, it is the set of quantum states reachable solely by Clifford operations, i.e., the combination of CNOT, Hadamard, and phase gates, which are the ``easy-to-implement'' gates in the fault-tolerant quantum computation (FTQC) era \cite{campbell2017roads,howard2017application,chamberland2019fault}. Compared with product states, stabilizer states are able to not only capture long-range area-law entanglement that is significant for understanding topological order and symmetry-protected topological states \cite{Zeng2019_z}, but also support volume-law entanglement \cite{perez2006matrix}, which is a feature lacking in other ansatzes, such as matrix product states (MPS) \cite{fattal2004entanglement}. Thanks to these features, Clifford operations and stabilizer states are utilized as important tools in the explorations of quantum information \cite{Webb2016_z,Huang2020b_z}, quantum dynamics \cite{Nahum2017_z,VonKeyserlingk2018a_z,Nahum2018a_z}, quantum error correction \cite{gottesman1997stabilizer,fowler2012surface}, topological quantum computing \cite{Nayak2008_z}, quantum circuit simulation \cite{bravyi2019simulation,bravyi2016improved,beguvsic2023simulating}, and quantum-classical hybrid algorithms \cite{cheng2022clifford,Zhang2021d_z,sun2024toward,mishmash2023hierarchical,schleich2023partitioning}. 

With the polynomial-sized classical description and efficient implementation on quantum computers, stabilizer states naturally become a promising candidate for quantum initial states, facilitating the construction of advanced algorithms or quantum state ansatzes.
Furthermore, one can even take advantage of both stabilizer states and mean-field states by moving to the Heisenberg picture and treating one of them as a Hamiltonian transformation (i.e. Clifford transformation or basis rotation).
In fact, stabilizer states have been used to enhance the power of VQE \cite{cheng2022clifford,CAFQA,ising_stab_gs}, DMRG \cite{qian2024augmenting}, tensor network \cite{masot2024stabilizer}, and quantum Monte Carlo \cite{jeevanesan2024quantum}.
However, the challenge lies in the absence of scalable and general algorithms for identifying the appropriate stabilizer initial state, typically the one with the minimum energy for a given Hamiltonian, which is defined as the stabilizer ground state in this work. Although several optimization-based methods have been proposed \cite{CAFQA,mitarai2022quadratic}, they are not guaranteed to reach the real minimum and are not scalable to large systems due to the intrinsic difficulty of discrete optimization and the $O(2^{(n+1)(n+2)/2})$ scaling of the number of $n$-qubit stabilizer states \cite{number_stab_groups}.

In this work, we provide a series of algorithms to identify the stabilizer ground state of different types of Hamiltonians. We start by theoretically establishing the equivalence between the stabilizer ground states and the closed maximum-commuting Pauli subsets (CMCS) for general Pauli Hamiltonians. We further present an exact and linear-scaled algorithm to find the stabilizer ground states of 1D local Hamiltonians. For properly defined sparse Hamiltonians, this exact 1D local algorithm is proved to be computationally efficient with a scaling of $O(n\exp(Ck\log k))$ with some constant $C$, where $n$ is the number of qubits and $k$ represents the locality. Additionally, we prove that stabilizer ground states of 1D local Hamiltonians can be prepared on quantum computers with fewer operations than general stabilizer states. Furthermore, we present that both the equivalence formalism for general Hamiltonians and the linear-scaled algorithm for 1D local Hamiltonians can be extended to infinite periodic systems. By numerical benchmarking on different systems with classical and quantum algorithms, we reveal that stabilizer ground states are promising tools for both qualitative understanding of quantum systems and serving as the initial states of advanced classical or quantum algorithms for quantum many-body ground state problems. We envision stabilizer ground states and the corresponding algorithms as a pivotal foundation for a wide range of interesting applications, and highlight the collective power of the theories, algorithms, and applications to advance the field of quantum physics. 

This paper is organized as follows: We first introduce the notations and mathematical backgrounds of stabilizer states in Sec~\ref{sec:background}. The equivalence between the stabilizer ground state and CMCS for general Hamiltonians is derived in Sec.~\ref{sec:general}, and the exact linear-scaled algorithm for the stabilizer ground states of 1D local Hamiltonians is further presented in Sec.~\ref{sec:1D}. Sec.~\ref{sec:periodic} extends both the equivalence formalism for general Hamiltonians and the algorithm for 1D local Hamiltonians to infinite periodic systems. In Sec.~\ref{sec:applications}, we discuss the advantages and potential applications of stabilizer ground states to both classical and quantum algorithms, along with a comparison with other approximated ground states. 
Secs.~\ref{sec:cost} and \ref{sec:comparison} benchmark the exact 1D local algorithm on example Hamiltonians by numerically verifying the computational scaling and comparing the performances with numerically optimized stabilizer ground states, respectively.
In Sec.~\ref{sec:phase}, we demonstrate the ability of stabilizer ground states to qualitatively describe topological systems.
The power of stabilizer ground states as the cornerstone of advanced classical or quantum algorithms is shown in Sec.~\ref{sec:extended} and Sec.~\ref{sec:vqe}, respectively.
Finally, we draw conclusions and outline future directions for the development and applications of stabilizer ground states in Sec.~\ref{sec:conclusion}.

\section{Theory} \label{sec:theory}
\subsection{Notations and mathematical background of stabilizer groups} \label{sec:background}

We first revisit the definitions and a few frequently used properties of Pauli operators and stabilizer groups \cite{nielsen2001quantum,gottesman1997stabilizer}.

Let $\mathcal{P}_{n}=\pm\{I,X,Y,Z\}^{\otimes n}$ represent the set of Hermitian $n$-qubit Pauli operators.
It is important to clarify that $\mathcal{P}_{n}$ itself is not a group since $\mathcal{P}_{n}$ does not include anti-Hermitian operators.
For any two elements $P_{i},P_{j}\in\mathcal{\mathcal{P}}_{n}$, $P_{i}$ and $P_{j}$ either commute or anticommute. 
A Pauli operator $Q$ commutes with a set of Pauli operators $\boldsymbol{P}=\{P_{i}\}$, denoted as $[Q,\boldsymbol{P}]=0$, if $[Q,P]=0$ for each $P\in\boldsymbol{P}$. 
We denote it as $[Q,\boldsymbol{P}]\neq 0$ if $Q$ anticommutes with any $P\in\boldsymbol{P}$.

A stabilizer group $\boldsymbol{S}$ is a subset of $\mathcal{P}_{n}$ that forms a group and satisfies $-I\notin\boldsymbol{S}$.
Any two elements $P,Q$ in $\boldsymbol{S}$ commute with each other, otherwise $PQPQ=-I$ violates the definition.
$\langle\boldsymbol{P}\rangle=\langle P_{1},...,P_{l}\rangle=\{\prod_{Q\in\boldsymbol{Q}}Q |\boldsymbol{Q}\subseteq\boldsymbol{P}\}$ 
is the stabilizer group generated by a set of Pauli operators $\boldsymbol{P}=\{P_{i=1}^{l}\}$, if $\langle\boldsymbol{P}\rangle$ satisfies the definition of stabilizer group. We say $\boldsymbol{P}$ is a set of generators of the stabilizer group \textbf{$\boldsymbol{S}$} when $\boldsymbol{S}=\langle\boldsymbol{P}\rangle$. 
An $n$-qubit stabilizer group $\boldsymbol{S}$ has at most $n$ independent generators $\boldsymbol{g}=\{g_i\}$ and $2^{n}$ elements. If $|\boldsymbol{g}|=n$, \textbf{$\boldsymbol{S}$ }is a full stabilizer group and we have either $P\in\boldsymbol{S}$ or $-P\in\boldsymbol{S}$ (denoted as $P\in\pm\boldsymbol{S}$) for any $[P,\boldsymbol{S}]=0$, $P\in\mathcal{P}_{n}$. 

We say state $|\psi\rangle$ is stabilized by $P\in\mathcal{P}_{n}$ if $P|\psi\rangle=|\psi\rangle$.
We define the stabilizer group of a given $|\psi\rangle$ as
\begin{equation}
\text{Stab}(|\psi\rangle)=\{P \in \mathcal{P}_{n} |\, P|\psi\rangle=|\psi\rangle\},
\end{equation}
and $|\psi\rangle$ is a stabilizer state when $\text{Stab}(|\psi\rangle)$ is a full stabilizer group. The mapping $|\psi\rangle\rightarrow\text{Stab}(|\psi\rangle)$ from stabilizer states to full stabilizer groups is a one-to-one correspondence.

\subsection{Stabilizer ground states of general Hamiltonians} \label{sec:general}

In this section, we establish the equivalence between the stabilizer ground state and the closed maximally-commuting Pauli subset.
Furthermore, we show that, for properly defined sparse Hamiltonian (which includes almost all common Hamiltonians), such equivalence implies a much cheaper algorithm to get the stabilizer state compared with the brute-force approach.
We first define the stabilizer ground state of Hamiltonian $H$:

\begin{defn} (Stabilizer ground state)
The \textbf{stabilizer ground state} of a given Hamiltonian $H$ is the stabilizer
state $\vert \psi\rangle$ with the lowest energy expectation $\langle\psi|H|\psi\rangle$.
\end{defn}
The number of $n$-qubit stabilizer states is $\mathcal{S}(n)=2^{n}\prod_{i=1}^{n}(2^{i}+1)\sim2^{\frac{1}{2}(n+1)(n+2)}$, thus looping over all the stabilizer states is infeasible for large $n$ \cite{number_stab_groups}. 
To find the stabilizer ground state, ~\thref{expec} is first presented to determine the expectation value of a Pauli operator in a stabilizer state:
\begin{lem}
\thlabel{expec} (Expectation values of stabilizer states)
For any $n$-qubit stabilizer state $|\psi\rangle$ and Pauli operator $P\in\mathcal{P}_n$, if $P\notin\pm\text{Stab}(|\psi\rangle)$, then $\langle\psi|P|\psi\rangle=0$
\end{lem}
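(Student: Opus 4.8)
The plan is to split into two cases according to whether $P$ commutes with the entire stabilizer group $\boldsymbol{S}=\text{Stab}(|\psi\rangle)$ or anticommutes with some element of it. Since $\boldsymbol{S}$ is a full stabilizer group, the hypothesis $P\notin\pm\boldsymbol{S}$ combined with the property quoted in Sec.~\ref{sec:background} (for a full stabilizer group, $[P,\boldsymbol{S}]=0$ forces $P\in\pm\boldsymbol{S}$) already rules out the possibility that $P$ commutes with all of $\boldsymbol{S}$. Hence there must exist a generator $g\in\boldsymbol{S}$ — equivalently some element of $\boldsymbol{S}$ — with $\{g,P\}=gP+Pg=0$.

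With such a $g$ in hand, the key step is the standard anticommutator trick. Since $g|\psi\rangle=|\psi\rangle$, I would write
\begin{equation}
\langle\psi|P|\psi\rangle=\langle\psi|Pg|\psi\rangle=-\langle\psi|gP|\psi\rangle=-\langle g\psi|P|\psi\rangle=-\langle\psi|P|\psi\rangle,
\end{equation}
using Hermiticity of $g$ (so $g^\dagger=g$ and $\langle\psi|g = (g|\psi\rangle)^\dagger = \langle\psi|$) in the last step. This forces $\langle\psi|P|\psi\rangle=0$, which is the claim.

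The only real content beyond this one-line computation is justifying the existence of the anticommuting element $g$, i.e.\ correctly invoking the fullness of $\text{Stab}(|\psi\rangle)$ — this is where the hypothesis "$|\psi\rangle$ is a stabilizer state'' (not merely stabilized by some partial group) is essential, and it is the step I would state most carefully. Everything else is routine: Hermiticity of Pauli operators, the group law, and the eigenvalue equation $g|\psi\rangle=|\psi\rangle$. I do not anticipate any genuine obstacle; the proof is short, and the main pitfall to avoid is forgetting to handle (and dismiss) the commuting case before applying the anticommutator identity.
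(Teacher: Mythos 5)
Your proof is correct and follows essentially the same route as the paper's: produce an element of the full stabilizer group that anticommutes with $P$ (you justify its existence slightly more explicitly via the contrapositive of the fullness property), then apply the standard anticommutator trick $\langle\psi|P|\psi\rangle=\langle\psi|Pg|\psi\rangle=-\langle\psi|gP|\psi\rangle=-\langle\psi|P|\psi\rangle$. No gaps.
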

\begin{proof}
Since $\text{Stab}(|\psi\rangle)$ is a full stabilizer group, there exists $Q\in\text{Stab}(|\psi\rangle)$ such that $\{P,Q\}=0$. 
Thus
\begin{equation}
\langle\psi|P|\psi\rangle=\langle\psi|PQ|\psi\rangle=-\langle\psi|QP|\psi\rangle=-\langle\psi|P|\psi\rangle.
\end{equation}
Therefore, $\langle\psi|P|\psi\rangle=0$.
\end{proof}
We further extend the concept of energy expectation associated with a stabilizer state to a (not necessarily full) stabilizer group:
\begin{defn} (Stabilizer group energies)
The energy of a stabilizer group $\boldsymbol{S}$ for a given Pauli operator $P$ or a Pauli Hamiltonian $H=\sum_{P\in\boldsymbol{P}}w_{P}P$ is defined by
\begin{equation}
\begin{aligned}
E_{\text{stab}}(P,\boldsymbol{S}) & =\begin{cases}
\pm1 & P\in\pm\boldsymbol{S}\\
0 & \text{otherwise}
\end{cases},\\
E_{\text{stab}}(H,\boldsymbol{S}) & =\sum_{P\in\boldsymbol{P}}w_{P}E_{\text{stab}}(P,\boldsymbol{S}).
\end{aligned}
\end{equation}
\end{defn}
The relationship of stabilizer state energies and stabilizer group energies can be given by:
\begin{cor}
\thlabel{stab_energy} (Stabilizer state energies are stabilizer group energies)
We denote $\tilde{\boldsymbol{P}}=\pm \boldsymbol{P}=\{\pm P|P\in\boldsymbol{P}\}$ for a set of Pauli operators $\boldsymbol{P}$. 
For a Hamiltonian $H=\sum_{P\in\boldsymbol{P}}w_{P}P$ and a stabilizer state $|\psi\rangle$, let stabilizer group $\boldsymbol{S}=\langle\text{Stab}(|\psi\rangle)\cap\tilde{\boldsymbol{P}}\rangle$,
we have $\langle\psi|H|\psi\rangle=E_{\text{stab}}(H,\boldsymbol{S})$.
\end{cor}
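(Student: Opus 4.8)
The plan is to decompose the energy expectation $\langle\psi|H|\psi\rangle$ linearly over the terms of $H$ and evaluate each term using \thref{expec}. Writing $H=\sum_{P\in\boldsymbol{P}}w_{P}P$, linearity of the inner product gives $\langle\psi|H|\psi\rangle=\sum_{P\in\boldsymbol{P}}w_{P}\langle\psi|P|\psi\rangle$, so it suffices to show that $\langle\psi|P|\psi\rangle=E_{\text{stab}}(P,\boldsymbol{S})$ for every $P\in\boldsymbol{P}$, where $\boldsymbol{S}=\langle\text{Stab}(|\psi\rangle)\cap\tilde{\boldsymbol{P}}\rangle$.

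First I would split into cases according to whether $P\in\pm\text{Stab}(|\psi\rangle)$. If $P\notin\pm\text{Stab}(|\psi\rangle)$, then \thref{expec} immediately gives $\langle\psi|P|\psi\rangle=0$; I then need $E_{\text{stab}}(P,\boldsymbol{S})=0$ as well, i.e. $P\notin\pm\boldsymbol{S}$. This follows because $\boldsymbol{S}\subseteq\text{Stab}(|\psi\rangle)$ (it is generated by a subset of $\text{Stab}(|\psi\rangle)$, which is itself a group), so $\pm\boldsymbol{S}\subseteq\pm\text{Stab}(|\psi\rangle)$, and hence $P\notin\pm\text{Stab}(|\psi\rangle)$ forces $P\notin\pm\boldsymbol{S}$. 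In the complementary case $P\in\pm\text{Stab}(|\psi\rangle)$: if $P\in\text{Stab}(|\psi\rangle)$ then $P|\psi\rangle=|\psi\rangle$ so $\langle\psi|P|\psi\rangle=1$, and if $-P\in\text{Stab}(|\psi\rangle)$ then $\langle\psi|P|\psi\rangle=-1$. I must match this against $E_{\text{stab}}(P,\boldsymbol{S})$, which requires showing that in this case $P\in\pm\boldsymbol{S}$ with the correct sign.

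The main obstacle is precisely this last point: verifying that $P\in\pm\text{Stab}(|\psi\rangle)$ actually implies $P\in\pm\boldsymbol{S}$, given that $\boldsymbol{S}$ is only the subgroup generated by $\text{Stab}(|\psi\rangle)\cap\tilde{\boldsymbol{P}}$ rather than all of $\text{Stab}(|\psi\rangle)$. The key observation is that $P\in\boldsymbol{P}$, so $P\in\tilde{\boldsymbol{P}}$ and $-P\in\tilde{\boldsymbol{P}}$ (since $\tilde{\boldsymbol{P}}=\pm\boldsymbol{P}$ is closed under sign flip); therefore if $P\in\text{Stab}(|\psi\rangle)$ then $P\in\text{Stab}(|\psi\rangle)\cap\tilde{\boldsymbol{P}}\subseteq\boldsymbol{S}$, and symmetrically if $-P\in\text{Stab}(|\psi\rangle)$ then $-P\in\boldsymbol{S}$. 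The sign is then consistent because $\boldsymbol{S}$, being a stabilizer group, cannot contain both $P$ and $-P$ (that would give $-I\in\boldsymbol{S}$), so $E_{\text{stab}}(P,\boldsymbol{S})$ is well-defined and equals $+1$ exactly when $P\in\boldsymbol{S}$ and $-1$ exactly when $-P\in\boldsymbol{S}$, matching $\langle\psi|P|\psi\rangle$ in each subcase.

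Putting the cases together term by term and resumming with the weights $w_P$ yields $\langle\psi|H|\psi\rangle=\sum_{P\in\boldsymbol{P}}w_{P}E_{\text{stab}}(P,\boldsymbol{S})=E_{\text{stab}}(H,\boldsymbol{S})$, which is the claim. I would also note in passing that $\boldsymbol{S}$ is indeed a legitimate stabilizer group: it is generated by commuting Hermitian Pauli operators (any two elements of $\text{Stab}(|\psi\rangle)$ commute) and does not contain $-I$ (as a subset of $\text{Stab}(|\psi\rangle)$), so all the group-theoretic statements invoked above are valid and $E_{\text{stab}}(\cdot,\boldsymbol{S})$ is well-defined.
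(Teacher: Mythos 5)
Your proposal is correct and follows exactly the route the paper intends: the corollary is stated as an immediate consequence of \thref{expec}, with terms $P\notin\pm\text{Stab}(|\psi\rangle)$ contributing zero and terms $P\in\pm\text{Stab}(|\psi\rangle)$ lying in $\pm\boldsymbol{S}$ (since $\pm P\in\tilde{\boldsymbol{P}}$) and contributing $\pm 1$ with the matching sign. Your explicit verification that $\boldsymbol{S}\subseteq\text{Stab}(|\psi\rangle)$ and that the sign is unambiguous simply spells out what the paper leaves implicit.
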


\thref{stab_energy} implies that, the energy of a stabilizer state $|\psi\rangle$ depends only on $\text{Stab}(|\psi\rangle)\cap\tilde{\boldsymbol{P}}\subseteq\tilde{\boldsymbol{P}}$. Since the number of subsets of $\tilde{\boldsymbol{P}}$ is much less than $\mathcal{S}(n)\sim2^{\frac{1}{2}(n+1)(n+2)}$ when $\tilde{\boldsymbol{P}}$ is sparse (see \thref{def:sparse} for rigorous definition), it gives a better way to determine the stabilizer ground state. Rigorously, we introduce the concept of closed commuting subsets as follows:
\begin{defn}
\thlabel{CCS} (CCS) We define the \textbf{closed commuting subsets} (CCS) induced by $\boldsymbol{P}$ (or $H$) as 
\begin{equation}
\mathscr{S}(\boldsymbol{P})=\{\boldsymbol{Q}\subseteq\tilde{\boldsymbol{P}}|\boldsymbol{Q}=\langle\boldsymbol{Q}\rangle\cap\tilde{\boldsymbol{P}},-I\notin\langle\boldsymbol{Q}\rangle\}\label{eq:CCS}
\end{equation}
We note that $-I\notin\langle\boldsymbol{Q}\rangle$ implicitly indicates that $\langle\boldsymbol{Q}\rangle$ is a stabilizer group.
\end{defn}
Physically $\boldsymbol{Q}=\langle\boldsymbol{Q}\rangle\cap\tilde{\boldsymbol{P}}$ is saying that, $\langle\boldsymbol{Q}\rangle$ is generated by elements in $\tilde{\boldsymbol{P}}$. The name ``closed'' means that $\boldsymbol{Q}$ is closed under group mutiplication operations within the range of $\tilde{\boldsymbol{P}}$. Thus $\{\langle\boldsymbol{Q}\rangle|\boldsymbol{Q}\in\mathscr{S}(\boldsymbol{P})\}$ is the full set of stabilizer groups generated by elements in $\tilde{\boldsymbol{P}}$, and $\mathscr{S}(\boldsymbol{P})$ is an equivalent approach to represent with the generators in $\tilde{\boldsymbol{P}}$ instead. (knowing either $\boldsymbol{Q}$ or $\langle\boldsymbol{Q}\rangle$ immediately gives the other) As an example, we consider $\boldsymbol{P}=\{+Z_{1},+Z_{2},-X_{1}X_{2}\}$, then $\tilde{\boldsymbol{P}}=\{\pm Z_{1},\pm Z_{2},\pm X_{1}X_{2}\}$. Then $\mathscr{S}(\boldsymbol{P})$ includes $\emptyset,\{sZ_{1}\},\{sZ_{2}\},\{s_{1}Z_{1},s_{2}Z_{2}\},\{sX_{1}X_{2}\}$, where each $s$ can be $\pm1$ independently. We additionally note that each CCS corresponds to a stabilizer subspace and can be viewed as a discrete analog of a \textit{noncontextual subspace} in the sense used in Contextual Subspace VQE \cite{kirby2021contextual}, where all operators in the subset mutually commute and can be jointly diagonalized.

We now present \thref{sparse_gs}, which states that the stabilizer ground state can be obtained by searching for $\boldsymbol{Q}\in\mathscr{S}(\boldsymbol{P})$ with the lowest $E_{\text{stab}}(H,\langle\boldsymbol{Q}\rangle)$. The proof is given in the Appendix \ref{appendix:proof_sparse}. 

\begin{thm}
\thlabel{sparse_gs} (CMCS gives the stabilizer ground state) Given a Hamiltonian $H=\sum_{P\in\boldsymbol{P}}w_{P}P$ , then
\begin{equation}
E_{\text{gs}}=\min_{\boldsymbol{Q}\in\mathscr{S}(\boldsymbol{P})}E_{\text{stab}}(H,\langle\boldsymbol{Q}\rangle)\label{eq:CMCS}
\end{equation}
 is the stabilizer ground state energy. Such $\boldsymbol{Q}$ minimizes $E_{\text{stab}}(H,\langle\boldsymbol{Q}\rangle)$ is named as the \textbf{closed maximally-commuting subset} (CMCS) of $\tilde{\boldsymbol{P}}$ (or $H$). Additionally, for any $\boldsymbol{S}=\langle\boldsymbol{Q}\rangle$, $\boldsymbol{Q}\in\mathscr{S}(\boldsymbol{P})$ such that $E_{\text{stab}}(H,\boldsymbol{S})=E_{\text{gs}}$, each stabilizer state $|\psi\rangle$ stabilized by $\boldsymbol{S}$ is a (degenerate) stabilizer ground state.
\end{thm}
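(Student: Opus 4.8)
The plan is to trap the stabilizer ground state energy between $E_{\min}$ from both directions, and then extract the ``additionally'' clause from the same mechanism. Two elementary consequences of \thref{expec} will be used throughout. First, for any stabilizer state $|\psi\rangle$ one has $\langle\psi|H|\psi\rangle = E_{\text{stab}}(H,\text{Stab}(|\psi\rangle))$, since $\langle\psi|P|\psi\rangle$ equals $+1$ or $-1$ according to whether $P$ or $-P$ lies in $\text{Stab}(|\psi\rangle)$ and equals $0$ otherwise. Second, for \emph{any} stabilizer group $\boldsymbol{S}'$ (not necessarily full), the set $\boldsymbol{S}'\cap\tilde{\boldsymbol{P}}$ belongs to $\mathscr{S}(\boldsymbol{P})$: indeed $\langle\boldsymbol{S}'\cap\tilde{\boldsymbol{P}}\rangle\subseteq\boldsymbol{S}'$ forces $\langle\boldsymbol{S}'\cap\tilde{\boldsymbol{P}}\rangle\cap\tilde{\boldsymbol{P}} = \boldsymbol{S}'\cap\tilde{\boldsymbol{P}}$, and $-I\notin\langle\boldsymbol{S}'\cap\tilde{\boldsymbol{P}}\rangle$; moreover $E_{\text{stab}}(H,\boldsymbol{S}') = E_{\text{stab}}(H,\langle\boldsymbol{S}'\cap\tilde{\boldsymbol{P}}\rangle)$ because only Paulis in $\tilde{\boldsymbol{P}}$ carry weight. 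Combining the two immediately gives the lower bound: every stabilizer state satisfies $\langle\psi|H|\psi\rangle = E_{\text{stab}}(H,\langle\text{Stab}(|\psi\rangle)\cap\tilde{\boldsymbol{P}}\rangle)\ge E_{\min}$, and in fact $E_{\text{stab}}(H,\boldsymbol{S}')\ge E_{\min}$ for every stabilizer group $\boldsymbol{S}'$.

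For the matching upper bound I would first isolate a sign-flip identity: if $\boldsymbol{T}$ is a stabilizer group and $R$ commutes with $\boldsymbol{T}$ with $R\notin\pm\boldsymbol{T}$, then $\langle\boldsymbol{T},R\rangle$ and $\langle\boldsymbol{T},-R\rangle$ are both stabilizer groups and $E_{\text{stab}}(H,\langle\boldsymbol{T},R\rangle)+E_{\text{stab}}(H,\langle\boldsymbol{T},-R\rangle)=2\,E_{\text{stab}}(H,\boldsymbol{T})$. This is a short computation: write the two groups as $\boldsymbol{T}\sqcup R\boldsymbol{T}$ and $\boldsymbol{T}\sqcup(-R)\boldsymbol{T}$; a Pauli in the ``new'' coset contributes $\pm1$ to one energy and $\mp1$ to the other while contributing $0$ to $E_{\text{stab}}(H,\boldsymbol{T})$, and Paulis in $\pm\boldsymbol{T}$ contribute identically to all three. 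The key lemma is then: \emph{if a stabilizer group $\boldsymbol{S}$ satisfies $E_{\text{stab}}(H,\boldsymbol{S})=E_{\min}$, so does every stabilizer group $\boldsymbol{S}'\supseteq\boldsymbol{S}$}. I would prove this by induction on the number of extra independent generators: pick $R\in\boldsymbol{S}'\setminus\boldsymbol{S}$ (so $R\notin\pm\boldsymbol{S}$ and $R$ commutes with $\boldsymbol{S}$), set $\boldsymbol{S}''=\langle\boldsymbol{S},R\rangle\subseteq\boldsymbol{S}'$; the identity gives $E_{\text{stab}}(H,\boldsymbol{S}'') = 2E_{\min}-E_{\text{stab}}(H,\langle\boldsymbol{S},-R\rangle)\le 2E_{\min}-E_{\min}=E_{\min}$, using the universal lower bound on $\langle\boldsymbol{S},-R\rangle$, and combined with $E_{\text{stab}}(H,\boldsymbol{S}'')\ge E_{\min}$ this is an equality, so the induction continues from $\boldsymbol{S}''$.

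To finish the main statement, take any minimizer $\boldsymbol{Q}^{*}\in\mathscr{S}(\boldsymbol{P})$, enlarge $\langle\boldsymbol{Q}^{*}\rangle$ to a full stabilizer group $\boldsymbol{S}_{\mathrm{full}}$ (always possible for a non-full group), and let $|\psi_{0}\rangle$ be the unique stabilizer state with $\text{Stab}(|\psi_{0}\rangle)=\boldsymbol{S}_{\mathrm{full}}$; by the lemma $\langle\psi_{0}|H|\psi_{0}\rangle = E_{\text{stab}}(H,\boldsymbol{S}_{\mathrm{full}}) = E_{\min}$, which with the lower bound shows $E_{\min}$ is exactly the stabilizer ground state energy. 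The ``additionally'' clause follows from the same lemma: if $\boldsymbol{S}=\langle\boldsymbol{Q}\rangle$ with $\boldsymbol{Q}\in\mathscr{S}(\boldsymbol{P})$ and $E_{\text{stab}}(H,\boldsymbol{S})=E_{\min}$, then any stabilizer state $|\psi\rangle$ stabilized by $\boldsymbol{S}$ has $\text{Stab}(|\psi\rangle)\supseteq\boldsymbol{S}$, hence $\langle\psi|H|\psi\rangle = E_{\text{stab}}(H,\text{Stab}(|\psi\rangle)) = E_{\min}$, so $|\psi\rangle$ is a (possibly degenerate) stabilizer ground state.

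The step I expect to be the real obstacle is the extension lemma, precisely because $E_{\text{stab}}(H,\cdot)$ is \emph{not} monotone under enlarging a stabilizer group: negative weights $w_{P}$ can in principle make a larger commuting set cheaper, so one cannot simply argue ``bigger commuting set, lower energy.'' The sign-flip identity together with the previously established universal lower bound is exactly what rigidly forces every enlargement of a minimizer to remain at $E_{\min}$. A few routine points also need care, namely that a non-full stabilizer group can always be enlarged by some $R\notin\pm\boldsymbol{T}$ commuting with it, the nonemptiness and finiteness of $\mathscr{S}(\boldsymbol{P})$ (so that $E_{\min}$ is attained), and the stabilizer-state/full-stabilizer-group bijection recalled in Sec.~\ref{sec:background}.
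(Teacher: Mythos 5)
Your proposal is correct and follows essentially the same route as the paper's proof: the heart in both cases is the averaging identity $E_{\text{stab}}(H,\langle\boldsymbol{T},R\rangle)+E_{\text{stab}}(H,\langle\boldsymbol{T},-R\rangle)=2E_{\text{stab}}(H,\boldsymbol{T})$ combined with minimality of $E_{\min}$ to force every one-generator enlargement of a minimizer to stay at $E_{\min}$, applied inductively. The only difference is bookkeeping — you enlarge all the way to the full group $\text{Stab}(|\psi\rangle)$ using the observation that $E_{\text{stab}}$ only depends on the intersection with $\tilde{\boldsymbol{P}}$, whereas the paper stops at $\boldsymbol{S}_{\psi}=\langle\text{Stab}(|\psi\rangle)\cap\tilde{\boldsymbol{P}}\rangle$ and invokes \thref{stab_energy} — which does not change the substance of the argument.
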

\thref{sparse_gs} suggests that the stabilizer ground state of a
Hamiltonian $H=\sum_{P\in\boldsymbol{P}}w_{P}P$ can be found by listing
all elements of $\mathscr{S}(\boldsymbol{P})$, and thus the computational cost is scaled with $|\mathscr{S}(\boldsymbol{P})|$. However, the
exact value of $|\mathscr{S}(\boldsymbol{P})|$ heavily depends
on the form of the Hamiltonian, e.g. the commutation/anticommutation
relations between the Pauli terms. We give a loose upper bound of $|\mathscr{S}(\boldsymbol{P})|$ as follows:
\begin{lem}
\thlabel{nS} (Upper bound of CCS) If $\boldsymbol{P}$ is defined on at most $n$ qubits,
then $|\mathscr{S}(\boldsymbol{P})|\leq(n+1)|\tilde{\boldsymbol{P}}|^{n}$.
\end{lem}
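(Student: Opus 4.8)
The plan is to exploit the fact that each $\boldsymbol{Q}\in\mathscr{S}(\boldsymbol{P})$ is uniquely recovered from the stabilizer group it generates: the defining condition $\boldsymbol{Q}=\langle\boldsymbol{Q}\rangle\cap\tilde{\boldsymbol{P}}$ says precisely that $\boldsymbol{Q}$ is determined by $\langle\boldsymbol{Q}\rangle$. Hence the map $\boldsymbol{Q}\mapsto\langle\boldsymbol{Q}\rangle$ is injective on $\mathscr{S}(\boldsymbol{P})$, and it suffices to bound the number of distinct stabilizer groups of the form $\langle\boldsymbol{Q}\rangle$ with $\boldsymbol{Q}\subseteq\tilde{\boldsymbol{P}}$ and $-I\notin\langle\boldsymbol{Q}\rangle$.

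Next I would use the structure of stabilizer groups: any stabilizer group is an elementary abelian $2$-group (its elements pairwise commute and square to $I$), hence naturally an $\mathbb{F}_{2}$-vector space of some dimension $r$, and as recalled in Sec.~\ref{sec:background} an $n$-qubit stabilizer group satisfies $r\le n$. Since $\boldsymbol{Q}\subseteq\tilde{\boldsymbol{P}}$ spans $\langle\boldsymbol{Q}\rangle$, the standard linear-algebra (matroid) fact that a spanning set contains a basis lets me extract from $\boldsymbol{Q}$ an independent subset of size exactly $r$ generating the same group. Thus every group counted in the previous step can be written as $\langle g_{1},\dots,g_{r}\rangle$ with each $g_{i}\in\tilde{\boldsymbol{P}}$ and $0\le r\le n$.

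The remaining count is routine: for a fixed $r$, the number of ordered tuples $(g_{1},\dots,g_{r})$ with entries in $\tilde{\boldsymbol{P}}$ is at most $|\tilde{\boldsymbol{P}}|^{r}$, so the number of distinct groups generated by exactly $r$ such elements is at most $|\tilde{\boldsymbol{P}}|^{r}\le|\tilde{\boldsymbol{P}}|^{n}$ (we may assume $\tilde{\boldsymbol{P}}\neq\emptyset$, as $\boldsymbol{P}=\emptyset$ gives $H=0$ and is trivial). Summing over the $n+1$ admissible values $r=0,1,\dots,n$ yields $|\mathscr{S}(\boldsymbol{P})|\le\sum_{r=0}^{n}|\tilde{\boldsymbol{P}}|^{r}\le(n+1)|\tilde{\boldsymbol{P}}|^{n}$, which is the claimed bound.

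The one step that warrants care — the main (mild) obstacle — is the basis-extraction argument: one must check that the $\pm$ sign structure does not obstruct choosing an independent generating subset of $\boldsymbol{Q}$. This goes through precisely because $-I\notin\langle\boldsymbol{Q}\rangle$ ensures $\langle\boldsymbol{Q}\rangle$ is a genuine group of order $2^{r}$ freely generated by any maximal independent subset, so the spanning-set-contains-a-basis fact applies verbatim; everything else is bookkeeping.
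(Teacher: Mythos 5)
Your proposal is correct and follows essentially the same route as the paper's proof: select at most $n$ independent generators of $\langle\boldsymbol{Q}\rangle$ from $\tilde{\boldsymbol{P}}$, bounding the count by $\sum_{r=0}^{n}|\tilde{\boldsymbol{P}}|^{r}\leq(n+1)|\tilde{\boldsymbol{P}}|^{n}$. You merely spell out two steps the paper leaves implicit — the injectivity of $\boldsymbol{Q}\mapsto\langle\boldsymbol{Q}\rangle$ via the condition $\boldsymbol{Q}=\langle\boldsymbol{Q}\rangle\cap\tilde{\boldsymbol{P}}$, and the spanning-set-contains-a-basis extraction in the $\mathbb{F}_{2}$-vector-space picture — which is a fine elaboration rather than a different argument.
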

\begin{proof}
For any $\boldsymbol{Q}\in\mathscr{S}(\boldsymbol{P})$, $\langle\boldsymbol{Q}\rangle$ is constructed by at most $n$ independent generators in $\tilde{\boldsymbol{P}}$.
We simply select each generator one by one, each with at most $|\tilde{\boldsymbol{P}}|$ choices. 
Thus, we have $|\mathscr{S}(\boldsymbol{P})|\leq\sum_{l=0}^{n}|\tilde{\boldsymbol{P}}|^{l}\leq(n+1)|\tilde{\boldsymbol{P}}|^{n}$.
\end{proof}
In fact, we will see that if $|\boldsymbol{P}|$ scales polynomial with $n$, $|\mathscr{S}(\boldsymbol{P})|$ will have a slower growing rate than the number of $n$-qubit stabilizer states $\mathcal{S}(n)$.
Therefore we define the sparse Hamiltonians as follows:
\begin{defn} \thlabel{def:sparse} (Sparse Hamiltonians) 
A $n$-qubit Pauli Hamiltonian $H=\sum_{P\in\boldsymbol{P}}w_{P}P$
is \textbf{sparse} if $|\boldsymbol{P}|\sim O(\text{poly}(n))$.
\end{defn}
We note that almost all the common Hamiltonians are sparse Hamiltonians.
Now we estimate $|\mathscr{S}(\boldsymbol{P})|$ of sparse Hamiltonians:
\begin{cor} \thlabel{upper_bound_CCS} (Upper bound of CCS for sparse Hamiltonians) 
For sparse Hamiltonians $H=\sum_{P\in\boldsymbol{P}}w_{P}P$, $|\mathscr{S}(\boldsymbol{P})|\sim\exp(Cn\log n)$
for some constant $C$.
\end{cor}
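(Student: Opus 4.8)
The plan is to obtain this estimate directly from the loose bound in \thref{nS} together with the definition of sparsity, so that the corollary is essentially a one-line consequence. First I would note that $\tilde{\boldsymbol{P}}=\pm\boldsymbol{P}$ satisfies $|\tilde{\boldsymbol{P}}|\le 2|\boldsymbol{P}|$, and that sparsity of $H$ means there are constants $c_1>0$ and $d\ge 1$ with $|\boldsymbol{P}|\le c_1 n^{d}$ for all sufficiently large $n$; hence $|\tilde{\boldsymbol{P}}|\le 2c_1 n^{d}$.

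Next I would substitute this into \thref{nS}, giving $|\mathscr{S}(\boldsymbol{P})|\le (n+1)\,|\tilde{\boldsymbol{P}}|^{n}\le (n+1)\,(2c_1 n^{d})^{n}$, and take logarithms:
\begin{equation}
\log|\mathscr{S}(\boldsymbol{P})|\le \log(n+1)+n\log(2c_1)+d\,n\log n .
\end{equation}
For large $n$ the term $d\,n\log n$ dominates, so there is a constant $C$ (any $C>d$, e.g. $C=d+1$, works for $n$ large enough) with $\log|\mathscr{S}(\boldsymbol{P})|\le C\,n\log n$, i.e. $|\mathscr{S}(\boldsymbol{P})|\le \exp(C\,n\log n)$. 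Comparing with $\mathcal{S}(n)\sim 2^{\frac12(n+1)(n+2)}=\exp(\Theta(n^{2}))$ then makes precise the earlier remark that $|\mathscr{S}(\boldsymbol{P})|$ grows strictly slower than the total number of $n$-qubit stabilizer states.

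There is essentially no hard step here: the corollary is a bookkeeping consequence of \thref{nS}, and the only points deserving care are (i) reading the symbol ``$\sim$'' in the statement as an upper bound of the indicated order rather than a two-sided asymptotic — we are not asserting a matching lower bound, which would require exhibiting a family of sparse Hamiltonians whose $\mathscr{S}(\boldsymbol{P})$ actually attains $\exp(\Omega(n\log n))$ — and (ii) absorbing the polynomial prefactor $(n+1)$ and the constant base $2c_1$ into the exponent, which is exactly what passing to the logarithm accomplishes. I expect this last repackaging to be the only place a reader might pause, and it is entirely routine.
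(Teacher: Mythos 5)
Your proposal is correct and follows exactly the route the paper intends: the corollary is stated as an immediate consequence of \thref{nS} combined with $|\boldsymbol{P}|\sim O(\mathrm{poly}(n))$, which yields $(n+1)(\mathrm{poly}(n))^{n}\leq\exp(Cn\log n)$ after taking logarithms, precisely as you do. Your caveat about reading ``$\sim$'' as an upper bound rather than a two-sided asymptotic is a fair reading of how the paper uses the estimate.
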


For sparse Hamiltonians, although $|\mathscr{S}(\boldsymbol{P})|$ still increases exponentially with $n$, it is much smaller than the number of $n$-qubit stabilizer states $\mathcal{S}(n)\sim2^{\frac{1}{2}(n+1)(n+2)}$. 

\subsection{Stabilizer ground states of 1D local Hamiltonians \label{sec:1D}}

In this section, we present an algorithm to determine the stabilizer ground state of a given $n$-qubit 1D $k$-local Hamiltonian with a $O(n\exp (Ck\log k))$ computational scaling. Note that such scaling reduces to the $\exp(Cn\log n)$ scaling in \thref{upper_bound_CCS} in the limit of $k\rightarrow n$ up to a (possibly) different factor $C$. We refer to this algorithm as the \textbf{exact 1D local algorithm}. To illustrate the motivation of this algorithm, we first consider the constrained 1D classical Hamiltonians in Sec. \ref{sec:classical_gs} and introduce a state machine algorithm with a linear scaling to obtain the exact ground state. Then we build an exact mapping from the 1D local stabilizer ground state problem to the constrained 1D classical ground state problem in Sec. \ref{sec:mapping}. We formally present the exact 1D local algorithm for stabilizer ground states in Secs. \ref{sec:SM} and \ref{sec:F}, and its computational complexity is discussed in Sec. \ref{sec:complexity}.

\subsubsection{A classical analogy: State machine algorithm for the constrained 1D classical ground state problem \label{sec:classical_gs}}

We first consider a general 1D $n$-site, $k$-local classical Hamiltonian without constraints on a chain $\{\sigma_1,...,\sigma_n\}$ as:
\begin{equation}
H=\sum_{m=1}^{n}h_{m}(\sigma_{m},\sigma_{m+1},...,\sigma_{m+k-1}),\label{eq:classical_H}
\end{equation}
where each $\sigma_{i}$ is a discrete variable which can take $d$ distinct values, and $h_{i}$ is an arbitrary discrete function. Here $k$-local means that each term acts on at most $k$ continuous sites. The ground state of such Hamiltonian can be exactly determined by the following state machine algorithm with a computational complexity of $O(nd^{k+1})$. We introduce $s_{m}=(\sigma_{m},\sigma_{m+1},...,\sigma_{m+k-1})$, which combines $k$ degree of freedoms (DOFs) as a single DOF with $d^{k}$ possible values. The Hamiltonian can be written as the following $1$-local form
\begin{equation}
H=\sum_{m=1}^{n}h_{m}(s_{m}),\label{eq:1local_H}
\end{equation}
with the price that different $s_{i}$ are dependent. In other words, the coupling is transferred from the Hamiltonian to the valid state space, and we will later see that such change of perspective is necessary in the quantum case, as the definition of independent local state is impossible. A key step in this algorithm is that, for a fixed $s_{m}$, the possible values of $\boldsymbol{s}_{\leq m}=\{s_{1},...,s_{m}\}$ are now decoupled from $\boldsymbol{s}_{>m}=\{s_{m+1},...,s_{n}\}$, i.e., 
\begin{equation}
\{\boldsymbol{s}|s_{m}\}=\{\boldsymbol{s}_{\le m}|s_{m}\}\otimes\{\boldsymbol{s}_{>m}|s_{m}\}\label{eq:decoupling}
\end{equation}
where $\boldsymbol{s}=\{s_{1},...,s_{n}\}$. This independence is because there is no overlap between $\boldsymbol{s}_{\leq m}$ and $\boldsymbol{s}_{>m}$ over $\{\sigma_{i}\}$ beyond $s_{m}=(\sigma_{m},\sigma_{m+1},...,\sigma_{m+k-1})$. We define the energy on $\boldsymbol{s}_{\leq m}$ as
\begin{equation}
E(\boldsymbol{s}_{\leq m})=\sum_{i=1}^{m}h_{i}(s_{i})\label{eq:H_1local}
\end{equation}
and the ``conditioned ground state energy''  $E_{\text{gs}}^{\leq m}(s_{m})$ as
\begin{align}
E_{\text{gs}}^{\leq m}(s_{m}) & =\min_{\boldsymbol{s}_{\leq m}^{\prime}\in\{\boldsymbol{s}_{\leq m}|s_{m}\}}E(\boldsymbol{s}_{\leq m}^\prime).\label{eq:def_Egs_s}
\end{align}

The sequence $\{s_{1}, \ldots, s_{n}\}$ in Eq.~\eqref{eq:decoupling} admits a natural physical interpretation as a non-probabilistic analogue of a Markov chain, where the past ($\boldsymbol{s}_{\leq m}$) and future ($\boldsymbol{s}_{>m}$) are conditionally independent given the present state ($s_{m}$).
In addition, $E_{\text{gs}}^{\leq m}(s_{m})$, as a function defined on the ``non-probabilistic Markov chain'' $s_{m}$, has the property that, for given values of $E_{\text{gs}}^{\leq m}(s_{m})$ for all $s_{m}$, $E_{\text{gs}}^{\leq m+1}(s_{m+1})$ for each $s_{m+1}$ can be determined within $O(d^{k+1})$ time.
This conclusion is proved by first noticing that
\begin{equation}
\begin{aligned}\{\boldsymbol{s}_{\leq m}|s_{m+1}\} & =\cup_{s_{m}}\{\boldsymbol{s}_{\leq m}|s_{m+1},s_{m}\}\\
 & =\cup_{s_{m}\in\{s_{m}^{\prime}|s_{m+1}\}}\{\boldsymbol{s}_{\leq m}|s_{m}\}\\
 & :=\cup_{s_{m}\in F_{\text{back}}(s_{m+1})}\{\boldsymbol{s}_{\leq m}|s_{m}\}.
\end{aligned}
\label{eq:next_index}
\end{equation}
In the first line, all possible values of $s_{m}$ are looped over to determine $\boldsymbol{s}_{\leq m}$. The shared sites $\sigma_{m+1},...,\sigma_{m+k-1}$ in both $s_{m+1}$ and $s_{m}$ must have the same values, and this constraint is denoted as $s_{m}\in\{s_{m}^{\prime}|s_{m+1}\}$ in the second line. If this constraint holds, we have $\{\boldsymbol{s}_{\leq m}|s_{m+1},s_{m}\}=\{\boldsymbol{s}_{\leq m}|s_{m}\}$ due to the decoupling between $\boldsymbol{s}_{\leq m}$ and $s_{m+1}$ for given $s_{m}$ (see Eq. \eqref{eq:decoupling}). 
We also introduce a shorthand $F_{\text{back}}(s_{m+1})$ for $\{s_{m}^{\prime}|s_{m+1}\}$ in the last line. For instance, the $2$-local spin Hamiltonian, i.e., $k=2$, $s_{m}=(\sigma_{m},\sigma_{m+1})$, and $\sigma_{i}\in\{\uparrow,\downarrow\}$, has $F_{\text{back}}(\sigma_{m+1},\sigma_{m+2})=\{(\sigma_{m},\sigma_{m+1})|\sigma_{m}\in\{\uparrow,\downarrow\}\}$. From Eq. \eqref{eq:next_index}, the recurrence relation between $E_{\text{gs}}^{\leq m}(s_{m})$ and $E_{\text{gs}}^{\leq m+1}(s_{m+1})$ can be derived as
\begin{equation}
\begin{aligned}E_{\text{gs}}^{\leq m+1}(s_{m+1}) & =\min_{\boldsymbol{s}_{\leq m+1}^{\prime}\in\{\boldsymbol{s}_{\leq m+1}|s_{m+1}\}}E(\boldsymbol{s}_{\leq m+1}^{\prime})\\
 & =h_{m+1}(s_{m+1})+\min_{\boldsymbol{s}_{\leq m}^{\prime}\in\{\boldsymbol{s}_{\leq m}|s_{m+1}\}}E(\boldsymbol{s}_{\leq m}^{\prime})\\
 & =h_{m+1}(s_{m+1})+\min_{s_{m}\in F_{\text{back}}(s_{m+1})}E_{\text{gs}}^{\leq m}(s_{m}).
\end{aligned}
\label{eq:classical_min}
\end{equation}
For the above 2-local spin Hamiltonian example, Eq. \eqref{eq:classical_min} is equivalent to
\begin{equation}
E_{\text{gs}}^{\leq m+1}(\sigma_{m+1},\sigma_{m+2})=h_{m+1}(\sigma_{m+1},\sigma_{m+2})+\min_{\sigma_{m}\in\{\uparrow,\downarrow\}}E_{\text{gs}}^{\leq m}(\sigma_{m},\sigma_{m+1}).
\end{equation}
Thus, to obtain the ground state energy, we can sequentially build tables $T_{m}$ for $m=1,2,...,n$ by Eq. \eqref{eq:classical_min} to record the values of $E_{\text{gs}}^{\leq m}(s_{m})$ over all $s_m$, with
the initial table $T_1$ as  $E_{\text{gs}}^{\leq 1}(s_{1})=h_{1}(s_{1})$. The final ground state energy is expressed as 
\begin{equation}
E_{\text{gs}}=\min_{s_{n}}E_{\text{gs}}^{\leq n}(s_{n}).\label{eq:classical_final_min}
\end{equation}
Since $F_{\text{back}}(s_{m+1})$ only has $d$ elements (i.e. only $\sigma_{m}$ can choose different values freely), and $s_{m}$ has $d^{k}$ choices, one can build $T_{m+1}$ from $T_{m}$ within $O(d^{k+1})$ time, so the final computational complexity is $O(nd^{k+1})$. The ground state $\boldsymbol{s}^{\star}$ (or $\boldsymbol{\sigma}^{\star}$) is achieved when each table $T_m$ is built with $s_{m}=s_{m}^{\star}$, where $s_{m}^{\star}$ minimizes the right hand side (RHS) of Eq. \eqref{eq:classical_min} ($m<n$) or Eq. \eqref{eq:classical_final_min} ($m=n$). A schematic workflow diagram of this algorithm using 2-local spin Hamiltonian example is shown in Fig. \ref{fig:classical}.

\begin{figure}

\centering

\includegraphics[width=0.9\linewidth]{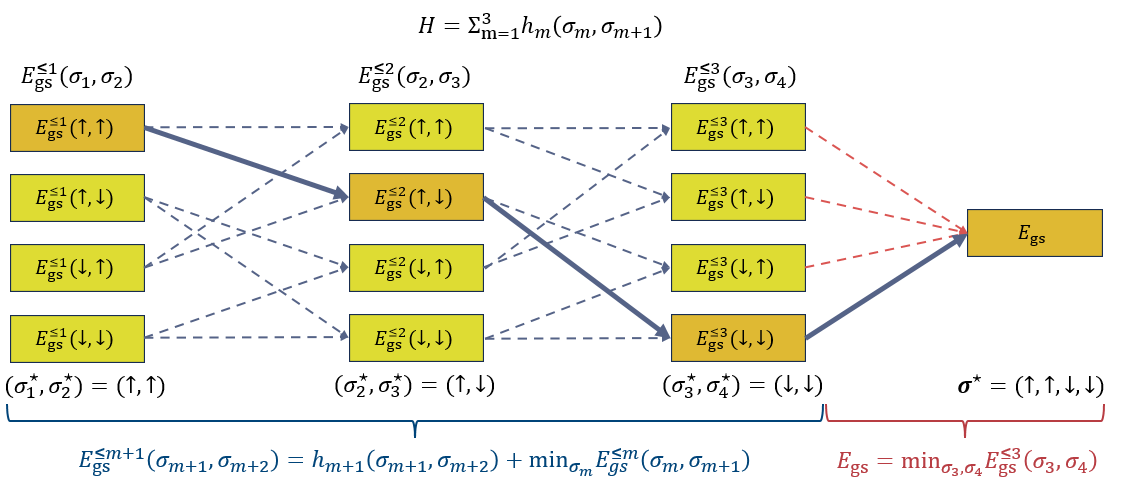}
\caption{State machine algorithm of the 1D local classical ground state problem illustrated by a $2$-local spin Hamiltonian $H=\sum_{m=1}^{3}h_{m}(\sigma_{m},\sigma_{m+1})$. For illustration, we use $\sigma$ instead of $s$ everywhere. For each $m$ we define the ``conditioned ground state energy'' $E_{\text{gs}}^{\leq m}(\sigma_{m},\sigma_{m+1})$ (see Eq. \eqref{eq:def_Egs_s}). $E_{\text{gs}}^{\leq 1}$ is simply $h_{1}$, and $E_{\text{gs}}^{\leq m+1}$ can be determined from $E_{\text{gs}}^{\leq m}$ via Eq. \ref{eq:classical_min}. The final ground state is the minimum value of $E_{\text{gs}}^{\leq n}$ (see Eq. \eqref{eq:classical_final_min}). The ground state can be obtained by finding each $s_{m}=(\sigma_{m},\sigma_{m+1})$ (orange blocks) that minimizes the RHS of Eq. \eqref{eq:classical_min} and Eq. \eqref{eq:classical_final_min}.}

\label{fig:classical}
\end{figure}

Although the constraints between $s_{m}$ are local ($s_{m}$ and $s_{m^{\prime}}$ are independent if $|m-m^{\prime}|\geq k$) in the above case, this approach also works for some special non-local constraints. An illustrative example is $H=\sum_{m=1}^{n}h_{m}(s_{m})$, $s_{m}\in\{0,1,...,d-1\}$ with constraints $f_{m}(\sum_{i=1}^{m}s_{i})\geq0$ and $g_{m}(\sum_{i=m+1}^{n}s_{i})\geq0$ for each $m$, where the summation $\sum_{i=1}^{m}s_{i}$ and $\sum_{i=m+1}^{n}s_{i}$ are modulo $d$. We introduce a state machine
\begin{equation}
A_{m}=(s_{m},\alpha_{m}=\sum_{i=1}^{m}s_{i},\beta_{m}=\sum_{i=m+1}^{n}s_{i}),
\end{equation}
which is a function of $\boldsymbol{s}=\{s_{1},...,s_{n}\}$. With a fixed $A_{m}=(s_{m},\alpha_{m},\beta_{m})$, the constraints $f_{l}$ and $g_{l}$ with $l\leq m$ can be rewritten as
\begin{equation}
\begin{aligned}f_{l}(\sum_{i=1}^{l}s_{i}) & =f_{l}(\alpha_{m}-\tau_{l}(\boldsymbol{s}_{\leq m}))\geq0\\
g_{l}(\sum_{i=l+1}^{n}s_{i}) & =g_{l}(\beta_{m}+\tau_{l}(\boldsymbol{s}_{\leq m}))\geq0,
\end{aligned}
\end{equation}
where $\tau_{l}(\boldsymbol{s}_{\leq m})=\sum_{i=l+1}^{m}s_{i}$. This implies that these two constraints act only on $\boldsymbol{s}_{\leq m}$. Similarly, the constraints of $l>m$ act only on $\boldsymbol{s}_{>m}$. Therefore, $\boldsymbol{s}_{\leq m}$ and $\boldsymbol{s}_{>m}$ are again decoupled given $A_m$, i.e.,
\begin{equation}
\{\boldsymbol{s}|A_{m}\}=\{\boldsymbol{s}_{\leq m}|A_{m}\}\otimes\{\boldsymbol{s}_{>m}|A_{m}\}.\label{eq:Markov_independence_s}
\end{equation}
For a given $A_m$, $\boldsymbol{s}_{\leq m}$ fully determines each $A_{l\leq m}$ as $A_{l}=(s_{l},\alpha_{m}-\tau_{l}(\boldsymbol{s}_{\leq m}),\beta_{m}+\tau_{l}(\boldsymbol{s}_{\leq m}))$, and $\boldsymbol{s}_{>m}$ similarly determines each $A_{l>m}$. Thus we conclude
\begin{equation}
\{A_{1},...,A_{n}|A_{m}\}=\{A_{1},...,A_{m}|A_{m}\}\otimes\{A_{m+1},...,A_{n}|A_{m}\},\label{eq:decoupling_SM}
\end{equation}
which is a generalized form of Eq. \eqref{eq:decoupling}. In fact, Eq. \eqref{eq:decoupling} can be viewed as a special case of Eq. \eqref{eq:decoupling_SM} when $A_{m}=s_{m}$. Since $A_{m}$ contains the information of $s_{m}$, we can rewrite $h_{m}(s_{m})$ as $h_{m}^{\prime}(A_{m})$. To use the previous algorithm, we introduce a transition function
\begin{equation}
\begin{aligned}F(A_{m}) & =\{A_{m+1}|A_{m}\}\\
 & =\{(s_{m+1},\alpha_{m+1}=\alpha_{m}+s_{m+1},\beta_{m+1}=\beta_{m}-s_{m+1})|f_{m+1}(\alpha_{m+1})\geq0,g_{m+1}(\beta_{m+1})\ge0\},
\end{aligned}
\end{equation}
and thus $F_{\text{back}}(A_{m})=\{A_{m-1}|A_{m}\in F(A_{m-1})\}$. We can then similarly define the ``conditioned ground state energy'' $E_{\text{gs}}^{\leq m}(A_{m})$ and recursively construct the values for $m=1,2,...,n$ to determine the ground state. For this specific problem, $A_m$ has $d^3$ choices, and $F(A_m)$ has $d$ elements, so the computational complexity to obtain the ground state is $O(nd^{4})$.

We further generalize the above solution of constrained 1D classical ground state problems. Without loss of generality, the input Hamiltonian can be assumed as $1$-local since a $k$-local problem can always be rewritten as a 1-local problem. The key is to construct a state machine $A_{m}$ that satisfies Eq. \eqref{eq:decoupling_SM}. Once we constructed such $A_{m}$, the ground state could be obtained by recursively building tables $T_{m}$ that record each $E_{\text{gs}}^{\leq m}(A_{m})$. Now we rigorously define a state machine as follows:
\begin{defn}
\thlabel{SM} (State machine) Consider a classical system with single-site states $s_{1},...,s_{n}$, where each of them can take some discrete values $s_{i}\in d_{i}$. The valid space $\boldsymbol{D}$ is a subset of the full space $d_{1}\otimes d_{2}\otimes...\otimes d_{n}$ due to some constraints. $A_{m}$ defined on $\boldsymbol{D}$ is a \textbf{state machine} if (1) $A_{m}$ contains information of $s_{m}$, i.e. there exists a \textbf{state function} $G(A_{m}(\boldsymbol{s}))=s_{m}$ for any $\boldsymbol{s}\in\boldsymbol{D}$, and (2) given $A_{m}$, the possible values of the left state machines $A_{1},...,A_{m}$ and the possible values of right state machines $A_{m+1},...,A_{n}$ are independent, i.e.
\begin{equation}
\{A_{1},...,A_{n}|A_{m}\}=\{A_{1},...,A_{m}|A_{m}\}\otimes\{A_{m+1},...,A_{n}|A_{m}\}.\label{eq:Markov_independence}
\end{equation}
The corresponding \textbf{transition function} is defined as
\begin{align}
F(A_{m}) & =\{A_{m+1}|A_{m}\}\label{eq:def_F}\\
 & :=\{A_{m+1}(\boldsymbol{s})|A_{m}(\boldsymbol{s})=A_{m},\boldsymbol{s}\in\boldsymbol{D}\},
\end{align}
where the second line is a rigorous definition of $\{A_{m+1}|A_{m}\}$. The definitions of other expressions like $\{...|...\}$ are similar.
\end{defn}
As mentioned previously, it can be understood as a non-probabilistic version of the Markov chain. We additionally require the existence of the state function $G$ so that we could write $h_{m}(s_{m})$ as $h_{m}(G(A_{m}))$ in the Hamiltonian. Now the above state machine algorithm of the constrained classical ground state problem can be generalized in the following theorem, and a schematic workflow is shown in Fig. \ref{fig:SM}.
\begin{thm}
\thlabel{classical_gs} (State machine algorithm of the constrained classical ground state problem) If $A_{m}$ is a state machine on $\boldsymbol{s}\in\boldsymbol{D}$ with the state function $G$ and transition function $F$, then the ground state of Hamiltonian $H=\sum_{m=1}^{n}h_{m}(s_{m})$ can be determined by
\end{thm}
\begin{enumerate}
\item Construct $\mathcal{A}_{1}=\{A_{1}(\boldsymbol{s})|\boldsymbol{s}\in\boldsymbol{D}\}$ and $E_{\text{gs}}^{\leq 1}(A_{1})=h_{1}(G(A_{1}))$ for all $A_{1}\in\mathcal{A}_{1}$.
\item From $m=1$ to $n-1$, create $\mathcal{A}_{m+1}=\cup_{A_m\in\mathcal{A}_{m}}F(A_m)$ and
\begin{equation}
E_{\text{gs}}^{\leq m+1}(A_{m+1}\in\mathcal{A}_{m+1})=h_{m+1}(G(A_{m+1}))+\min_{A_{m}\in F_{\text{back}}(A_{m+1})}E_{\text{gs}}^{\leq m}(A_{m}),\label{eq:E_next}
\end{equation}
where $F_{\text{back}}(A_{m+1})=\{A_{m}|A_{m+1}\in F(A_{m})\}$.
\item The ground state energy is 
\begin{equation}
E_{\text{gs}}=\min_{A_{n}\in\mathcal{A}_{n}}E_{\text{gs}}^{\leq n}(A_{n}).\label{eq:E_final}
\end{equation}
\item To obtain the ground state $\boldsymbol{s}^{\star}$, we find $\{A_{1}^{\star},...,A_{n}^{\star}\}$ by 
\begin{equation}
\begin{aligned}A_{n}^{\star} & =\argmin_{A_{n}\in\mathcal{A}_{n}}E_{\text{gs}}^{\leq n}(A_{n}),\\
A_{m}^{\star} & =\argmin_{A_{m}\in F_{\text{back}}(A_{m+1}^{\star})}E_{\text{gs}}^{\leq m}(A_{m})
\end{aligned}
\label{eq:gs_state}
\end{equation}
for $m=1,2,...,n-1$. Then the components of $\boldsymbol{s}^{\star}$ is $s_{m}^{\star}=G(A_{m}^{\star})$ for $m=1,2,...,n$.
\end{enumerate}
\begin{proof}
The Hamiltonian can be rewritten as $H=\sum_{m=1}^{n}h_{m}(G(A_{m}))$, thus the all derivations in the first $k$-local classical ground state problem still hold with all $s_{m}$ changed to $A_{m}$.
\end{proof}
We additionally note that, the algorithm in \thref{classical_gs} is not automatically linearly-scaled, unless we can ensure that the evaluations of transition function $F$ and the initial $\mathcal{A}_1$ can be done in a constant time.
\begin{figure}[tbh]
\centering

\includegraphics[width=0.95\linewidth]{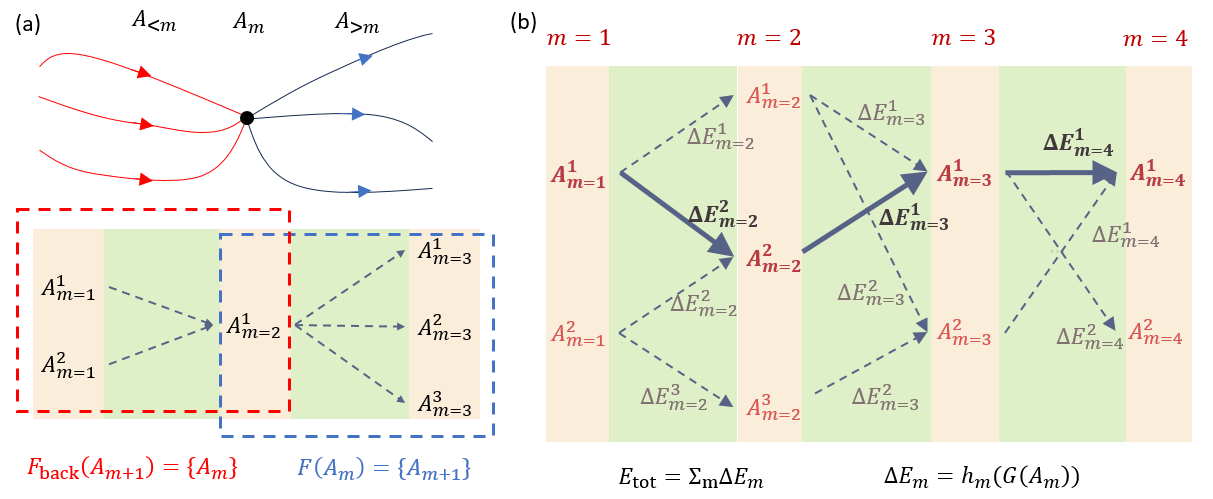}

\caption{(a) Schematic diagram of the state machine and transition function defined in \thref{SM}. Given $A_{m}$, the possible paths of $A_{<m}=\{A_{1},...,A_{m-1}\}$ should be decoupled from the possible paths of $A_{>m}=\{A_{m+1},...,A_{n}\}$. The transition function $F(A_{m})$ gives the possible values of $A_{m+1}$ given $A_{m}$, and $F_{\text{back}}(A_{m+1})$ gives the possible values of $A_{m}$ given $A_{m+1}$. (b) Schematic diagram of the state machine algorithm in \thref{classical_gs} to obtain the ground state of constrained 1D local classical Hamiltonians. The Hamiltonian is given by $H=\sum_{m=1}^{n}h_{m}(s_{m})$ in some constrained space $\boldsymbol{D}$. Given a state machine $A_{m}$ and transition function $F$ satisfying \thref{SM}, we start from $A_{1}\in\mathcal{A}_{1}$, and sequentially determine the possible values of $A_{m+1}\in F(A_{m})$ for each $A_{m}$. (lines connecting $A_{m}$ and $A_{m+1}$ indicates that $A_{m+1}\in F(A_{m})$) In this process, each $A_{m}$ can generate multiple $A_{m+1}$, and each $A_{m+1}$ can be generated from multiple $A_{m}$. The energy difference $\Delta E_{m}=h_{m}(G(A_{m}))$ is a function of $A_{m}$, where $G$ is the state functions satisfying $G(A_{m})=s_{m}$. By such construction, each path $\{A_{1},...,A_{n}\}$ such that $A_{m+1}\in F(A_{m})$, $m=1,...,n-1$ uniquely maps a valid state $\boldsymbol{s}\in\boldsymbol{D}$ via $s_{m}=G(A_{m})$. To determine the ground state, we record the value of $E_{\text{gs}}^{\leq m}(A_{m})$ for each $A_{m}$ in a table starting from $E_{\text{gs}}^{\leq 1}(A_{1})=h_{1}(A_{1})$. Once we record all values of $E_{\text{gs}}^{\leq m}(A_{m})$, the values of $E_{\text{gs}}^{\leq m+1}(A_{m+1})$ can be determined from Eq. \eqref{eq:E_next}. The final ground state is simply $E_{\text{gs}}=\min_{A_{n}}(E_{\text{gs}}^{\leq n}(A_{n}))$, and the ground state is the corresponding path $\{s_{1}^{\text{\ensuremath{\star}}},....,s_{n}^{\star}\}$ (bold lines) such that each minimization in Eq. \eqref{eq:E_next} and Eq. \ref{eq:E_final} is taken at $s_{m}=s_{m}^{\star}$.}

\label{fig:SM}
\end{figure}

\subsubsection{Mapping the stabilizer ground state problem to the constrained classical ground state problem \label{sec:mapping}}

In this subsection, we show that the 1D local stabilizer ground state problem can be rigorously mapped to the constrained classical ground state problem considered in Sec. \ref{sec:classical_gs}. we first define the 1D $k$-local Hamiltonian as follows:
\begin{defn} ($k$-local Hamiltonians)
A Hamiltonian $H=\sum_{P\in\boldsymbol{P}}w_{P}P$ is $k$-local if each $P\in\boldsymbol{P}$ is between qubit $q_{P}^{\text{first}}$ to $q_{P}^{\text{last}}$ that satisfies $q_{P}^{\text{last}}-q_{P}^{\text{first}}\leq k-1$.
\end{defn}
In the context of stabilizer ground states, we need to find suitable states with equivalent identities as $\boldsymbol{s}$. According to \thref{SM} and \thref{classical_gs}, the stabilizer ground state energy can be found via $E_{\text{gs}}=\min_{\boldsymbol{Q}\in\mathscr{S}(\boldsymbol{P})}E_{\text{stab}}(H,\langle\boldsymbol{Q}\rangle)$. In other words, $\mathscr{S}(\boldsymbol{P})=\{\boldsymbol{Q}\subseteq\tilde{\boldsymbol{P}}|\boldsymbol{Q}=\langle\boldsymbol{Q}\rangle\cap\tilde{\boldsymbol{P}},-I\notin\langle\boldsymbol{Q}\rangle\}$ serves as the state space $\boldsymbol{D}$, and each $\boldsymbol{Q}\in\mathscr{S}(\boldsymbol{P})$ is a state $\boldsymbol{s}$. We find the equivalent identity of single-site states $s_{m}$ as follows.   We divide $\tilde{\boldsymbol{P}}$ onto sites $m$ by the last non-identity qubit $q_{P}^{\text{last}}$ for any $P\in\tilde{\boldsymbol{P}}$, and define $\tilde{\boldsymbol{P}}_{I}=\{P\in\boldsymbol{P}|q_{P}^{\text{last}}\in I\}$ for a given index set $I$. For index sets $\{m\}$, $\{i|a\leq i\leq b\}$, $\{i|i>m\}$, $\{i|i\geq m\}$, $\{i|i<m\}$, and $\{i|i\leq m\}$, we use shorthands $\tilde{\boldsymbol{P}}_{m}$, $\tilde{\boldsymbol{P}}_{a,b}$, $\tilde{\boldsymbol{P}}_{>m}$, $\tilde{\boldsymbol{P}}_{\geq m}$, $\tilde{\boldsymbol{P}}_{<m}$, and $\tilde{\boldsymbol{P}}_{\leq m}$, respectively. For example, $\tilde{\boldsymbol{P}}_{\leq m}=\{P\in\tilde{\boldsymbol{P}}|q_{P}^{\text{last}}\leq m\}$. According to the definition, we have $\tilde{\boldsymbol{P}}=\cup_{m=1}^{n}\tilde{\boldsymbol{P}}_{m}$. We also introduce similar shorthands for $\boldsymbol{P}_{I}$. Since $\boldsymbol{Q}\in\mathscr{S}(\boldsymbol{P})$ is a subset of $\tilde{\boldsymbol{P}}$, we have $\boldsymbol{Q}=\cup_{m=1}^{n}\boldsymbol{Q}_{m}$ with $\boldsymbol{Q}_{m}=\boldsymbol{Q}\cap\tilde{\boldsymbol{P}}_{m}$, and $\boldsymbol{Q}_{m}$ is the equivalent identity of single-site states $s_{m}$. The constraint $\boldsymbol{Q}\in\mathscr{S}(\boldsymbol{P})$ is the equivalent relation of $\boldsymbol{s}\in\boldsymbol{D}$. To give an example of the constraint, we consider $\boldsymbol{P}=\{Z_{1},Z_{2},Z_{1}Z_{2}\}$, and correspondingly $\boldsymbol{P}_{1}=\{Z_{1}\}$, $\boldsymbol{P}_{2}=\{Z_{2},Z_{1}Z_{2}\}$. If $Z_{1}\in\boldsymbol{Q}$ and $Z_{2}\in\boldsymbol{Q}$ we must have $Z_{1}Z_{2}\in\boldsymbol{Q}$ as well, so $\boldsymbol{Q}_{1}=\{Z_{1}\}$ and $\boldsymbol{Q}_{2}=\{Z_{2}\}$ cannot hold simultaneously, although either of them can be taken if the other one is empty. Finally, the stabilizer energy $E_{\text{stab}}(H,\langle\boldsymbol{Q}\rangle)$ for any $\boldsymbol{Q}\in\mathscr{S}(\boldsymbol{P})$ can also be written as the sum of local $\boldsymbol{Q}_{m}$:
\begin{equation}
\begin{aligned}E_{\text{stab}}(H,\langle\boldsymbol{Q}\rangle) & =\sum_{P\in\langle\boldsymbol{Q}\rangle\cap\boldsymbol{P}}w_{P}-\sum_{P\in\langle\boldsymbol{Q}\rangle\cap(-\boldsymbol{P})}w_{-P}\\
 & =\sum_{m}\big(\sum_{Q\in\boldsymbol{Q}_{m}\cap\boldsymbol{P}_{m}}w_{Q}-\sum_{Q\in\boldsymbol{Q}_{m}\cap(-\boldsymbol{P}_{m})}w_{-Q}\big)\\
 & :=\sum_{m}h_{m}(\boldsymbol{Q}_{m}) 
\end{aligned}
\label{eq:h_Qm}
\end{equation}

In summary, the mapping from the 1D local stabilizer ground state problem to the 1D local constrained classical ground state problem is:
\begin{enumerate}
\item $\boldsymbol{Q}_{m}\rightarrow s_{m}$
\item $\{\boldsymbol{Q}_{m}|\boldsymbol{Q}_{m}\subseteq\tilde{\boldsymbol{P}}_{m}\}\rightarrow\boldsymbol{d}_{m}$
\item $\mathscr{S}(\boldsymbol{P})\rightarrow\boldsymbol{D}$
\item $\boldsymbol{Q}\in\mathscr{S}(\boldsymbol{P})\rightarrow\boldsymbol{s}\in\boldsymbol{D}$
\item $h_{m}(\boldsymbol{Q}_{m})\rightarrow h_{m}(s_{m})$
\end{enumerate}

We additionally note that the single-site state $\boldsymbol{Q}_{m}$ becomes the CCS $\boldsymbol{Q}$ in the limit of $k\rightarrow n$, which suggests that the 1D local algorithm reduces to the general solution when $k\rightarrow n$.

\subsubsection{Construction of state machine $A_{m}$ \label{sec:SM}}

In the previous section, we have already mapped the 1D local stabilizer ground state problem to a constrained classical ground state problem. Once we construct the state machine satisfying \thref{SM}, we can obtain the ground state by \thref{classical_gs}.
We first introduce the projection operation as follow:
\begin{defn}
\thlabel{def:projection} (Projection) We denote $\mathcal{P}_{I}=\pm\{I_{i},X_{i},Y_{i},Z_{i}\}^{\otimes_{i\in I}}$, where $i$ indicates the qubit index, and $I$ stands for some index set. The \textbf{projection} of a set of Pauli operators $\boldsymbol{P}$ to qubits $I$ is $\mathbb{P}_{I}(\boldsymbol{P})=\boldsymbol{P}\cap\mathcal{P}_{I}$. The notations of the index set $I$ are the same with $\tilde{\boldsymbol{P}}_{I}$ in Sec. \ref{sec:mapping}.
\end{defn}
It is important to note that, if $\boldsymbol{S}$ is a stabilizer group,  $\mathcal{P}_{I}(\boldsymbol{S})$ is also a stabilizer group. Following the definition of state machine in \thref{SM} and the mapping in Sec. \ref{sec:mapping}, we hope to construct $A_{m}$ on $\boldsymbol{Q}\in\mathscr{S}(\boldsymbol{P})$ satisfying Eq. \eqref{eq:Markov_independence}. In the second example of Sec. \ref{sec:classical_gs}, we first constructed $A_{m}$ satisfying Eq. \eqref{eq:Markov_independence_s}. In Appendix \ref{appendix:proof_Markov_independence_s}, we prove that Eq. \eqref{eq:Markov_independence_s} is a necessary condition of Eq. \eqref{eq:Markov_independence}. According to the mapping \textbf{$\boldsymbol{s}\rightarrow\boldsymbol{Q}$}, Eq. \eqref{eq:Markov_independence_s} is
\begin{equation}
\{\boldsymbol{Q}|A_{m}\}=\{\boldsymbol{Q}_{\leq m}|A_{m}\}\otimes\{\boldsymbol{Q}_{>m}|A_{m}\},\label{eq:Markov_independence_Q}
\end{equation}
i.e., given $A_{m}$, \textbf{$\boldsymbol{Q}_{\leq m}$ }and $\boldsymbol{Q}_{>m}$ are decoupled. In Appendix \ref{appendix:SM}, we derive 
\begin{equation}
A_{m}(\boldsymbol{Q})=(\boldsymbol{S}_{\text{proj}}^{m}(\boldsymbol{Q}_{\leq m}),\tilde{\boldsymbol{P}}_{\text{invalid}}^{m}(\boldsymbol{Q}_{\leq m}),\boldsymbol{S}_{\text{right}}^{m}(\boldsymbol{Q}_{\geq m}))\label{eq:expr_Am}
\end{equation}
that satisfies Eq. \eqref{eq:Markov_independence_s} and further Eq. \eqref{eq:Markov_independence}, where
\begin{align}
\tilde{\boldsymbol{P}}_{\text{invalid}}^{m}(\boldsymbol{Q}_{\leq m}) & =\{P\in\tilde{\boldsymbol{P}}_{>m}|[P,\boldsymbol{Q}_{\leq m}]\neq0\},\label{eq:Pinvalid}\\
\boldsymbol{S}_{\text{proj}}^{m}(\boldsymbol{Q}_{\leq m}) & =\mathbb{P}_{>m-k}(\langle\boldsymbol{Q}_{\leq m}\rangle),\label{eq:Sproj}\\
\boldsymbol{S}_{\text{right}}^{m}(\boldsymbol{Q}_{\geq m}) & =\mathbb{P}_{\leq m}(\langle\boldsymbol{Q}_{\geq m}\rangle).\label{eq:Sright}
\end{align}
A physical illustration is that, the coupling between $\boldsymbol{Q}_{\leq m}$ and $\boldsymbol{Q}_{>m}$ comes from (1) $[\boldsymbol{Q}_{>m},\boldsymbol{Q}_{\leq m}]=0$ since elements of a stabilizer group commute, and (2) group multiplication operations between $\boldsymbol{Q}_{\leq m}$ and $\boldsymbol{Q}_{>m}$ does not generated new elements in either $\tilde{\boldsymbol{P}}_{>m}$ and $\tilde{\boldsymbol{P}}_{\leq m}$. Given $A_{m}=(\boldsymbol{S}_{\text{proj}}^{m},\tilde{\boldsymbol{P}}_{\text{invalid}}^{m},\boldsymbol{S}_{\text{right}}^{m})$, (1) is equivalent to $\tilde{\boldsymbol{P}}_{\text{invalid}}^{m}(\boldsymbol{Q}_{\leq m})=\tilde{\boldsymbol{P}}_{\text{invalid}}^{m}$, and (2) is equivalent to $\boldsymbol{S}_{\text{proj}}^{m}(\boldsymbol{Q}_{\leq m})=\boldsymbol{S}_{\text{proj}}^{m}$ (for the $\tilde{\boldsymbol{P}}_{>m}$ part) and $\boldsymbol{S}_{\text{right}}^{m}(\boldsymbol{Q}_{\geq m})=\boldsymbol{S}_{\text{right}}^{m}$ (for the $\tilde{\boldsymbol{P}}_{\leq m}$ part), respectively. Thus, $\boldsymbol{Q}_{\leq m}$ and $\boldsymbol{Q}_{>m}$ are now decoupled.

Finally, by noticing
\begin{align}
\boldsymbol{S}_{\text{right}}^{m}(\boldsymbol{Q}_{\geq m})\cap\tilde{\boldsymbol{P}}_{m} & =\mathbb{P}_{\leq m}(\langle\boldsymbol{Q}_{\geq m}\rangle)\cap\tilde{\boldsymbol{P}}_{m}=\boldsymbol{Q}_{m},
\end{align}
the state function $G$ is
\begin{equation}
G(A_{m}=(\boldsymbol{S}_{\text{proj}}^{m},\tilde{\boldsymbol{P}}_{\text{invalid}}^{m},\boldsymbol{S}_{\text{right}}^{m}))=\boldsymbol{S}_{\text{right}}^{m}\cap\tilde{\boldsymbol{P}}_{m}.\label{eq:G}
\end{equation}
Therefore, the state machine for the 1D local stabilizer ground state problem is
\begin{cor}
\thlabel{SM_Q} (State machine of the 1D local stabilizer ground state problem) $A_{m}$ defined in Eq. \eqref{eq:expr_Am} is a state machine of the state space $\mathscr{S}(\boldsymbol{P})$. The state function $G$ is given in Eq. \eqref{eq:G}.
\end{cor}

\subsubsection{Construction of transition function $F$ \label{sec:F}}

After defining the state machine in \thref{SM_Q}, we need to further derive the transition function $F(A_{m})=\{A_{m+1}(\boldsymbol{Q})|A_{m}=A_{m}(\boldsymbol{Q}),\boldsymbol{Q}\in\mathscr{S}(\boldsymbol{P})\}$ in order to use \thref{classical_gs}. A naive strategy is looping over all $\boldsymbol{Q}\in\mathscr{S}(\boldsymbol{P})$ and calculating each $A_{m}(\boldsymbol{Q})$ and $A_{m+1}(\boldsymbol{Q})$, however its computational complexity is still exponential. Our strategy is introduced as follows with a linear complexity. 
We first notice that both $\boldsymbol{S}_{\text{proj}}^{m+1}$ and $\tilde{\boldsymbol{P}}_{\text{invalid}}^{m+1}$ can be obtained from $A_{m}$ and $\boldsymbol{S}_{\text{right}}^{m+1}$ as:
\begin{align}
\boldsymbol{S}_{\text{proj}}^{m+1}(\boldsymbol{S}_{\text{proj}}^{m},\boldsymbol{Q}_{m+1}) & =\langle\mathbb{P}_{>m-k+1}(\boldsymbol{S}_{\text{proj}}^{m}(\boldsymbol{Q}_{\leq m})),\boldsymbol{Q}_{m+1}\rangle,\label{eq:Sproj_next}\\
\tilde{\boldsymbol{P}}_{\text{invalid}}^{m+1}(\tilde{\boldsymbol{P}}_{\text{invalid}}^{m},\boldsymbol{Q}_{m+1})= & \{P\in\tilde{\boldsymbol{P}}_{\geq m+1}|P\in\tilde{\boldsymbol{P}}_{\text{invalid}}^{m}(\boldsymbol{Q}_{\leq m})\text{ or }[P,\boldsymbol{Q}_{m+1}]\neq0\},\label{eq:Pinvalid_next}
\end{align}
where $\boldsymbol{Q}_{m+1}=\boldsymbol{S}_{\text{right}}^{m+1}\cap\tilde{\boldsymbol{P}}_{m+1}$ is determined by $\boldsymbol{S}_{\text{right}}^{m+1}$. The derivations can be found in Appendix \ref{appendix:SM}. For given $A_{m}$ and $\boldsymbol{S}_{\text{right}}^{m+1}$, $A_{m+1}$ is determined via Eq. \eqref{eq:Sproj_next} and Eq. \eqref{eq:Pinvalid_next}, and we denote it as
\begin{equation}
A_{m+1}=A_{m+1}(A_{m},\boldsymbol{S}_{\text{right}}^{m+1}).\label{eq:Am+1_next_Sright}
\end{equation}
This suggests that, to obtain the transition function $F(A_{m})$, we need to determine all possible $\boldsymbol{S}_{\text{right}}^{m+1}$ given $A_{m}$:
\begin{equation}
\mathcal{S}_{\text{right}}^{m+1}(A_{m})=\{\boldsymbol{S}_{\text{right}}^{m+1}|A_{m}\}\equiv\{\boldsymbol{S}_{\text{right}}^{m+1}(\boldsymbol{Q})|A_{m}=A_{m}(\boldsymbol{Q}),\boldsymbol{Q}\in\mathscr{S}(\boldsymbol{P})\}.
\end{equation}
Thus, the transition function $F(A_m)$ can be written as 
\begin{equation}
F(A_{m})=\{A_{m+1}(A_{m},\boldsymbol{S}_{\text{right}}^{m+1})|\boldsymbol{S}_{\text{right}}^{m+1}\in\mathcal{S}_{\text{right}}^{m+1}(A_{m})\}.\label{eq:F_Am}
\end{equation}
However, the exact computation of $\mathcal{S}_{\text{right}}^{m+1}(A_{m})$ is still exponentially scaled. A solution to this issue is proposed by relaxing the definition of state machines and transition functions as follows:
\begin{defn}
\thlabel{relaxed_F} (Relaxed state machine and transition function) Let $A_{m}$ be a state machine defined on \textbf{$\boldsymbol{D}$} with transition function $F$, and $\mathcal{A}_{m}=\{A_{m}(\boldsymbol{s})|\boldsymbol{s}\in\boldsymbol{D}\}$. For an enlarged space $\tilde{\mathcal{A}}_{m}\supseteq\mathcal{A}_{m}$, we say $A_{m}\in\tilde{\mathcal{A}}_{m}$ is a \textbf{relaxed state machine}, and $\tilde{F}(A_{m}\in\tilde{\mathcal{A}}_{m})$ is a \textbf{relaxed transition function} if
\end{defn}
\begin{enumerate}
\item $\tilde{F}(A_{m}\in\mathcal{A}_{m})\cap\mathcal{A}_{m+1}=F(A_{m})$, i.e. $\tilde{F}$ behaves the same with $F$ in the ``valid region'' $\mathcal{A}_{m}$.
\item $\tilde{F}(A_{m}\in(\tilde{\mathcal{A}}_{m}-\mathcal{A}_{m}))\subseteq\tilde{\mathcal{A}}_{m+1}-\mathcal{A}_{m+1}$, i.e. if $A_{m+1}\in\tilde{\mathcal{A}}_{m+1}$, then $A_{m}\in\tilde{\mathcal{A}}_{m}$
\end{enumerate}
According to this definition, once some path $\{A_{1},A_{2},...\}$ enters into the ``invalid region'' at some $A_{m}$, i.e., $A_{m}\in\tilde{\mathcal{A}}_{m}-\mathcal{A}_{m}$, all the following $\{A_{m+1},...\}$  will stay in the invalid region and the path could even be early terminated at step $t < n$ when $\tilde{F}(A_{t})=\emptyset$. To assure each full path $\{A_{1},...,A_{n}\}$ always staying in the valid region, the condition of $\tilde{\mathcal{A}}_{n}-\mathcal{A}_{n}=\emptyset$ is additionally required. With this additional condition, the replacement of $F$ to the relaxed definitions, $\tilde{F}$, in \thref{classical_gs} should still result in the ground state. We refer to this replaced algorithm as the relaxed state machine algorithm, whose schematic diagram is shown in Fig. \ref{fig:SM_relaxed}.
\begin{cor}
\thlabel{relaxed_gs} (Relaxed state machine algorithm of contrained 1D local classical ground state) Following the notations in \thref{relaxed_F}, we replace $F$ by $\tilde{F}$ and replace the initial $\mathcal{A}_{1}$ by some $\tilde{\mathcal{A}}_{1}\supseteq\mathcal{A}_{1}$ in \thref{classical_gs}, and generate $\tilde{\mathcal{A}}_{m}$, $m=1,2,....,n$ with the same procedure. If we additionally have $\tilde{\mathcal{A}}_{n}=\mathcal{A}_{n}$, then the modified algorithm still gives the correct ground state.
\end{cor}
\begin{figure}[tbh]
\centering

\includegraphics[width=0.7\linewidth]{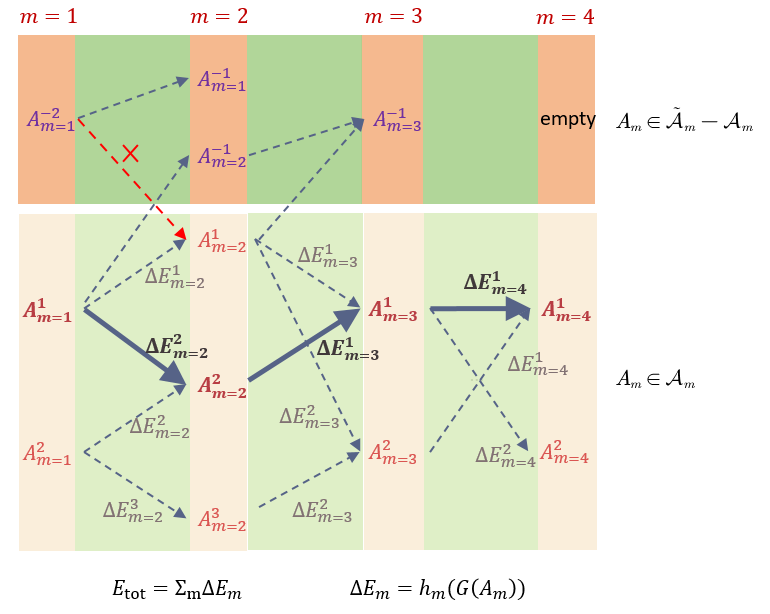}

\caption{Schematic diagram of the relaxed state machine algorithm to obtain the ground state of constrained 1D local classical Hamiltonians in \thref{relaxed_gs}. Given a relaxed state machine $A_{m}$ and a relaxed transition function $\tilde{F}$ defined in \thref{relaxed_F}, the full space of the relaxed state machine $\tilde{\mathcal{A}}_{m}$ is divided to a valid part $\mathcal{A}_{m}$ and a non-valid part $\tilde{\mathcal{A}}_{m}-\mathcal{A}_{m}$. $\tilde{F}$ can connect $A_{m}\in\tilde{\mathcal{A}}_{m}$ with $A_{m+1}\in\tilde{\mathcal{A}}_{m+1}$, $A_{m}\in\tilde{\mathcal{A}}_{m}$ with $A_{m+1}\protect\notin\tilde{\mathcal{A}}_{m+1}$, $A_{m}\protect\notin\tilde{\mathcal{A}}_{m}$ with $A_{m+1}\protect\notin\tilde{\mathcal{A}}_{m+1}$, but not $A_{m}\protect\notin\tilde{\mathcal{A}}_{m}$ with $A_{m+1}\in\tilde{\mathcal{A}}_{m+1}$ (red dashed line with a ``$\times$'' sign). In the valid part $\mathcal{A}_{m}$, the behavior of $\tilde{F}$ should be the same as the original transition function $F$. (the bottom part is the same as Fig. \ref{fig:SM}. If there's no invalid state machine at $m=n$, i.e. $\tilde{\mathcal{A}}_{m}=\mathcal{A}_{m}$, then the ground state can be obtained by the same approach in \thref{classical_gs} with the relaxed transition function $\tilde{F}$. (once the path $A_{1},A_{2},...$ enters into the top part is never reaches $m=n$ so does not affect the result) With the mapping of the stabilizer ground problem to the constrained classical ground state problem in Sec. \ref{sec:mapping}, and the constructed relaxed state machine $A_{m}$, relaxed transition function $\tilde{F}$, and state function $G$, one can obtain the stabilizer ground state of 1D local Hamiltonians with a linear scaling. (See \thref{relaxed_gs_Q})}

\label{fig:SM_relaxed}
\end{figure}

We now revisit the stabilizer ground state problem, where an enlarged space $\mathcal{\tilde{S}}_{\text{right}}^{m+1}(A_{m})\supseteq\mathcal{S}_{\text{right}}^{m+1}(A_{m})$ needs to be created. By implying $\tilde{\mathcal{A}}_m\supseteq \mathcal{A}_m$ for each $m$, the corresponding construction (Eq. \eqref{eq:F_Am_relaxed}) automatically satisfies first condition of \thref{relaxed_F}.
\begin{equation}
\tilde{F}(A_{m})=\{A_{m+1}(A_{m},\boldsymbol{S}_{\text{right}}^{m+1})|\boldsymbol{S}_{\text{right}}^{m+1}\in\tilde{\mathcal{S}}_{\text{right}}^{m+1}(A_{m})\}.\label{eq:F_Am_relaxed}
\end{equation}
In Appendix \ref{appendix:proof_F}, we show that the second condition of \thref{relaxed_F} can be achieved if each $\boldsymbol{S}_{\text{right}}^{m+1}\in\tilde{\mathcal{S}}_{\text{right}}^{m+1}(A_{m})$ satisfies
\begin{equation}
\begin{aligned}-I & \notin\langle\boldsymbol{Q}_{m+1}\rangle,\\
\boldsymbol{Q}_{m+1}\cap\tilde{\boldsymbol{P}}_{\text{invalid}}^{m}(\boldsymbol{Q}_{\leq m}) & =\emptyset,\\
\langle\mathbb{P}_{\leq m}(\boldsymbol{S}_{\text{right}}^{m+1}),\boldsymbol{Q}_{m}\rangle & =\boldsymbol{S}_{\text{right}}^{m},\\
\boldsymbol{S}_{\text{proj}}^{m+1}\cap\boldsymbol{S}_{\text{right}}^{m+1} & =\boldsymbol{Q}_{m+1},
\end{aligned}
\label{eq:Sright_next_conditions}
\end{equation}
where we used $\boldsymbol{Q}_{m}=\boldsymbol{S}_{\text{right}}^{m}\cap\boldsymbol{P}_{m}$, $\boldsymbol{Q}_{m+1}=\boldsymbol{S}_{\text{right}}^{m+1}\cap\boldsymbol{P}_{m+1}$, and $\boldsymbol{S}_{\text{proj}}^{m+1}=\boldsymbol{S}_{\text{proj}}^{m+1}(\boldsymbol{S}_{\text{proj}}^{m},\boldsymbol{Q}_{m+1})$ defined in Eq. \eqref{eq:Sproj_next}. Thus, $\tilde{\mathcal{S}}_{\text{right}}^{m+1}(A_{m})$ could be constructed by looping over all stabilizer groups between qubit $m-k+2$ and $m+1$, and keeping only the terms satisfying Eq. \eqref{eq:Sright_next_conditions}. However, this apporach has a non-ideal upper bound $|\tilde{\mathcal{S}}_{\text{right}}^{m}|\leq\mathcal{S}(k)\sim2^{\frac{1}{2}(k+1)(k+2)}$. We note that if $\tilde{F}$ on $\tilde{\mathcal{A}}_{m}\supseteq\mathcal{A}_{m}$ is a relaxed transition function, $\tilde{F}$ on any $\tilde{\mathcal{A}}_{m}^{\prime}$ with $\tilde{\mathcal{A}}_{m}\supseteq\tilde{\mathcal{A}}_{m}^{\prime}\supseteq\mathcal{A}_{m}$ is automatically a relaxed transition function (Fig. \ref{fig:SM_relaxed}).  Thanks to this fact, a better solution can be obtained with the truncation operation defined as follows:
\begin{defn} \thlabel{def:truncation} (Truncation)
Let $P\in\mathcal{P}_{n}$ be a Pauli operator, and $P=\pm p_{1}\otimes p_{2}\otimes...\otimes p_{n}$, where $p_{i}\in\{I_{i},X_{i},Y_{i},Z_{i}\}$. The \textbf{truncation} of $P$ to qubits $I$ is $\mathbb{T}_{I}(P)=\otimes_{i\in I}p_{i}$. Similarly, the truncation of a set of Pauli operators $\boldsymbol{P}$ is $\mathbb{T}_{I}(\boldsymbol{P})=\{\mathbb{T}_{I}(P)|P\in\boldsymbol{P}\}$.
\end{defn}
A tighter constraint is $\boldsymbol{S}_{\text{right}}^{m}(\boldsymbol{Q}\in\mathscr{S}(\boldsymbol{P}))\in\mathcal{T}_{\text{right}}^{m}$, where
\begin{equation}
\mathcal{T}_{\text{right}}^{m}=\{\langle\boldsymbol{Q}\rangle|\boldsymbol{Q}\in\mathscr{S}(\mathbb{T}_{\leq m}(\boldsymbol{P}_{\geq m}))\},
\end{equation}
which are stabilizer groups generated by elements in $\pm \mathbb{T}_{\leq m}(\boldsymbol{P}_{\geq m})$. For example, if a stabilizer group $\boldsymbol{S}$ is generated by elements in $\boldsymbol{P}=\{X_{1}X_{2},Z_{1}Z_{2},Z_{2}\}$, $\boldsymbol{S}_{1}=\mathbb{P}_{1}(\boldsymbol{S})$ is generated by elements in $\mathbb{T}_{1}(\boldsymbol{P})=\{X_{1},Z_{1}\}$. Note that the converse statement is not true, i.e. $\boldsymbol{S}_{1}$ generated by $\mathbb{T}_{1}(\boldsymbol{P})$ is not necessarily some $\mathbb{P}_{1}(\boldsymbol{S})$. Clearly the number of elements in $\mathbb{T}_{\leq m}(\boldsymbol{P}_{\geq m})$ is up to the number of elements in $\boldsymbol{P}_{m,m+k-1}$.  This leads to a $O(\exp(Ck\log k))$ scaling for some constant $C$ given the Hamiltonian is sparse, which will be shown in \thref{local_nS}.

Finally, we define
\begin{equation}
\tilde{\mathcal{S}}_{\text{right}}^{m+1}(A_{m})=\{\boldsymbol{S}_{\text{right}}^{m}\in\mathcal{T}_{\text{right}}^{m}|\boldsymbol{S}_{\text{right}}^{m}\text{ satisfies Eq. \eqref{eq:Sright_next_conditions}}\}.\label{eq:relaxed_Sright_next}
\end{equation}
The corresponding relaxed transtion function $\tilde{F}(A_{m})$ is now given by 
\begin{cor}
\thlabel{relaxed_F_Q} (Relaxed transition function of the stabilizer ground state problem) $\tilde{F}(A_{m})$ defined in Eq. \eqref{eq:F_Am_relaxed} with $\tilde{\mathcal{S}}_{\text{right}}^{m+1}(A_{m})$ defined in Eq. \eqref{eq:relaxed_Sright_next} is a relaxed transition function of the state machine $A_{m}$ defined in \thref{SM_Q}.
\end{cor}
The last component we missed is an approach to determine the initial values $\tilde{\mathcal{A}}_{1}$. A clever construction is $\tilde{A}_{1}=\cup_{A_{0}\in\tilde{A}_{0}}F(A_{0})$ from $\tilde{A}_{0}$ by adding an ancillary site $0$ with $\boldsymbol{Q}_{0}=\boldsymbol{P}_{0}=\emptyset$. With this ancillary site, we can derive $\boldsymbol{S}_{\text{proj}}^{0}=\mathbb{P}_{>-k}(\langle\boldsymbol{Q}_{\leq0}\rangle)=\langle\emptyset\rangle$, $\tilde{\boldsymbol{P}}_{\text{invalid}}^{0}=\{P\in\tilde{\boldsymbol{P}}_{>0}|[P,\boldsymbol{Q}_{\leq0}]\neq0\}=\emptyset$, $\boldsymbol{S}_{\text{right}}^{0}=\mathbb{P}_{\leq0}(\langle\boldsymbol{Q}_{\geq0}\rangle)=\langle\emptyset\rangle$, and thus $\tilde{\mathcal{A}}_{0}= \{A_0\}$, where $A_{0}=(\langle\emptyset\rangle,\emptyset,\langle\emptyset\rangle)$. This naturally gives $\tilde{\mathcal{A}}_{1}=\tilde{F}(A_{0}=(\langle\emptyset\rangle,\emptyset,\langle\emptyset\rangle))$. According to \thref{relaxed_gs}, the final exact 1D local algorithm is given by:
\begin{cor}
\thlabel{relaxed_gs_Q} (Exact 1D local algorithm of the stabilizer ground state problem) With the state machine $A_{m}$, relaxed transition function $\tilde{F}$, and state function $G$ defined in \thref{SM_Q}, \thref{relaxed_F_Q} and Eq. \eqref{eq:G}, starting from $\tilde{\mathcal{A}}_{1}=\tilde{F}(A_{0}=(\langle\emptyset\rangle,\emptyset,\langle\emptyset\rangle))$, the stabilizer ground state of 1D local Hamiltonians can be determined by \thref{relaxed_gs}.
\end{cor}

\subsubsection{Computational complexity \label{sec:complexity}}

We estimate the computational scaling of the exact 1D local algorithm, which is controlled by $|\tilde{\mathcal{A}}_{m}|$. For the similar reason mentioned in Sec. \ref{sec:general}, it does not have a simple expression. A loose upper bound is provided in the following and its detailed proof is given in Appendix \ref{appendix:proof_local_nS}.
\begin{thm}
\thlabel{local_nS} (Upper bound of $|\tilde{\mathcal{A}}_{m}|$) Let $H=\sum_{P\in\boldsymbol{P}}w_{P}P$ be a $k$-local Hamiltonian, and there exists $M$ such that $|\boldsymbol{P}_{m}|\leq M$ for each $m$. Then for any $m$, we can construct candidate values of $A_{m}$ as $\tilde{\mathcal{A}}_{m}^{\prime}\supseteq\tilde{\mathcal{A}}_{m}$ solely from $\mathbb{T}_{m-2k+1,m}(\tilde{\boldsymbol{P}})$, such that $|\tilde{\mathcal{A}}_{m}^{\prime}|<N_{A}=(4kM)^{3k}=\exp(3k\log4kM)$.
\end{thm}
We rewrite $|\tilde{\mathcal{A}}_{m}|\sim O(\exp(Ck\log kM))$ for simplicity and further analysis. The total cost spent on each site m is bounded by the product of (1) $|\tilde{\mathcal{A}}_{m}|$, (2) $|\tilde{\mathcal{A}}_{m+1}|$, and (3) time to compute a single $A_{m+1}\in\tilde{\mathcal{A}}_{m+1}$ from a single $A_{m}\in\tilde{\mathcal{A}}_{m}$. Since all stabilizer state operations used in the algorithm can be realized in polynomial scaling of $k$, the total cost scales as
\[
\begin{aligned}T & \sim n\times O(\exp(Ck\log kM))\times O(\exp(Ck\log kM))\times O(\text{poly}(k))\\
 & \sim O(n\exp(C^{\prime}k\log kM))
\end{aligned}
\]
for some constant $C^{\prime}$. Similar to \thref{def:sparse}, we consider the 1D local and sparse Hamiltonians defined as follows:
\begin{defn}
\thlabel{def:local_sparse} (local sparse Hamiltonians) For a 1D $k$-local Pauli Hamiltonian $H=\sum_{P\in\boldsymbol{P}}w_{P}P$, we say that it is \textbf{sparse} if $|\boldsymbol{P}_{m}|\sim O(\text{poly}(k))$.
\end{defn}
The conclusions $T\sim O(n\exp(C^{\prime}k\log kM))$ and $M\sim O(\text{poly}(k))$ lead to the final computation complexity as:
\begin{cor}
\thlabel{local_nS_scaling} (Computational complexity of the exact 1D local algorithm) The computational cost $T$ to obtain the stabilizer ground state of a $n$-qubit, $k$-local sparse Hamiltonian is $T\sim O(n\exp(Ck\log k))$ for some constant $C$.
\end{cor}

\subsection{Stabilizer ground states of infinite periodic Hamiltonians} \label{sec:periodic}

In this section, the stabilizer ground state problem of infinite periodic Hamiltonians (referred to as periodic Hamiltonians) is discussed.
We will show that, for any 1D periodic local Hamiltonian, the stabilizer ground state also has periodic stabilizers in some supercells.
We also show that, the stabilizer ground states of 1D periodic local Hamiltonians can be similarly obtained by the state machine solution in the exact 1D algorithm with an additional periodic boundary condition of state machines in a supercell.
This algorithm is referred as the exact 1D periodic local algorithm.
For general periodic Hamiltonians in higher dimensions, we conjecture that the stabilizer ground states should still have periodic stabilizers.
With this assumption, the formalism in Sec.~\ref{sec:general} is extended to general periodic Hamiltonians.

We first review the properties of the eigenstates of periodic Hamiltonians on an infinitely long 1D lattice. 
Let $T$ be an operator to translate a given $|\psi\rangle$ by some fixed number of sites, and $H$ be a Hamiltonian satisfying $[H,T]=0$. 
Bloch's theorem~\cite{bloch1929quantenmechanik} states that the eigenstates $|\psi\rangle$ of $H$ can be classified
by the eigenvalue of $T$ via $T|\psi\rangle=e^{i\phi}|\psi\rangle$ since $H$ and $T$ can be simultaneously diagonalized.
In numerical treatments, a supercell with size $L$ is usually introduced
and $\phi$ can take discrete values $\phi=2\pi\frac{j}{L}$ for integers
$0\leq j<L$ \cite{kratzer2019basics,martin2020electronic}.

Now we consider stabilizer states in the qubit space. Let $T_{l}$ be the operator to translate qubit $q$ to $q+l$ for any $q$.
A Hamiltonian $H=\sum_{P\in\boldsymbol{P}}w_{P}P$ is defined to be invariant
under $T_{l}$ if $P^{\prime}=T^{\dagger}PT$
satisfies $P^{\prime}\in\boldsymbol{P}$ and $w_{P^{\prime}}=w_{P}$ for any $P\in\boldsymbol{P}$.
However, the stabilizer ground state $|\psi\rangle$ of $H$ might not be an
eigenstate of $T_{l}$, i.e., $T_{l}|\psi\rangle=\lambda|\psi\rangle$
for some $\lambda$. An example is $H=H_0+\epsilon H_I$, where $H_0=-\sum_{n}(X_{3n}X_{3n+1}X_{3n+2}+Z_{3n}Z_{3n+1}Z_{3n+2})$ and $H_I=-\sum_{n}(Z_{3n-1}X_{3n}+X_{3n-1}Z_{3n})$, and $H$ thus has a period of 3.
At $\epsilon=0$, $H$ can be divided into independent subsystems $\{3n,3n+1,3n+2\}$ for each $n$, and each subsystem has degenerate stabilizer ground states with stabilizers $X_{3n}X_{3n+1}X_{3n+2}$ and $Z_{3n}Z_{3n+1}Z_{3n+2}$, respectively. 
At $0<\epsilon\ll1$,
the interaction term $H_I$ breaks the degeneracy, and the stabilizers of
the stabilizer ground state become alternating $X_{3n}X_{3n+1}X_{3n+2}$ and $Z_{3n}Z_{3n+1}Z_{3n+2}$.
This system has a period of 6 and thus does not satisfy $T_{l=3}|\psi\rangle=\lambda|\psi\rangle$. 

Now let us treat the infinite periodic Hamiltonians as $n$-qubit Hamiltonians with $n=\infty$. According to Sec. \ref{sec:1D}, we can efficiently determine the ground state by constructing the state machines $\{A_{m}\}$ from the relaxed transition function $\tilde{F}$, so we get an infinitely long state machine chain $\{A_{m=-\infty}^{\infty}\}$. We introduce an equivalence condition $A\simeq B$, if $A$ and $B$ differ only by translation of some $cl$ lattices, where $c\in\mathbb{Z}$, e.g. $\langle Z_{1},Z_{2}\rangle\simeq\langle Z_{7},Z_{8}\rangle$ with $l=3$. The following theorem presents that the state machine chain $\{A_{m=-\infty}^{\infty}\}$ corresponding to the stabilizer ground state is periodic with some period $cl$, $c\in\mathbb{Z}$. The proof is given in Appendix \ref{appendix:proof_periodic}.
\begin{thm}
\thlabel{periodic_SM} (State machines of CMCS of infinitely periodic Hamiltonians are periodic) Let $H=\sum_{P\in\boldsymbol{P}}w_{P}P$ be an infinitely periodic 1D local sparse Hamiltonian with period $l$, and $\boldsymbol{Q}^{\star}\in\mathscr{S}(\boldsymbol{P})$ be the CMCS of $H$. Let $A_{m}=A_{m}(\boldsymbol{Q}^{\star})$, there exists some $c<N_{A}$ such that $A_{m+cl}\simeq A_{m}$ for any $m$, where $N_{A}$ is given in \thref{local_nS}. Furthermore, the stabilizer ground state energy per site is
\begin{equation}
E_{\text{gs}}^{\text{periodic}}=\min_{c\leq N_{A}}\min_{\{A_{m=0}^{cl}|A_{i+1}\in\tilde{F}(A_{i}),A_{0}\simeq A_{cl}\}}\frac{1}{cl}\sum_{m=1}^{cl}h_{m}(G(A_{m})),\label{eq:gs_periodic}
\end{equation}
which involves a periodic boundary condition $A_{0}\simeq A_{cl}$ , and the boundary state machine must be one of the candidate values in \thref{local_nS}.
\end{thm}
For a fixed supercell size $c$ and boundary state machine $A_{0}$, one can perform the algorithm in \thref{classical_gs} with an additional final state restriction of $A_{cl}\simeq A_{0}$ to obtain the chain $\{A_{0},A_{1},...,A_{cl}\}$ with the lowest energy. With additional loops over $c$ and the boundary state machine $A_{0}$, one can find the stabilizer ground state with a linear scaling of $l$. This algorithm is referred to as the \textbf{exact 1D periodic local algorithm}. Besides, \thref{periodic_SM} implies that the CMCS $\boldsymbol{Q}^{\star}$ is also periodic, and thus the stabilizer ground state $|\psi\rangle$ is also periodic, i.e. $T_{cl}|\psi\rangle=|\psi\rangle$.

We then move on to discuss higher-dimensional Hamiltonians. 
Although we are not able to theoretically prove that the stabilizer ground state $|\psi\rangle$ satisfies $T_{cl}|\psi\rangle=|\psi\rangle$ for translation operators $T_{cl}$ in each dimension, we speculate that at least some approximated (if not exact) stabilizer ground state satisfies such condition due to its similarity to the supercell treatments of
exact eigenstates. 

With the assumption of $T_{cl}|\psi\rangle=|\psi\rangle$, we present the stabilizer ground state theory for general infinite periodic Hamiltonians.
For any stabilizer $P$ with $P|\psi\rangle=|\psi\rangle$, let $P^{\prime}=T_{cl}^{\dagger}PT_{cl}$,  we have $P^{\prime}|\psi\rangle=|\psi\rangle$, and thus $P^{\prime}$ is also a stabilizer of $|\psi\rangle$. 
We only need to consider those $\boldsymbol{Q}\in\mathscr{S}(\boldsymbol{P})$ such that $\boldsymbol{Q}=T_{cl}^{\dagger}\boldsymbol{Q}T_{cl}$.
(Roughly speaking, we only need to determine the stabilizers $\boldsymbol{Q}$ in a single supercell and then copy it to others)
Strictly, we define the \textbf{closed commuting periodic subsets} (CCPS) as
\begin{equation}
\mathscr{S}_{c}(\boldsymbol{P})=\{\boldsymbol{Q}\subseteq\tilde{\boldsymbol{P}}|\boldsymbol{Q}=\langle\boldsymbol{Q}\rangle\cap\tilde{\boldsymbol{P}},-I\notin\langle\boldsymbol{Q}\rangle,\boldsymbol{Q}=T_{cl}^{\dagger}\boldsymbol{Q}T_{cl}\}.
\end{equation}
Since $[Q,T_{cl}^{\dagger}QT_{cl}]=0$ for any $Q\in\boldsymbol{Q}$, $\boldsymbol{Q}\in\mathscr{S}_{c}(\boldsymbol{P})$, a na\"ive but useful simplification of $\mathscr{S}_{c}(\boldsymbol{P})$ is
\begin{equation} \label{eq:periodic_simplification}
\mathscr{S}_{c}(\boldsymbol{P})=\mathscr{S}_{c}(\boldsymbol{P}^{\prime}),
\end{equation}
where $\boldsymbol{P}^{\prime}=\{P\in\boldsymbol{P}|[P,T_{cl}^{\dagger}PT_{cl}]=0\}$.
Consider the example $H=\sum_{i}Z_{i}X_{i+1}$ which has period $l=1$. If the supercell size is $c=1$, then according to $\{Z_{i}X_{i+1},Z_{i+1}X_{i+2}\}=0$, we have $\boldsymbol{P}^{\prime}=\emptyset$ and thus $\mathscr{S}_{c=1}(\boldsymbol{P})=\{\emptyset\}$.
This indicates that $c=1$ is not a good choice.
If the supercell size is $c=2$, we could, for example have $\boldsymbol{Q}=\{Z_{2i}X_{2i+1}|i\in\mathbb{Z}\}$.
Finally, the stabilizer ground state is given by
\begin{equation} \label{eq:periodic_stab_gs}
E_{\text{gs}}=\min_{c, \boldsymbol{Q}\in\mathscr{S}_{c}(\boldsymbol{P})}E_{\text{stab}}(H,\langle\boldsymbol{Q}\rangle).
\end{equation}
The $\boldsymbol{Q}\in\mathscr{S}_{c}(\boldsymbol{P})$ minimizing Eq.~\eqref{eq:periodic_stab_gs} is referred as the \textbf{closed maximally-commuting periodic subset} (CMCPS) of $\tilde{\boldsymbol{P}}$ (or $H$).
For physically reasonable Hamiltonians, it might be enough to search for the minimum stabilizer ground state energy by checking a few small $c$.

\subsection{Applications analysis of stabilizer ground states} \label{sec:applications}

In the following, we discuss the potential applications of the concept and algorithms of stabilizer ground states in simulating many-body physics problems from the aspects of both quantum and classical algorithms. The stabilizer ground states are also compared with other common ground states ansatzes, e.g. the mean-field state, and matrix product states (MPS). Since the concept of "mean-field" has different definitions in different situations, we only consider the product states \cite{gharibian2012approximation,anshu2020beyond}, which is a typical example of mean-field states in the qubit space. However, a similar analysis also works for many other cases. We will also see that, although stabilizer states, product states, and MPS each have their own advantages depending on the physical system, their strengths are not mutually exclusive. In fact, one can seamlessly combine stabilizer states with product states or MPS to leverage the benefits of both, especially for systems that exhibit features characteristic of both classes.

Stabilizer ground states could serve as good candidates for initial states in terms of quantum algorithms. Specifically, we consider two quantum algorithms for many-body ground state problems, i.e., the variational quantum eigensolver (VQE) \cite{cerezo2022variational,tilly2022variational} and quantum phase estimation (QPE) \cite{kitaev1995quantum,kitaev2002classical}, which are the most widely studied algorithms in the noisy intermediate-scale quantum (NISQ) \cite{preskill2018quantum} and fault-tolerant quantum computation (FTQC) \cite{shor1996fault} era, respectively. 
For VQE, a notorious issue is the barren plateau problem \cite{cerezo2021cost,mcclean2018barren,Zhang2023b_z},
implying that the energy gradients with respect to the circuit parameters are exponentially small except in an exponentially small region, and such behavior resulting in the failure of classical optimizations \cite{mcclean2018barren}.
Thus, starting from a physically valid state, such as an approximated ground state in the case of stabilizer ground state, serves as the easiest and the most straightforward solution to mitigate the barren plateau problem. In terms of the QPE algorithm, the output is a random eigenstate. Let the initial state be $|\psi\rangle_{\text{init}}$, the probability of getting each eigenstate $|\psi_{k}\rangle$ follows Born rule $|\langle\psi_{\text{init}}|\psi_{k}\rangle|^{2}$. For randomly chosen $|\psi_{\text{init}}\rangle$, $|\langle\psi_{\text{init}}|\psi_{k}\rangle|^{2}$ also decays exponentially with the system size \cite{lee2023evaluating}. Thus, an initial state having a large overlap with the exact ground state is also vital for the success of QPE. In summary, for both VQE and QPE, an initial state close to the ground state is desired, where stabilizer ground states and the corresponding efficient algorithms serve as good candidates. \cite{robin2025stabilizer}

One of the main advantages of the stabilizer states for initial states on quantum computers is its ability to be efficiently prepared on quantum circuits with up to $O(n^{2}/\log n)$ single-qubit and double-qubit Clifford gates \cite{aaronson2004improved}. Such advantage is further magnified on fault-tolerant quantum computers, where the computational cost is dominated by non-Clifford operations via, e.g., magic state distillation \cite{campbell2017roads}. Besides, we prove in the following that, the gate count number can be further reduced to $O(nk/\log k)$ for the preparation of stabilizer ground states of 1D $k$-local Hamiltonians (assuming the stabilizer ground state is not highly degenerate).
\begin{thm}
\thlabel{local_preparation} (Decomposed Clifford transformations of local stabilizer groups) Given a $n$-qubit stabilizer group $\boldsymbol{S}$, if there exists independent generators $\boldsymbol{P}=\{P_{1},...,P_{n}\}$ (i.e. $\boldsymbol{S}=\langle\boldsymbol{P}\rangle$) that each $P_{i}$ is $k$-local, then we can find up to $L\sim O(nk/\log k)$ single-qubit and double-qubit Clifford transformations $U_{1},...,U_{L}$, such that the combined Clifford transformation $U=\prod_{i=L}^{1}U_{i}$ satisfies $U^{\dagger}\boldsymbol{S}U=\langle Z_{1},Z_{2},...,Z_{n}\rangle$. Furthermore, if there are up to $s$ elements in $\{P_{i}\}$ that are not $k$-local, the above conclusion still holds with $L\sim O(nk^{\prime}/\log k^{\prime})$, where $k^{\prime}=k+s$.
\end{thm}
\begin{proof}
See Appendix \ref{appendix:proof_local_preparation}. The corresponding algorithm is given in Algorithm \ref{alg:local_preparation}.
\end{proof}
\begin{cor} \thlabel{stab_gs_preparation} (Quantum state preparation of stabilizer ground states of 1D local Hamiltonians) Let $H=\sum_{P\in\boldsymbol{P}}w_{P}P$ be a $n$-qubit , $k$-local 1D local Hamiltonian, and the corresponding CMCS is $\boldsymbol{Q}^{\star}\in\mathscr{S}(\boldsymbol{P})$. According to \thref{sparse_gs} the stabilizer ground states are those stabilizer states stabilized by $\langle\boldsymbol{Q}^{\star}\rangle$. If $\langle\boldsymbol{Q}^{\star}\rangle$ has $n-s$ independent generators, then each of the corresponding ground states can be prepared on quantum circuits with up to $L\sim O(nk^{\prime}/\log k^{\prime})$ single-qubit and double-qubit Clifford operations, where $k^{\prime}=k+s$. Specifically, if the stabilizer ground state is non-degenerate, one needs $L\sim O(nk/\log k)$ operations.
\end{cor}
\begin{proof}
Since each $Q\in\boldsymbol{Q}^{\star}$ is $k$-local, $\boldsymbol{S}=\langle\boldsymbol{Q}^{\text{\ensuremath{\star}}}\rangle$ has $n-s$ $k$-local and independent generators, thus it meets the requirement of \thref{local_preparation}, and the conclusion applies. Since each (degenerate) stabilizer ground state satisfies $P|\psi\rangle=|\psi\rangle$ for $P\in\boldsymbol{S}$, we have $U^{\dagger}PU(U^{\dagger}|\psi\rangle)=(U^{\dagger}|\psi\rangle)$. Since $U^{\dagger}\boldsymbol{S}U=\langle Z_{1},Z_{2},...,Z_{n}\rangle$, we have $|\psi^{\prime}\rangle=U^{\dagger}|\psi\rangle=|0\rangle^{\otimes n}$, or $|\psi\rangle=U|0\rangle^{\otimes n}$, i.e. $|\psi\rangle$ can be prepared by $U$, which contains $L$ single and double-qubit Clifford operations. When and only when $s=0$, the stabilizer ground state is unique, thus the non-degenerate stabilizer ground state implies $L\sim O(nk/\log k)$.
\end{proof}
We also compare it with product states and MPS in the following. Since product states can be trivially prepared in $O(1)$ circuit depth and $O(n)$ gates, we mainly discuss the preparation of MPS. MPS is a powerful common ground state ansatz especially for 1D local and gapped systems as the ground state satisfies the area-law entanglement. It has also been shown that an $n$-site MPS with bond dimension $\chi$ can be prepared on quantum computers with $n$ numbers of ($\log_{2}\chi+1$)-qubit unitaries \cite{schon2005sequential,perez2006matrix}. Since $m$-qubit gate can be decomposed to $O(4^{m})$ single-qubit and double-qubit gates \cite{barenco1995elementary}, this gives a total number of $O(n\chi^{2})$ single-qubit and double-qubit gates. (Recent work \cite{malz2024preparation} shows that the circuit depth can be reduced to $O(\log n)$ or $O(\log \log n)$ but with the price of higher scaling in terms of $\chi$, typically $O(\chi^4)$ or $O(\chi^6)$). In the worst case, MPS requires bond dimension $\chi=2^{k}$ to describe a $(k+1)$-local stabilizer state, which leads to an exponentially higher cost of circuit preparation in terms of dependence on the locality $k$. One example is the rainbow state $|\psi\rangle=\sum_{i_{1}....i_{k}}|i_{1}...i_{k}i_{1}...i_{k}\rangle$, which is essentially the product of $k$ bell pairs on qubits $(i,k+i)$, $i=1,...k$.

Even in systems where area-law entanglement holds and MPS or product states are effective—such as 1D gapped Hamiltonians—stabilizer ground states can still provide complementary benefits. One can exploit this by combining stabilizer states with MPS or product states in either the Schrödinger picture (e.g., treating stabilizer states as basis) or the Heisenberg picture (e.g., applying Clifford transformations to simplify the Hamiltonian). Hybrid approaches often offer more expressive power, particularly in systems that exhibit both short-range correlations and stabilizer-like structures. In Sec.~\ref{sec:extended}, we present an example where product states are used as basis transformations to extend stabilizer ground states, which we apply to describe the toric code model in the presence of external fields. Recent works further support the utility of hybrid strategies between stabilizer states and MPS or tensor network. For example, stabilizer tensor networks~\cite{masot2024stabilizer} use stabilizer states as local bases within tensor networks, and Clifford-augmented DMRG~\cite{qian2024augmenting} applies sequential Clifford transformations to enhance MPS representations. In the former case, stabilizer ground states naturally appear as a physical basis for ground state descriptions. In the latter, stabilizer ground states can be used to construct a Clifford transformation \( U_C \) such that \( U_C |0\rangle^{\otimes n} = |\psi_{\text{stab gs}}\rangle \), thereby simplifying the Hamiltonian before applying MPS methods. The various ways in which stabilizer states, product states, and MPS can be combined are illustrated schematically in Fig.~\ref{fig:venn}.

\begin{figure}[thb]
    \centering    \includegraphics[width=0.7\textwidth]{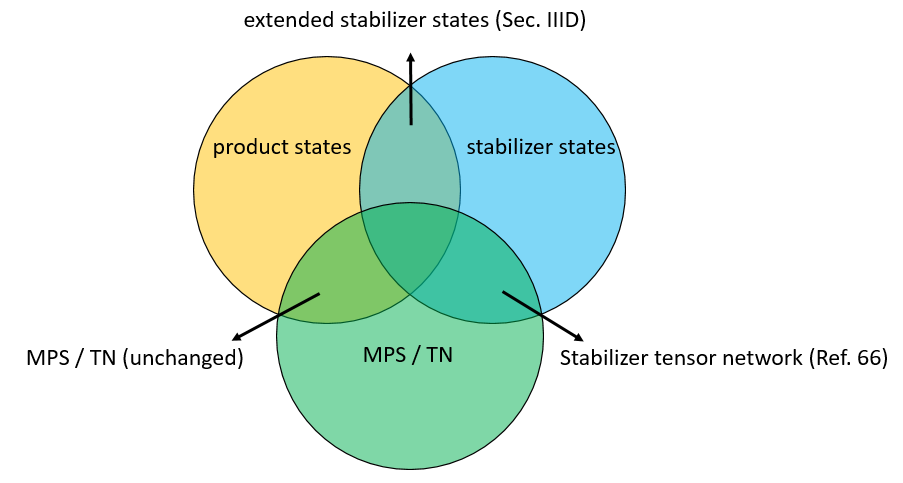}
    \caption{Conceptual relationships among stabilizer states, product states, and matrix product states (MPS) or tensor networks (TN). 
    The intersection of stabilizer states and product states corresponds to \emph{extended stabilizer states} (see Sec.~\ref{sec:extended}), i.e., stabilizer states expressed in rotated local bases. 
    The intersection of stabilizer states and tensor networks corresponds to \emph{stabilizer tensor networks}~\cite{masot2024stabilizer}, where stabilizer states serve as local building blocks of the network. 
    The intersection of product states and MPS/TN is trivially contained within MPS/TN, since product states correspond to MPS with bond dimension one and MPS naturally allow local basis rotations.}
    \label{fig:venn}
\end{figure}

In the context of classical algorithms, stabilizer ground states can be incorporated into both variational and perturbative methods. On the variational side, similar to their role in mitigating barren plateaus in VQE, stabilizer ground states can serve as building blocks for classical ansatzes where energy expectation values must be efficiently computable. While most classical variational methods operate in the computational basis—such as product states, MPS, or variational Monte Carlo (VMC)—the stabilizer basis provides a natural generalization. For example, VMC ansatzes can be extended from the form \( |\psi\rangle = \sum_{\boldsymbol{i}} \psi_{\boldsymbol{i}}(\boldsymbol{\theta}) |\boldsymbol{i}\rangle \) to  
\begin{equation}
|\psi\rangle = \sum_{\psi_{\text{stab}}} C(\psi_{\text{stab}}; \boldsymbol{\theta}) |\psi_{\text{stab}}\rangle,
\end{equation}
where the sum runs over all or a subset of stabilizer states, and \( C(\psi_{\text{stab}}; \boldsymbol{\theta}) \) is a parameterized coefficient function. Since overlaps \( \langle \psi_{\text{stab}}' | H | \psi_{\text{stab}} \rangle \) can be efficiently computed for Pauli Hamiltonians, such stabilizer-based VMC methods may be especially suitable for capturing long-range or strongly entangled features beyond those well-described by the computational basis.

Stabilizer ground states also offer a useful tool in perturbative approaches. These methods typically split the Hamiltonian as \( H = H_0 + H_1 \), where \( H_0 \) is exactly solvable and \( H_1 \) is treated as a perturbation. As discussed in Sec.~\ref{sec:general}, our algorithm identifies a CMCS \( \boldsymbol{Q}^\star \subset \boldsymbol{P} \) for a given Pauli Hamiltonian \( H = \sum_{P \in \boldsymbol{P}} w_P P \). This yields a natural decomposition where \( H_0 = \sum_{P \in \boldsymbol{P} \cap \pm \boldsymbol{Q}^\star} w_P P \), with the stabilizer ground state being the exact ground state of \( H_0 \). The remaining terms define \( H_1 \), and their effects can be treated perturbatively. This approach allows the stabilizer ground state algorithm to not only provide an approximate solution but also guide the \textit{optimal partitioning} of the Hamiltonian for perturbative expansion.
A related recent direction is the development of doped stabilizer states~\cite{gu2024doped}, which utilize the stabilizer nullity structure of a Hamiltonian for efficient approximation. Our method is complementary in that it provides a systematic framework to extract stabilizer-aligned structure even in non-stabilizer Hamiltonians.

\section{Results} \label{sec:results}

In this section, we first perform a few benchmarks on the exact 1D local algorithm, including the computational cost in Sec.~\ref{sec:cost} and the comparison with numerically optimized approximated stabilizer ground states in Sec.~\ref{sec:comparison}.
Furthermore, we demonstrate a few potential applications for stabilizer ground states and the corresponding algorithms on (1) simple qualitative analysis of phase transitions in Sec.~\ref{sec:phase}, (2) serving as the cornerstone of developing advanced ground state ansatzes in Sec.~\ref{sec:extended}, and (3) generation of initial states for VQE problems for better performance in Sec.~\ref{sec:vqe}.

All algorithms introduced in this work are implemented in both Python and C++ in \url{https://github.com/SUSYUSTC/stabilizer_gs}.
The Python code is presented for concept illustration and readability, and the C++ code is used for optimal performance with a simple parallelization.

\subsection{Computational cost of the exact 1D local algorithm} \label{sec:cost}

The scaling of the computation time for this exact 1D local algorithm only has a loose theoretical upper bound (\thref{local_nS} and \thref{local_nS_scaling}) and lacks an exact analytical formula. 
Therefore, we implement the algorithm in C++ and numerically benchmark the computational time.
All corresponding timings are collected on an 8-core i7-9700K Intel CPU.

We consider the following stochastic $k$-nearest Heisenberg model as the example Hamiltonian:
\begin{equation} \label{eq:stochastic_heisenberg}
    H = \sum_{i=1}^n \sum_{j=i+1}^{i+k-1} J^{xx}_{ij} S^x_i S^x_j + J^{yy}_{ij} S^y_i S^y_j + J^{zz}_{ij} S^z_i S^z_j
\end{equation}
with each $J^{xx}_{ij},J^{yy}_{ij},J^{zz}_{ij}\sim \mathcal{N}(0, 1)$, i.e. all these coupling coefficients independently follow the normal distribution. We also consider the case of $J^{zz}_{ij}=0$, $J^{xx}_{ij},J^{yy}_{ij}\sim \mathcal{N}(0, 1)$ for comparisons. The two models are referred to as \{XX,YY,ZZ\} and \{XX,YY\}, respectively. 

Following the procedure of the exact 1D local algorithm, all the possible values of $\{A_m\}$ with $m=1,2,...n+1$ are generated sequentially.
In Figure~\ref{fig:timing}(a), the computational time of generating $\{A_{m+1}\}$ from $\{A_m\}$ are plotted as a function of site $m$ for both the \{XX,YY,ZZ\} and \{XX,YY\} models with $n=25$ and $k=5$.
Except for a few sites near the boundaries, the wall-clock time spent at each site is almost a constant for both models.
This verifies that the computational cost of the 1D local algorithm scales as $O(n)$, as proved in \thref{local_nS}.
Due to the smaller number of Pauli terms in the Hamiltonian, the computational cost of the \{XX,YY\} model is systematically lower than the \{XX,YY,ZZ\} model.

\begin{figure}[thb]
    \centering    \includegraphics[width=1\textwidth]{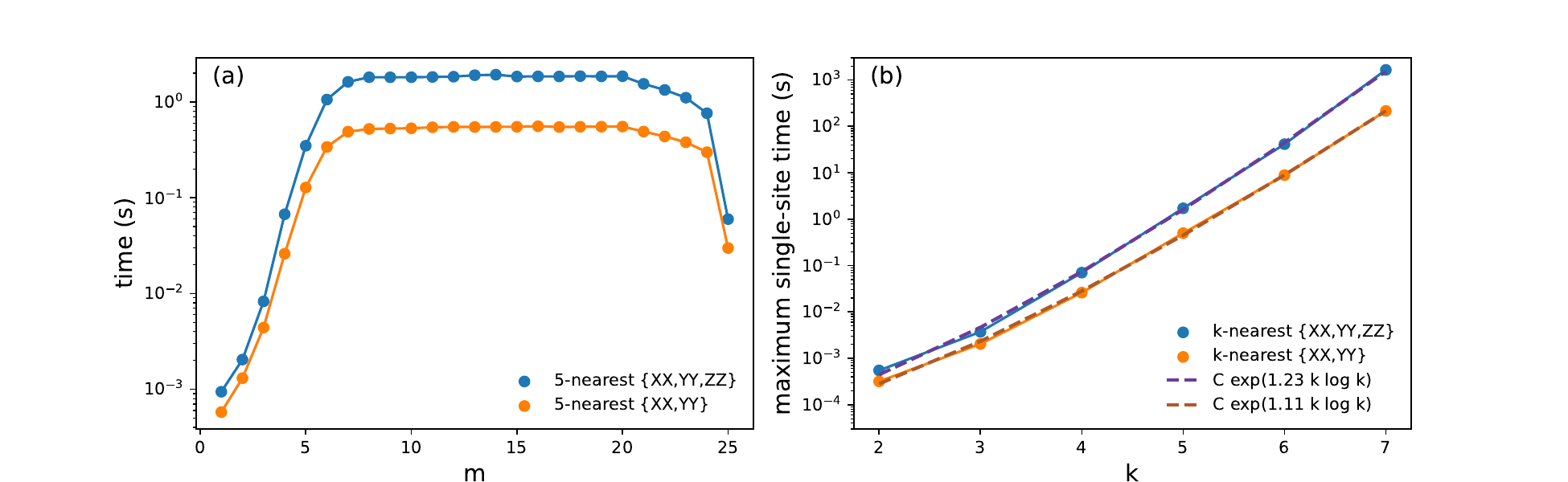}
    \caption{Wall-clock time of the 1D local algorithm on the \{XX,YY,ZZ\} and \{XX,YY\} model (see the main text). (a) Computational time spent on each site $m$ for the two models. (b) Maximum single-site computational time (solid lines) as a function of $k$ for the two models, and fitted curves (dashed lines) with the form of $C \exp (C^\prime k \log k)$.}
    \label{fig:timing}
\end{figure}

After showing that the time spent at each site is a constant except for the sites near the boundaries, we further discuss the scaling of the maximum single-site running time as a function of the locality parameter $k$ for different types of Hamiltonians. 
Figure~\ref{fig:timing}(b) displays the maximum wall-clock time of a single site as a function of $k$ for both models. 
We assume that the form of the scaling function is $C \exp (C^\prime k \log k)$ according to \thref{local_nS_scaling}.
The numerical scaling functions are fitted independently for two models in Figure~\ref{fig:timing}(b).
The resulting fitted scaling curves have the parameters $C^\prime=1.23$ and $C^\prime=1.11$ for the \{XX,YY,ZZ\} and \{XX,YY\} models, respectively, and match the true timing data well. This indicates that the computation time of different Hamiltonians within the same class scales similarly.

\subsection{Comparison with numerical discrete optimizations of stabilizer ground states} \label{sec:comparison}


We demonstrate that numerical optimizations of stabilizer ground states are not scalable and lead to unacceptable energy errors with an increasing number of qubits.
The numerical optimization of stabilizer ground states can be performed by discrete optimizations of the Clifford circuits representing stabilizer states \cite{CAFQA}.

Here, we still use the stochastic $k$-nearest Heisenberg Hamiltonian in Eq.~\eqref{eq:stochastic_heisenberg} (the \{XX,YY,ZZ\} model) as an example.
The Clifford ansatz employed here modifies the hardware-efficient Clifford ansatz in Ref.~\cite{CAFQA} by generalizing the single-qubit Clifford rotations to all single-qubit Clifford operations (24 unique choices in total) \cite{koenig2014efficiently}.
The simulated annealing algorithm is used in the discrete optimization with an exponential decay of temperature from 5 to 0.05 in 2500 steps.
In each step of the simulated annealing, one of the single-qubit Clifford operations is randomly selected and replaced with one of the 24 operations, and the move is accepted with a probability of $\min(\exp(-\Delta E / T), 1)$, where $\Delta E$ is the energy difference.
\begin{figure}[thb]
    \centering
    \includegraphics[width=0.5\textwidth]{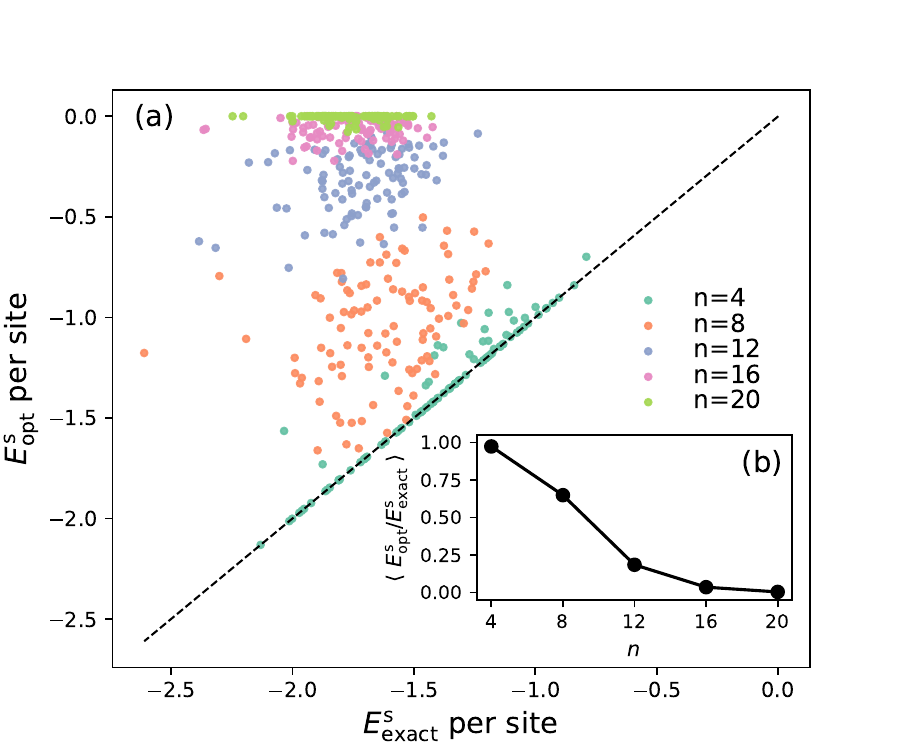}
    \caption{Stabilizer ground state energies of the stochastic $k$-nearest Heisenberg model obtained by the exact 1D local algorithm ($E_\text{exact}^\text{s}$) and the numerical simulated annealing optimization algorithm ($E_\text{opt}^\text{s}$). (a) $E_\text{exact}^\text{s}$ per site versus $E_\text{opt}^\text{s}$ per site with $n=4,8,12,16,20$ and $k=4$. The black dashed line corresponds to $E_\text{exact}^\text{s}=E_\text{opt}^\text{s}$. (b) The mean relative energy error of stabilizer ground state energy captured by the numerical optimization algorithm ($
    \langle E_\text{opt}^\text{s}/E_\text{exact}^\text{s}\rangle$) versus $n$ with locality $k=4$.}
    \label{fig:opt}
\end{figure}

Figure \ref{fig:opt}(a) compares the stabilizer ground state energies obtained from the exact 1D local algorithm ($E_\text{exact}^\text{s}$) and the numerical optimization algorithm ($E_\text{opt}^\text{s}$).
For each $n$, 100 random Hamiltonians are tested.
For every single test, the numerically optimized ground state energy is either equal to or higher than the exact stabilizer ground state energy.
With increasing $n$, the success probability of numerical optimization that results in accurate stabilizer ground state energies decreases and $E_\text{opt}^\text{s}$ approaches zero. 
This indicates that the numerical discrete optimization cannot correctly obtain the stabilizer ground state due to the exponential scaling of the number of stabilizer states and the number of possible Clifford circuits.
Figure~\ref{fig:opt}(b) displays the quantitative statistics of the performance degradation speed of numerical optimization by plotting the averaged relative stabilizer ground state energy $\langle E_\text{opt}^\text{s}/E_\text{exact}^\text{s} \rangle$ versus the number of sites $n$ with $k=4$.
A rapid decay of the energy ratio is observed from 97.4\% at $n=4$ to 0.4\% at $n=20$. 
Therefore, the optimization method fails to bootstrap large-scale variational quantum algorithms via stabilizer initializations as claimed in Ref. \cite{CAFQA} and the challenge is fully solved by our new algorithm at least in the 1D case.

\subsection{Qualitative analysis of phase transitions} \label{sec:phase}

\begin{figure}[hbt]
    \centering
    \includegraphics[width=0.7\linewidth]{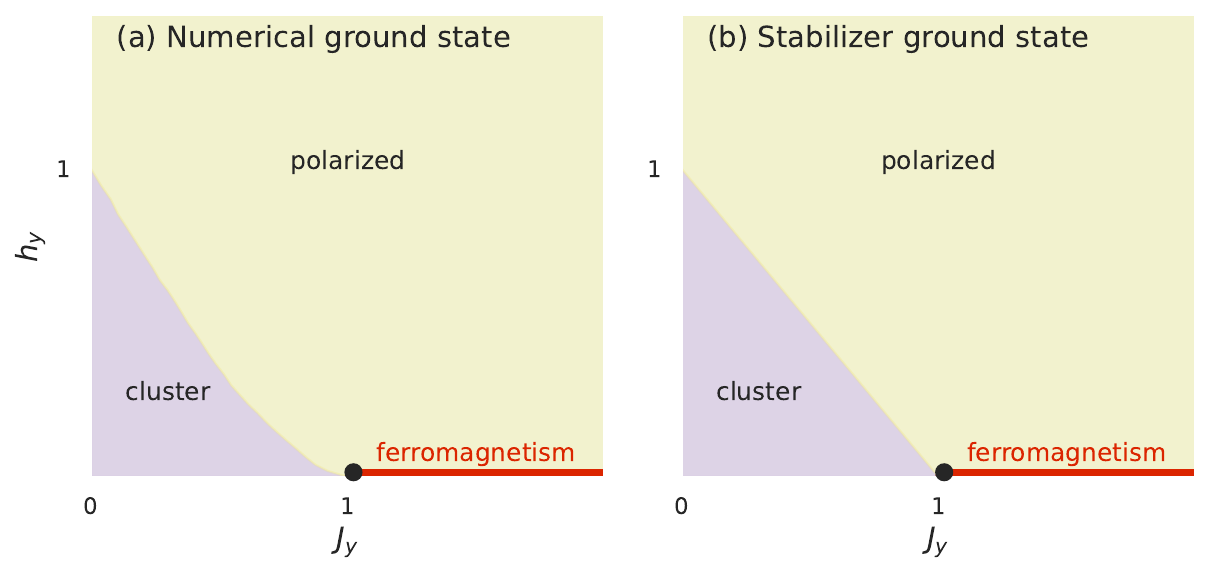}
    \caption{Ground state phase diagram of the Hamiltonian in Eq.~\eqref{eq:alpha-H} obtained by (a) numerical DMRG calculations in Ref.~\cite{PT_example} and (b) stabilizer ground state calculations via the exact 1D periodic local algorithm. There are three phases, including the cluster phase, the polarized phase, and the ferromagnetism phase, for both cases.}
    \label{fig:phase_cluster}
\end{figure}

Similar to the mean-field states, stabilizer ground states can also qualitatively capture the phases and phase transitions in many interesting systems. Specifically, stabilizer ground states are good at capturing topological phases with long-range entanglements.
This capability is demonstrated using an infinite 1D generalized cluster model~\cite{PT_example} as an example, whose Hamiltonian is
\begin{equation} \label{eq:alpha-H}
H=\sum_{n=-\infty}^\infty-X_{n-1}Z_{n}X_{n+1}-J_{y}Y_{n}Y_{n+1}+h_{y}Y_{n}.
\end{equation}

This model is equivalent to the free fermion model at $h_{y}=0$, while it is not dual to any free fermion model at $h_{y}\neq 0$ due to the lack of $Z_{2}$ symmetry. 
This Hamiltonian has been studied by numerical density matrix renormalization group (DMRG) calculations in Ref.~\cite{PT_example}, and the corresponding phase diagram is replotted in Figure ~\ref{fig:phase_cluster}(a).
Three phases are observed in this phase diagram, including the symmetry-protected topological phase at small but positive $J_{y}$ and $h_{y}$, the polarized phase at $J_{y}\rightarrow\infty$, $h_{y}\rightarrow\infty$, and the ferromagnetic phase at $h_{y}=0,J_{y}>1$.

We apply the exact 1D periodic local algorithm to the Hamiltonian to obtain the stabilizer ground states of this model using different parameters $(J_{y},h_{y})$.
Calculations are performed with candidate supercell sizes $c\leq 6$, and the minimum energy is selected as the stabilizer ground state energy.
All possible types of distinct stabilizer ground states are listed in Table~\ref{table:phase}, and the corresponding phase diagram is plotted in Figure~\ref{fig:phase_cluster}(b). 
When comparing Figure~\ref{fig:phase_cluster} (a) and (b), the stabilizer ground state phase diagram matches the numerical ground state phase diagram well except for the shape of the boundary between the cluster phase and the polarized phase. The boundary predicted by stabilizer ground states is a straight line, while the numerical boundary is slightly curved. 
These agreements indicate that stabilizer ground states are useful to qualitatively understand phase transitions in quantum many-body systems and provide a new perspective compared to conventional mean-field approaches. 
The stabilizer ground state at the tricritical point $J_{y}=1,h_{y}=0$ is observed to have two new degenerate stabilizer ground states besides the stabilizer ground states in other phases.
These two new stabilizer ground states have stabilizers $\{X_{3n-1}Z_{3n}X_{3n+1},Y_{3n-1}Y_{3n},Y_{3n}Y_{3n+1}\}$ and $\{X_{3n-1}Z_{3n}X_{3n+1},X_{3n}Z_{3n+1}X_{3n+2},Y_{3n}Y_{3n+1}\}$, respectively.
\begin{table}

\caption{Stabilizer ground states of the Hamiltonian in Eq.~\eqref{eq:alpha-H} in different phases. Note that the conditions of $(J_y,h_y)$ do not strictly contradict each other, which indicates degeneracies of stabilizer ground states in the overlap regions (borders between phases or the tricritical point).}
\label{table:phase}

\begin{tabular}{|c|c|c|}
\hline 
\hline 
Stabilizers & $(J_{y},h_{y})$ & Phase\tabularnewline
\hline 
$\{X_{n-1}Z_{n}X_{n+1}\}$ & $J_{y}+h_{y}\leq1$ & Cluster\tabularnewline
$\{-Y_{n}\}$ & $J_{y}+h_{y}\geq1$,$h_{y}>0$ & Polarized\tabularnewline
$\{Y_{n}Y_{n+1}\}$ & $J_{y}\geq1,h_{y}=0$ & Ferromagnetism\tabularnewline
$\{X_{3n-1}Z_{3n}X_{3n+1},Y_{3n-1}Y_{3n},Y_{3n}Y_{3n+1}\}$ & $J_{y}=1,h_{y}=0$ & Tricritical point \tabularnewline
$\{X_{3n-1}Z_{3n}X_{3n+1},X_{3n}Z_{3n+1}X_{3n+2},Y_{3n}Y_{3n+1}\}$ & $J_{y}=1,h_{y}=0$ & Tricritical point \tabularnewline
\hline 
\hline 
\end{tabular}
\end{table}

\subsection{Extended stabilizer ground states} \label{sec:extended}

Stabilizer ground states can be used as a starting point to develop advanced numerical methods or quantum state ansatz for classical simulation.
As an illustration, we introduce the extended stabilizer ground state and demonstrate its capability of characterizing phase transitions of a 2D generalized toric code model.
The relation between computational basis states, stabilizer states, product states, and extended stabilizer states are discussed in Sec. \ref{sec:applications} and Fig. \ref{fig:venn}.
We first introduce a quantum state ansatz expressed as applying single-qubit rotations on some stabilizer states, i.e.
\begin{equation}
|\psi\rangle = U(\{\boldsymbol{\theta}_j\}) |\psi_\text{stab}\rangle = \prod_j e^{i \boldsymbol{\theta}_j \cdot \boldsymbol{S}_j} |\psi_\text{stab}\rangle,
\end{equation}
where $\boldsymbol{S}_j$ is the vector spin operator on the $j$th qubit.
We then define the extended stabilizer ground state by the state $|\psi\rangle$ with the lowest energy among all possible combinations of $\{\boldsymbol{\theta}_j\}$ and $|\psi_\text{stab}\rangle$.
Instead of directly finding the value of $\{\boldsymbol{\theta}_j\}$ and $|\psi_\text{stab}\rangle$ that minimizes the energy, which requires expensive discrete optimizations, we can effectively transform the Hamiltonian by
\begin{equation} \label{eq:H_transform}
    H \rightarrow H^\prime(\{\boldsymbol{\theta}_j\}) = U^\dagger(\{\boldsymbol{\theta}_j\}) H U(\{\boldsymbol{\theta}_j\}).
\end{equation}
The stabilizer ground state of the Hamiltonian $H^\prime(\{\boldsymbol{\theta}_j\})$ is thus a function of $\{\boldsymbol{\theta}_j\}$.
Since local Hamiltonians after single-site rotations remain local with the same localities $k$, such an extended stabilizer ground state formalism increases the expressive power without significantly complicating the problem, especially when each Pauli operator only nontrivially acts on a limited number of sites.

\begin{figure}[hbt]
    \centering
    \includegraphics[width=0.95\linewidth]{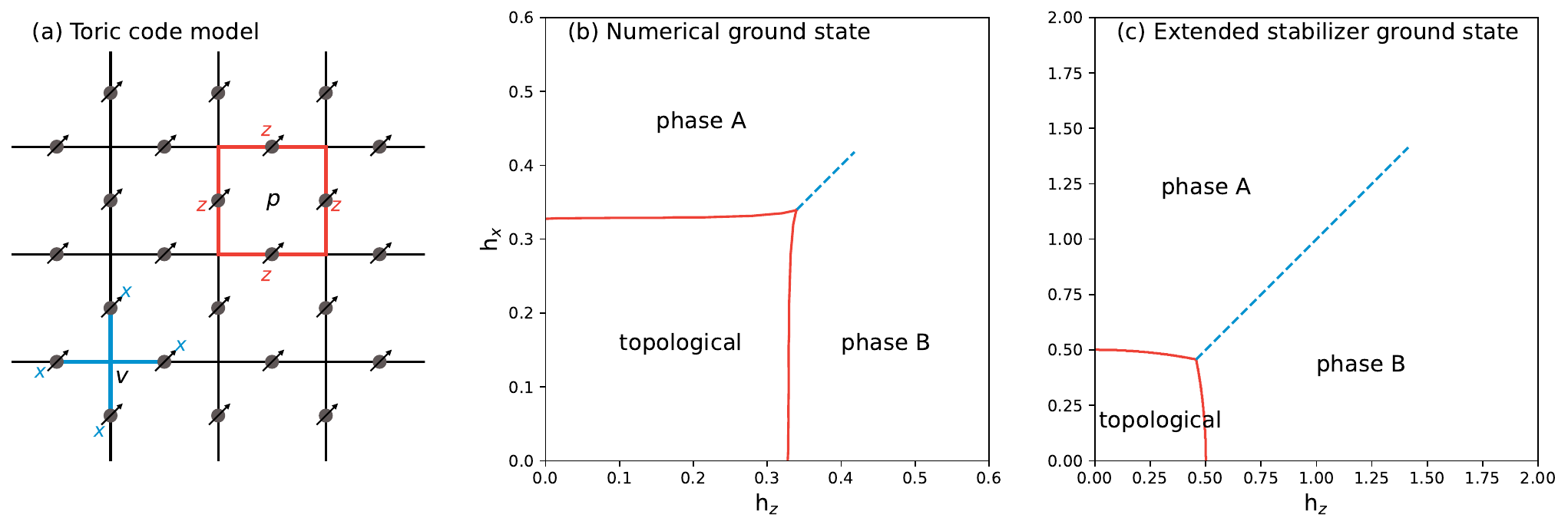}
    \caption{Geometry and phase diagrams of the 2D generalized toric code. (a) The geometry of the 2D generalized toric code model in Eq.~\eqref{eq:toric}. (b) Numerical ground state phase diagram obtained by continuous-time Monte Carlo calculations in Ref.~\cite{toric}. (c) Extended stabilizer ground state phase diagram. For both (b) and (c), three phases are found, including the topological phase, phase A, and phase B. The first-order transition line (dashed blue line) at $h_x=h_z$ begins at $h_x=h_z=0.34$ and ends at $h_x=h_z=0.418$ in (b), while it begins at $h_x=h_z=0.46$ and ends at $h_x=h_z=1.414$ in (c)}
    \label{fig:phase_toric}
\end{figure}

As a demonstration, we consider a 2D generalized toric code model with external magnetic fields. 
The Hamiltonian is
\begin{equation} \label{eq:toric}
H=-\Big(\sum_{v}A_{v}+\sum_{p}B_{p}\Big)-h_{x}\sum_{j}X_{j}-h_{z}\sum_{j}Z_{j},
\end{equation}
which is defined on a torus, where $A_{v}=\prod_{j\in v}X_{j}$ and $B_{p}=\prod_{j\in p}Z_{j}$ represent the product of spin operators on bonds incident to the vertex $v$ and surrounding plaquette $p$, respectively. 
The geometry of the vertices and plaquettes is shown in Figure~\ref{fig:phase_toric}(a). 
This Hamiltonian is studied by continuous-time Monte Carlo simulation in Ref.~\cite{toric} and the phase diagram is reproduced in Figure~\ref{fig:phase_toric}(b). 
At $h_x\rightarrow \infty$ with fixed $h_z$ or $h_z\rightarrow \infty$ with fixed $h_x$, each spin is polarized in the $x$ or $z$ direction, and gives the phase A or B, respectively.
A phase transition happens between the phase A and B at the first-order transition line $h_x=h_z$, which begins at $h_{x}=h_{z}=0.34$ and ends at $h_{x}=h_{z}=0.418$.
In the limit of $h_x\rightarrow \infty$ and $h_z \rightarrow \infty$, the polarization of the system varies continuously between phases A and B, thus no phase transition occurs.

Now we consider the extended stabilizer ground state of this Hamiltonian.
Since the Hamiltonian only contains X and Z, the single-qubit rotations can be restricted to the form of $U(\{\theta_j\}) = \prod_j e^{\frac{1}{2}i\theta_j Y_j}$.
As stated previously, we need to transform the Hamiltonian in Eq.~\eqref{eq:toric} by $U(\{\theta_j\})$ and then determine the stabilizer ground state.
As discussed in Sec.~\ref{sec:periodic}, the stabilizer ground state of a periodic local Hamiltonian should be periodic over supercells with some size $c$.
For simplification, the stabilizer ground state is assumed to have period 1. We set $\theta_j=\alpha$ and $\theta_j=\beta$ for sites $j$ on vertical bonds and horizontal bonds, respectively, and thus the total rotation operator can be written as $U(\alpha,\beta)$.

With fixed supercell size $c=1$, the stabilizer ground state of the rotated Hamiltonian $U(\alpha,\beta)^\dagger HU(\alpha,\beta)$ can be found via Eq.~\eqref{eq:periodic_stab_gs} for each set of rotation angles $\alpha,\beta$. 
The corresponding stabilizer ground state energy per site is written as $E(h_{x},h_{z},\alpha,\beta)$. 
The extended stabilizer ground state energy per site is then given by $E(h_{x},h_{z})=\min_{\alpha,\beta}E(h_{x},h_{z},\alpha,\beta)$.
In the following analysis, we apply the simplification process in Eq.~\eqref{eq:periodic_simplification} for convenience, which allows us to exclude Pauli terms like $P=X_{l}Z_{r}X_{u}Z_{d}$, where the subscripts $l,r,u,d$ stand for the left, right, up, and down site of either a vertex or a plaquette. 
The valid Pauli terms of the rotated Hamiltonians include (1) $X$ and $Z$ on each site; and (2) $X_{l}X_{r}X_{u}X_{d}$, $X_{l}X_{r}Z_{u}Z_{d}$, $Z_{l}Z_{r}X_{u}X_{d}$, and $Z_{l}Z_{r}Z_{u}Z_{d}$ on each vertex or plaquette.

The resulting extended stabilizer ground state phase diagram is plotted in Figure~\ref{fig:phase_toric}(c) and it matches the exact phase diagram qualitatively. 
In the topologically ordered phase, the stabilizers are the set of all $X_{l}X_{r}X_{u}X_{d}$, $X_{l}X_{r}Z_{u}Z_{d}$, $Z_{l}Z_{r}X_{u}X_{d}$, $Z_{l}Z_{r}Z_{u}Z_{d}$ on all vertices and plaquettes.
We find that the corresponding $E(h_{x},h_{z},\alpha,\beta)$ is a constant with respect to $\alpha$ and $\beta$, attributed to the fact that $U(\alpha,\beta)$ is a symmetry operation of this stabilizer state.
In phase A and B, the stabilizers are simply $X$ on each site and $X_lX_rX_uX_d$ on each vertex and plaquette, which corresponds to the product of single-site polarized states in the picture of the unrotated Hamiltonian.
We consider $h_x=h_z$, in which case the extended stabilizer ground state is always found at $\alpha=\beta$.
The corresponding per-site energy function $E(h,\alpha) = E(h_x=h,h_z=h,\alpha,\beta=\alpha)$ is given by 
\begin{equation}
    E(h,\alpha)=-\frac{1}{2}(\cos^{4}\alpha+\sin^{4}\alpha)-h(\cos\alpha+\sin\alpha),
\end{equation}
which is symmetric under $\alpha \rightarrow \frac{\pi}{2}-\alpha$.
No phase transition happens for large $h$ since only one minimum $\alpha=\frac{\pi}{4}$, while two minimums can be found for small $h$. 
The behavior could be better understood by Taylor expansion around $\alpha=\frac{\pi}{4}$, which gives
\begin{equation}
E(h,\alpha=\frac{\pi}{4}+\theta)=\text{const}+(\frac{h}{\sqrt{2}}-1)\theta^{2}+\frac{32-\sqrt{2}h}{24}\theta^{4}+O(\theta^{6}).
\end{equation}
This indicates that the first-order transition ends at $h=\sqrt{2}$, where a second-order phase transition happens due to the change of sign of the quadratic term.
Furthermore, at $h < h_c \approx 0.46$, $E(h,\alpha)$ is higher than the energy of the topologically ordered state for all $\alpha$.
Thus, we claim that the corresponding first-order transition line begins at $h_x=h_z\approx 0.46$ and ends at $h_x=h_z=\sqrt{2}\approx 1.414$.
Although the exact transition values are different, the qualitative picture captured by the extended stabilizer ground state is consistent with the ground truth.

\subsection{Initial state for VQE problems} \label{sec:vqe}

\begin{figure}[thb]
    \centering
    \includegraphics[width=0.8\textwidth]{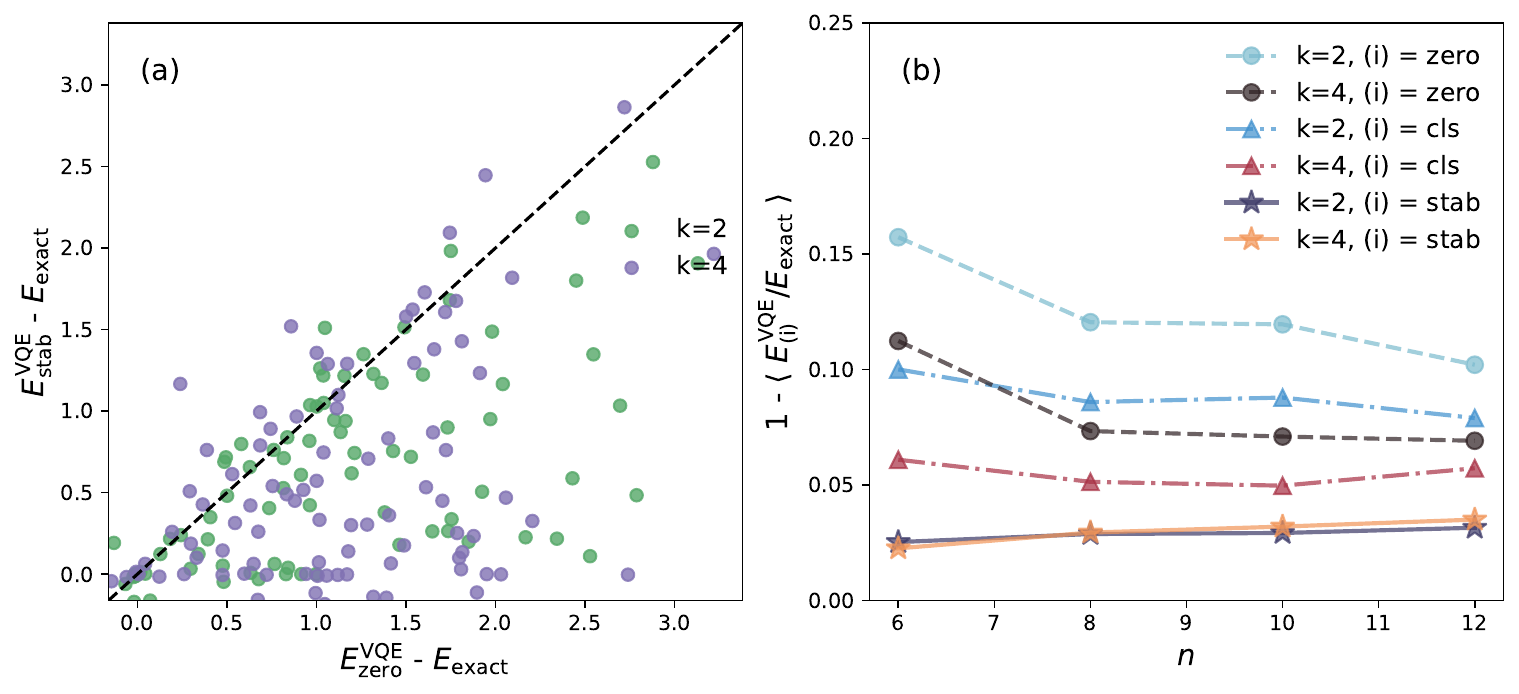}
    \caption{Errors of optimized VQE energies computed using the stochastic $k$-nearest Heisenberg model. $E_\text{exact}$ represents the exact ground state energy, and $E_\text{(i)}^\text{VQE}$ is the optimized VQE energy, where $(i)$ is the employed initialization scheme. Three initialization schemes, i.e. zero state, classical ground state, and stabilizer ground state, are compared. They are abbreviated as $(i)=$zero, $(i)=$cls, $(i)=$stab, respectively. (a) $E_\text{stab}^\text{VQE}$ - $E_\text{exact}$ plots versus $E_\text{zero}^\text{VQE}$ - $E_\text{exact}$ for $N=100$ random Hamiltonians with $n=6$ and $k=2, 4$. (b) Mean relative errors of energies ($1-\langle E_\text{(i)}^\text{VQE}/E_\text{exact} \rangle$) versus the number of sites $n$ for the three initializations with $k=2, 4$.}
    \label{fig:vqe}
\end{figure}

Stabilizer states have been recently used as initial states \cite{CAFQA,ising_stab_gs} for VQE problems to mitigate the notorious barren plateau issue \cite{cerezo2021cost,mcclean2018barren, Zhang2023b_z}.
The stabilizer initial states can be prepared on quantum circuits by efficient decomposition to up to $O(n^2/\log n)$ single-qubit and double-qubit Clifford gates \cite{aaronson2004improved}, or even fewer gates for stabilizer ground states of 1D local Hamiltonians (see \thref{stab_gs_preparation}).
The effective VQE ansatz is
\begin{equation} \label{eq:vqe_approach1}
|\psi(\boldsymbol{\theta})\rangle=U(\boldsymbol{\theta})|\psi_{\text{stab}}\rangle=U(\boldsymbol{\theta})U_{C}|0\rangle^{\otimes n},
\end{equation}
where the stabilizer initial state $|\psi_{\text{stab}}\rangle$ is decomposed to 
$U_{C}|0\rangle^{\otimes n}$. 
Another approach is to employ the quantum state ansatz as follows:
\begin{equation} \label{eq:vqe_approach2}
|\psi(\boldsymbol{\theta})\rangle=U_{C}U(\boldsymbol{\theta})|0\rangle^{\otimes n}.
\end{equation}
The advantage of the latter approach is that one can equivalently transform the Hamiltonian by $H\rightarrow H^{\prime}=U_{C}^{\dagger}HU_{C}$ classically, and thus only the $U(\boldsymbol{\theta})$ part needs to be performed on the quantum circuit \cite{sun2024toward,Zhang2021d_z, shang2023schrodinger}.
However, its disadvantage is that it might break the locality of the Hamiltonian and cause additional overhead on the hardware that cannot support nonlocal operations \cite{ibm_eagle,sycamore}.
Therefore, we adopt the former strategy in Eq.~\eqref{eq:vqe_approach1} for the following benchmark.

The \{XX,YY,ZZ\} model in Eq.~\eqref{eq:stochastic_heisenberg} still serves as the example Hamiltonian, and the variational Hamiltonian ansatz \cite{wiersema2020exploring} is used as the example VQE circuit for ground state optimization.  
By rewriting the Hamiltonian as $H=\sum_P w_P P$ with $P=S^{x}_{i}S^{x}_{j},S^{y}_{i}S^{y}_{j},S^{z}_{i}S^{z}_{j}$, the corresponding quantum circuit ansatz is as follows:
\begin{equation}
    |\psi(\boldsymbol{\theta})\rangle = \prod_P e^{i\theta_P P} |\psi_\text{init}\rangle.
\end{equation}
We compare three choices of the initial state $|\psi_\text{init}\rangle$, including the $|0\rangle^{\otimes n}$ state (referred to as zero state), the classical ground state, and the stabilizer ground state obtained by the exact 1D local algorithm.
The relations between classical ground states, stabilizer ground states, and product states are discussed in Sec. \ref{sec:mapping}.
The quantum circuit simulations are conducted via the \textit{TensorCircuit} software \cite{Zhang2022_z}.
The optimization of parameters $\boldsymbol{\theta}$ is performed by the default L-BFGS-B \cite{scipy_LBFGS_B} optimizer in \textit{SciPy} \cite{2020SciPy-NMeth} with zero initial values.

Figure \ref{fig:vqe}(a) displays the distributions of optimized energy errors of the zero state and stabilizer ground state initialization strategies tested on 100 random Hamiltonians with $n=6$.
Stabilizer state initializations result in lower VQE errors compared with those via the zero state in 82\% and 92\% of the 100 tests for $k=2$ and $k=4$, respectively. 
There are few points in the region of $E_\text{zero}^\text{VQE}<E_\text{stab}^\text{VQE}$,
which is attributed to the fact that an initialization state with a lower energy does not guarantee a lower final energy after VQE optimizations.
Figure \ref{fig:vqe}(b) also shows the mean relative errors of energies $1 - \langle E^\text{VQE}_{\text{(i)}} / E_\text{exact}\rangle$ for increasing $n$ and $k=2,4$, where (i) represents each of the three initialization strategies.
Initializations via stabilizer ground states are observed to systematically provide better energy estimations for both $k=2$ and $k=4$.

\section{Conclusions and Outlook}\label{sec:conclusion}
In this work, we introduce stabilizer ground states as a versatile toolkit of both qualitative analysis of quantum systems and cornerstone of developing advanced quantum state anstazes on classical or quantum computers. For general Hamiltonians, we establish the equivalence between the stabilizer ground state and the closed maximally-commuting Pauli subset. For 1D local Hamiltonians, we additionally develop an exact and efficient algorithm to obtain the exact stabilizer ground state with linear scaling. Besides, we prove that the stabilizer ground state of 1D local Hamiltonians can be prepared on quantum circuits with a linear scaled circuit depth. Furthermore, both the equivalence formalism for general Hamiltonians and the linear-scaled algorithm for 1D local Hamiltonians can be extended to infinite periodic systems. We also compare stabilizer ground states, mean-field ground states, and MPS ground states in terms of the applications for constructing quantum and classical ansatzes. By benchmarking on example Hamiltonians, we verified the computational scaling of the exact 1D local algorithm and demonstrated the substantial performance gain over the traditional discrete optimization strategies. We also illustrate that stabilizer ground states are promising tools for various applications, including qualitative analysis of phase transitions, generating better heuristics for VQE problems, and developing more expressive classical ground state ansatzes.

Looking forward, future studies can fruitfully branch into three major directions. The first avenue is to develop algorithms for stabilizer ground states of other types of Hamiltonians, including Hamiltonians with other quasi-1D structures and local Hamiltonians in higher dimensions. For the latter, finding the exact stabilizer ground state is NP-hard, evidenced by the NP-hardness of one of its simplified cases, i.e., the ground state problem of 2D classical spin models with random magnetic fields \cite{Barahona1982a_z,zhang2019classification}. However, approximate or heuristic algorithms \cite{Zhang2020b_z,gu2023zero,munoz2023low} for stabilizer ground states may still be practically useful for higher-dimensional systems. The second avenue extends the concept of stabilizer ground states to other physically interesting properties, such as excited states, mixed states, and thermal state sampling. These extensions are plausible, as the automaton structure of the 1D algorithm shares similarities with an ensemble of quantum states. The third avenue involves the exploration of more downstream applications for stabilizer ground states, especially via combining with other well-established quantum state ansatzes, such as tensor network \cite{masot2024stabilizer}, perturbation theory \cite{gu2024doped,beguvsic2023simulating}, variational quantum Monte Carlo, or low-rank (or low-energy) stabilizer decomposition \cite{bravyi2019simulation}.


\section*{Acknowledgement}
The authors thank the insightful discussions from Prof. Garnet K.L. Chan and Dr. Tomislav Begušić, and appreciate the valuable suggestions and comments from the reviewers. SXZ acknowledges the support from a start-up grant at IOP-CAS (E4VK061).

\vspace{1cm}
\section{Appendix}

\subsection{Proof of \thref{sparse_gs} \label{appendix:proof_sparse}}

Let $\boldsymbol{S}=\langle\boldsymbol{Q}\rangle$, $\boldsymbol{Q}\in\mathscr{S}(\boldsymbol{P})$ be one of the stabililizer group such that $E_{\text{stab}}(H,\boldsymbol{S})=E_{\text{gs}}$, and $|\psi\rangle$ is any stabilizer state stabilized by $\boldsymbol{S}$. Let $\boldsymbol{S}_{\psi}=\langle\text{Stab}(|\psi\rangle)\cap\tilde{\boldsymbol{P}}\rangle$, obviously we have $\boldsymbol{S}\subseteq\boldsymbol{S}_{\psi}$. Let $\boldsymbol{S}_{\psi}=\langle\boldsymbol{S},P_{1},P_{2},...,P_{k}\rangle$ where each $P_{i}\in\tilde{\boldsymbol{P}}$, $1\leq i\leq k$. We consider the sequence $\boldsymbol{S}_{i}=\langle\boldsymbol{S},P_{1},P_{2},...,P_{i}\rangle$ with $i=0,1,...,k$. Clearly we have $\boldsymbol{S}_{i}\cap\tilde{\boldsymbol{P}}\in\mathscr{S}(\boldsymbol{P})$ for each $\boldsymbol{S}_{i}$.

First we prove that $E_{\text{stab}}(H,\boldsymbol{S}_{i})=E_{\text{gs}}$ for each $\boldsymbol{S}_{i}$. Obviously, it is true for $i=0$. Given it is true for $\boldsymbol{S}_{i}$, we then consider $\boldsymbol{S}_{i+1}=\langle\boldsymbol{S}_{i},P_{i+1}\rangle$. If $P_{i+1}\in\boldsymbol{S}_{i}$, then $\boldsymbol{S}_{i+1}=\boldsymbol{S}_{i}$ so $E_{\text{stab}}(H,\boldsymbol{S}_{i+1})=E_{\text{gs}}$. Otherwise we consider $E_{i}=E_{\text{stab}}(H,\boldsymbol{S}_{i})$, $E_{+}=E_{\text{stab}}(H,\langle\boldsymbol{S}_{i},P_{i+1}\rangle)$ and $E_{-}=E_{\text{stab}}(H,\langle\boldsymbol{S}_{i},-P_{i+1}\rangle)$. For each $P\in\boldsymbol{P}$, it falls into one of the following situations: (1) $P\in\pm\boldsymbol{S}_{i}$ so it contributes equally to $E_{+}$ and $E_{-}$, (2) $P\notin\pm\boldsymbol{S}_{i}$ but $P\in\pm\langle\boldsymbol{S}_{i},P_{i+1}\rangle$ so it contributes no energy to $E_{i}$ and opposite energy to $E_{+}$ and $E_{-}$, (3) $P\notin\pm\langle\boldsymbol{S}_{i},P_{i+1}\rangle$ so it contributes no energy to $E_{i}$, $E_{+}$ and $E_{-}$. Thus we have $2E_{i}=E_{+}+E_{-}$. However $E_{i}$ is already the minimum of $E_{\text{stab}}(H,\boldsymbol{S})$ for $\boldsymbol{S}\in\mathscr{S}(\boldsymbol{P})$, thus $E_{+}=E_{-}=E_{i}=E_{\text{gs}}$, i.e. $E_{\text{stab}}(H,\boldsymbol{S}_{i+1})=E_{\text{gs}}$. Now we can conclude that $E_{\text{stab}}(H,\boldsymbol{S}_{i})=E_{\text{gs}}$ for each $\boldsymbol{S}_{i}$, which implies $E_{\text{stab}}(H,\boldsymbol{S}_{\psi})=E_{\text{gs}}$.

Next we prove $|\psi\rangle$ is a (degenerate) stabilizer ground state. According to \thref{stab_energy}, we have $\langle\psi|H|\psi\rangle=E_{\text{stab}}(H,\boldsymbol{S}_{\psi})=E_{\text{gs}}$. If there exists stabilizer state $|\psi^{\prime}\rangle$ such that $\langle\psi^{\prime}|H|\psi^{\prime}\rangle<\langle\psi|H|\psi\rangle=E_{\text{gs}}$, we should have $E_{\text{stab}}(H,\langle\text{Stab}(|\psi^{\prime}\rangle)\cap\tilde{\boldsymbol{P}}\rangle)=\langle\psi^{\prime}|H|\psi^{\prime}\rangle<E_{\text{gs}}$, which conflicts with the definition of $E_{\text{gs}}$. Thus we conclude that $|\psi\rangle$ is a (degenerate) stabilizer ground state.

\subsection{Proof of Eq. \ref{eq:Markov_independence_s} from Eq. \ref{eq:Markov_independence} \label{appendix:proof_Markov_independence_s}}

We first introduce an important fact $\boldsymbol{s}\leftrightarrow\{A_{1},...,A_{n}\}$, where $\leftrightarrow$ means bijective mapping. This is because (1) $G(A_{m})=s_{m}$ gives the mapping from the latter one to the former one, and (2) $A_{m}(\boldsymbol{s})$ itself is the function that maps from the former one to the latter one. Based on this, we have
\begin{equation}
\begin{aligned}\{\boldsymbol{s}|A_{m}\} & \leftrightarrow\{A_{1},...,A_{n}|A_{m}\}\\
 & =\{A_{1},...,A_{m}|A_{m}\}\otimes\{A_{m+1},...,A_{n}|A_{m}\}\\
 & \rightarrow\{\boldsymbol{s}_{\leq m}|A_{m}\}\otimes\{\boldsymbol{s}_{>m}|A_{m}\}\\
 & \supseteq\{\boldsymbol{s}|A_{m}\},
\end{aligned}
\label{eq:4step}
\end{equation}
where in the first line used $\boldsymbol{s}\leftrightarrow\{A_{1},...,A_{n}\}$, the second line used Eq. \ref{eq:Markov_independence}, the third line used the mapping $G(A_{m})=s_{m}$, and $\rightarrow$ means the surjective mapping. Thus Eq. \ref{eq:4step} gives a four-step way to map $\{\boldsymbol{s}|A_{m}\}$ to itself. Therefore the mapping in each step must be bijective. Specifically, the mapping from the third line to the fourth line is bijective, which exactly gives Eq. \ref{eq:Markov_independence_s}, i.e.
\begin{equation}
\{\boldsymbol{s}|A_{m}\}=\{\boldsymbol{s}_{\leq m}|A_{m}\}\otimes\{\boldsymbol{s}_{>m}|A_{m}\}.
\end{equation}

\subsection{Derivation of the state machine \label{appendix:SM}}

We first introduce the following lemma related to the projection operations, which will be frequently used:
\begin{lem}
\thlabel{projectAB} If $\boldsymbol{B}\subseteq\mathcal{P}_{I}$, then $\mathbb{P}_{I}(\langle\boldsymbol{A},\boldsymbol{B}\rangle)=\langle\mathbb{P}_{I}(\langle\boldsymbol{A}\rangle),\boldsymbol{B}\rangle$.
\end{lem}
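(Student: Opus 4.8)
The plan is to prove the set equality $\mathbb{P}_{m,l}(\langle\boldsymbol{A},\boldsymbol{B}\rangle)=\langle\mathbb{P}_{m,l}(\langle\boldsymbol{A}\rangle),\boldsymbol{B}\rangle$ by showing both inclusions, exploiting the hypothesis $\boldsymbol{B}\subseteq\mathcal{P}_{m,l}$ heavily. The key structural fact I would isolate first is this: any element of $\langle\boldsymbol{A},\boldsymbol{B}\rangle$ can be written as a product $P_A P_B$ where $P_A\in\langle\boldsymbol{A}\rangle$ and $P_B\in\langle\boldsymbol{B}\rangle$, since these generators multiply (up to signs, which stay within $\mathcal{P}_n$) and we may collect the $\boldsymbol{A}$-factors and $\boldsymbol{B}$-factors separately — commutation of Pauli operators only introduces global signs, which do not affect membership in $\mathcal{P}_n$ or the support on qubits. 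Because $\langle\boldsymbol{B}\rangle\subseteq\mathcal{P}_{m,l}$ (the hypothesis passes to the generated group, as products of operators supported on $[m,l]$ are still supported on $[m,l]$), such a product $P_A P_B$ lies in $\mathcal{P}_{m,l}$ if and only if $P_A$ does, i.e. if and only if $P_A\in\mathbb{P}_{m,l}(\langle\boldsymbol{A}\rangle)$. This is the crux of the whole argument.

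For the inclusion $\subseteq$: take $R\in\mathbb{P}_{m,l}(\langle\boldsymbol{A},\boldsymbol{B}\rangle)$, so $R=P_A P_B$ with $P_A\in\langle\boldsymbol{A}\rangle$, $P_B\in\langle\boldsymbol{B}\rangle$, and $R\in\mathcal{P}_{m,l}$. Since $P_B\in\mathcal{P}_{m,l}$ and $R\in\mathcal{P}_{m,l}$, we get $P_A=\pm R P_B^{-1}\in\mathcal{P}_{m,l}$, hence $P_A\in\mathbb{P}_{m,l}(\langle\boldsymbol{A}\rangle)$. Then $R=P_A P_B\in\langle\mathbb{P}_{m,l}(\langle\boldsymbol{A}\rangle),\boldsymbol{B}\rangle$. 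For the inclusion $\supseteq$: any element of $\langle\mathbb{P}_{m,l}(\langle\boldsymbol{A}\rangle),\boldsymbol{B}\rangle$ is a product of elements of $\mathbb{P}_{m,l}(\langle\boldsymbol{A}\rangle)\subseteq\langle\boldsymbol{A}\rangle$ and elements of $\boldsymbol{B}\subseteq\langle\boldsymbol{A},\boldsymbol{B}\rangle$, so it lies in $\langle\boldsymbol{A},\boldsymbol{B}\rangle$; moreover it is a product of operators all supported on $[m,l]$, so it lies in $\mathcal{P}_{m,l}$, hence in the projection.

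The main technical point to handle carefully — and the step I expect to be the real obstacle — is the claim that every element of $\langle\boldsymbol{A},\boldsymbol{B}\rangle$ factors as $P_A P_B$ with $P_A\in\langle\boldsymbol{A}\rangle$ and $P_B\in\langle\boldsymbol{B}\rangle$. One must be careful that $\langle\boldsymbol{A},\boldsymbol{B}\rangle$ here denotes the set of all products of subsets of the combined generating set (as in the paper's definition of $\langle\cdot\rangle$), and that reordering a product of Pauli operators to segregate the $\boldsymbol{A}$-type and $\boldsymbol{B}$-type factors only costs signs $\pm1\in\mathcal{P}_n$, which neither changes the support nor leaves the relevant sets. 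I would also note explicitly that $\mathbb{P}_{m,l}(\langle\boldsymbol{A}\rangle)=\langle\boldsymbol{A}\rangle\cap\mathcal{P}_{m,l}$ is closed under multiplication (a product of two operators in $\langle\boldsymbol{A}\rangle$ supported on $[m,l]$ is again such), which justifies treating it as a legitimate generating set on the right-hand side. Once these bookkeeping facts are pinned down, both inclusions are immediate from the displayed manipulations above.
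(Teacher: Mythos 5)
Your proof is correct and follows essentially the same route as the paper's: factor any element of $\langle\boldsymbol{A},\boldsymbol{B}\rangle$ into an $\langle\boldsymbol{A}\rangle$-part and a $\langle\boldsymbol{B}\rangle$-part, use $\boldsymbol{B}\subseteq\mathcal{P}_{m,l}$ to conclude the $\boldsymbol{A}$-part lies in $\mathcal{P}_{m,l}$ whenever the product does, and note the reverse inclusion is immediate. You merely spell out the bookkeeping (the factorization and the closure of $\mathbb{P}_{m,l}(\langle\boldsymbol{A}\rangle)$) more explicitly than the paper does.
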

\begin{proof}
We only need to prove $\langle\boldsymbol{A},\boldsymbol{B}\rangle\cap\mathcal{P}_{I}=\langle\boldsymbol{A}\cap\mathcal{P}_{I},\boldsymbol{B}\rangle$. This can be seen by (1) $\langle\boldsymbol{A}\cap\mathcal{P}_{I},\boldsymbol{B}\rangle\subseteq\langle\boldsymbol{A},\boldsymbol{B}\rangle$, (2) $\langle\boldsymbol{A}\cap\mathcal{P}_{I},\boldsymbol{B}\rangle\subseteq\mathcal{P}_{I}$ since $\boldsymbol{A}\cap\mathcal{P}_{I}\subseteq\mathcal{P}_{I}$ and $\boldsymbol{B}\subseteq\mathcal{P}_{I}$, and (3) for any $P=\langle\boldsymbol{A},\boldsymbol{B}\rangle\cap\mathcal{P}_{I}$ we can write $P=AB$ where $A\in\boldsymbol{A}$ and $B\in\boldsymbol{B}$. Since $B\in\mathcal{P}_{I}$, $P\in\mathcal{P}_{I}$ we also have $A\in\mathcal{P}_{I}$. Thus $P=AB\in\langle\mathbb{P}_{I}(\langle\boldsymbol{A}\rangle),\boldsymbol{B}\rangle$.
\end{proof}
We have already proved that Eq. \ref{eq:Markov_independence_s} is a necessary condition of Eq. \ref{eq:Markov_independence}. In fact, if we additionally have
\begin{equation}
\begin{aligned}\{\boldsymbol{s}_{\leq m},A_{m}\} & \leftrightarrow\{A_{1},...,A_{m}\},\\
\{\boldsymbol{s}_{>m},A_{m}\} & \leftrightarrow\{A_{m},...,A_{n}\}
\end{aligned}
\label{eq:Am_restriction}
\end{equation}
they are equivalent. With the mapping $\boldsymbol{s}\rightarrow\boldsymbol{Q}$ given in Sec. \ref{sec:mapping}, Eq. \ref{eq:Am_restriction} becomes
\begin{equation}
\begin{aligned}\{\boldsymbol{Q}_{\leq m},A_{m}\} & \leftrightarrow\{A_{1},...,A_{m}\},\\
\{\boldsymbol{Q}_{>m},A_{m}\} & \leftrightarrow\{A_{m},...,A_{n}\},
\end{aligned}
\label{eq:Am_restriction_Q}
\end{equation}
and Eq. \ref{eq:Markov_independence_s} becomes
\begin{equation}
\{\boldsymbol{Q}|A_{m}\}=\{\boldsymbol{Q}_{\leq m}|A_{m}\}\otimes\{\boldsymbol{Q}_{>m}|A_{m}\},\label{eq:Markov_independence_Q2}
\end{equation}
i.e. $\boldsymbol{Q}_{\leq m}$ must be decoupled from $\boldsymbol{Q}_{>m}$ given $A_{m}$. The coupling between $\boldsymbol{Q}_{\leq m}$ and $\boldsymbol{Q}_{>m}$ is given in the following two lemmas:
\begin{lem}
\thlabel{Q_to_Qpast} If $\boldsymbol{Q}\in\mathscr{S}(\boldsymbol{P})$, then $\boldsymbol{Q}_{\leq m}\in\mathscr{S}(\boldsymbol{P}_{\leq m})$ for each $m$.
\end{lem}
\begin{proof}
We recall that $\mathscr{S}(\boldsymbol{P})=\{\boldsymbol{Q}|\boldsymbol{Q}=\langle\boldsymbol{Q}\rangle\cap\tilde{\boldsymbol{P}},-I\notin\langle\boldsymbol{Q}\rangle\}$. Then $-I\notin\langle\boldsymbol{Q}_{\leq m}\rangle$ is obvious because $\boldsymbol{Q}_{\leq m}\subseteq\boldsymbol{Q}$. Given that $\boldsymbol{Q}=\langle\boldsymbol{Q}\rangle\cap\tilde{\boldsymbol{P}}$, we have $\langle\boldsymbol{Q}_{\leq m}\rangle\cap\tilde{\boldsymbol{P}}_{\leq m}\subseteq\langle\boldsymbol{Q}\rangle\cap\tilde{\boldsymbol{P}}_{\leq m}=\langle\boldsymbol{Q}\rangle\cap\tilde{\boldsymbol{P}}\cap\tilde{\boldsymbol{P}}_{\leq m}=\boldsymbol{Q}\cap\tilde{\boldsymbol{P}}_{\leq m}=\boldsymbol{Q}_{\leq m}$. On the other hand, we must have $\boldsymbol{Q}_{\leq m}\subseteq\langle\boldsymbol{Q}_{\leq m}\rangle\cap\tilde{\boldsymbol{P}}_{\leq m}$. Thus $\langle\boldsymbol{Q}_{\leq m}\rangle\cap\tilde{\boldsymbol{P}}_{\leq m}=\boldsymbol{Q}_{\leq m}$, i.e. $\boldsymbol{Q}_{\leq m}\in\mathscr{S}(\boldsymbol{P}_{\leq m})$.
\end{proof}
For simplicify, given $\boldsymbol{Q}_{\leq m}\in\mathscr{S}(\boldsymbol{P}_{\leq m})$, we say $\boldsymbol{Q}_{>m}\subseteq\tilde{\boldsymbol{P}}_{>m}$ is valid if $\boldsymbol{Q}=\boldsymbol{Q}_{\leq m}\cup\boldsymbol{Q}_{>m}\in\mathscr{S}(\boldsymbol{P})$. The requirement of $\boldsymbol{Q}_{>m}$ can be rewritten as follows:
\begin{cor}
\thlabel{Qpast_to_Qfuture} Given $\boldsymbol{Q}_{\leq m}\in\mathscr{S}(\boldsymbol{P}_{\leq m})$ , $\boldsymbol{Q}_{>m}\subseteq\tilde{\boldsymbol{P}}_{>m}$ is valid if and only if
\begin{enumerate}
\item $-I\notin\langle\boldsymbol{Q}_{>m}\rangle$, i.e. $\langle\boldsymbol{Q}_{>m}\rangle$ is a stabilizer group
\item $[\boldsymbol{Q}_{>m},\boldsymbol{Q}_{\leq m}]=0$
\item $\langle\boldsymbol{Q}_{\leq m},\boldsymbol{Q}_{>m}\rangle\cap\tilde{\boldsymbol{P}}_{>m}=\boldsymbol{Q}_{>m}$
\item $\langle\boldsymbol{Q}_{\leq m},\boldsymbol{Q}_{>m}\rangle\cap\tilde{\boldsymbol{P}}_{\leq m}=\boldsymbol{Q}_{\leq m}$
\end{enumerate}
\end{cor}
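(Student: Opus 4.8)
The plan is to unwind the definition of validity into its two atomic requirements and match each against the listed conditions. By the lemma preceding \thref{validQm}, $\boldsymbol{Q}_m$ is valid exactly when $\boldsymbol{Q}_{\leq m}=\boldsymbol{Q}_{<m}\cup\boldsymbol{Q}_m\in\mathscr{S}(\boldsymbol{P}_{\leq m})$, i.e.\ when (i) $\langle\boldsymbol{Q}_{\leq m}\rangle\cap\tilde{\boldsymbol{P}}_{\leq m}=\boldsymbol{Q}_{\leq m}$ and (ii) $-I\notin\langle\boldsymbol{Q}_{\leq m}\rangle$. Here $\langle\boldsymbol{Q}_{\leq m}\rangle=\langle\boldsymbol{Q}_{<m},\boldsymbol{Q}_m\rangle$; grouping $\boldsymbol{P}$ by last non-identity qubit partitions $\tilde{\boldsymbol{P}}_{\leq m}$ disjointly into $\tilde{\boldsymbol{P}}_{<m}$ and $\tilde{\boldsymbol{P}}_m$, with $\boldsymbol{Q}_{<m}\subseteq\tilde{\boldsymbol{P}}_{<m}$ and $\boldsymbol{Q}_m\subseteq\tilde{\boldsymbol{P}}_m$; and the hypothesis $\boldsymbol{Q}_{<m}\in\mathscr{S}(\boldsymbol{P}_{<m})$ already supplies $-I\notin\langle\boldsymbol{Q}_{<m}\rangle$ and $\langle\boldsymbol{Q}_{<m}\rangle\cap\tilde{\boldsymbol{P}}_{<m}=\boldsymbol{Q}_{<m}$.

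First I would dispose of (i). Because the decomposition of $\tilde{\boldsymbol{P}}_{\leq m}$ is disjoint and $\boldsymbol{Q}_{<m}$, $\boldsymbol{Q}_m$ sit inside the respective pieces, intersecting $\langle\boldsymbol{Q}_{\leq m}\rangle\cap\tilde{\boldsymbol{P}}_{\leq m}$ with $\tilde{\boldsymbol{P}}_m$ and with $\tilde{\boldsymbol{P}}_{<m}$ separately shows that (i) is equivalent to the conjunction of $\langle\boldsymbol{Q}_{<m},\boldsymbol{Q}_m\rangle\cap\tilde{\boldsymbol{P}}_m=\boldsymbol{Q}_m$ and $\langle\boldsymbol{Q}_{<m},\boldsymbol{Q}_m\rangle\cap\tilde{\boldsymbol{P}}_{<m}=\boldsymbol{Q}_{<m}$, i.e.\ exactly conditions 3 and 4; the inclusions $\boldsymbol{Q}_m\subseteq\langle\boldsymbol{Q}_{\leq m}\rangle\cap\tilde{\boldsymbol{P}}_m$ and $\boldsymbol{Q}_{<m}\subseteq\langle\boldsymbol{Q}_{\leq m}\rangle\cap\tilde{\boldsymbol{P}}_{<m}$ are automatic, so only the reverse containments carry content. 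This step is routine bookkeeping.

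Next I would treat (ii), which is the crux, arguing it is equivalent to conditions 1 and 2 given the hypothesis. For the forward direction: if $-I\notin\langle\boldsymbol{Q}_{\leq m}\rangle$ then $-I$ lies in no subgroup, so $-I\notin\langle\boldsymbol{Q}_m\rangle$, and any two anticommuting Hermitian Paulis $P,Q\in\langle\boldsymbol{Q}_{\leq m}\rangle$ would give $(PQ)^2=-I\in\langle\boldsymbol{Q}_{\leq m}\rangle$, so all elements pairwise commute, yielding the internal commutation of $\boldsymbol{Q}_m$ and $[\boldsymbol{Q}_m,\boldsymbol{Q}_{<m}]=0$; hence conditions 1 and 2. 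For the backward direction: assuming conditions 1 and 2 (and that $\langle\boldsymbol{Q}_{<m}\rangle$ is a stabilizer group), $\langle\boldsymbol{Q}_{\leq m}\rangle$ is generated by mutually commuting Hermitian involutions and so is abelian, but abelianness alone does not exclude $-I$ (e.g.\ $\langle Z_1,-Z_1\rangle\ni-I$), so I would invoke condition 3 to finish: if $-I\in\langle\boldsymbol{Q}_{\leq m}\rangle$ with $\boldsymbol{Q}_m\neq\emptyset$, pick $P\in\boldsymbol{Q}_m$; then $-P\in\langle\boldsymbol{Q}_{\leq m}\rangle$, and since $\tilde{\boldsymbol{P}}=\pm\boldsymbol{P}$ and $q_{-P}^{\text{last}}=q_{P}^{\text{last}}=m$ we get $-P\in\tilde{\boldsymbol{P}}_m$, so condition 3 forces $-P\in\boldsymbol{Q}_m$ and thus $-I=P(-P)\in\langle\boldsymbol{Q}_m\rangle$, contradicting condition 1; if instead $\boldsymbol{Q}_m=\emptyset$ then $\langle\boldsymbol{Q}_{\leq m}\rangle=\langle\boldsymbol{Q}_{<m}\rangle$, which excludes $-I$ by hypothesis. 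Combining, $\boldsymbol{Q}_m$ is valid $\iff$ (i) and (ii) hold $\iff$ conditions 1--4 hold.

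The main obstacle is precisely this backward implication for (ii): commutation (conditions 1 and 2) makes $\langle\boldsymbol{Q}_{\leq m}\rangle$ abelian but does not by itself forbid $-I$, and one must spot that it is condition 3 --- the level-$m$ subset-closure --- that closes the gap, via $P,-P\in\boldsymbol{Q}_m\Rightarrow-I\in\langle\boldsymbol{Q}_m\rangle$. Incidentally this shows condition 2 is implied by conditions 1 and 3, but retaining it makes the algorithmic check more transparent. Everything else --- the disjoint-union reduction for (i) and the anticommutation-to-$-I$ observation --- is standard.
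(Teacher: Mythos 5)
Your proof is correct: the paper states \thref{validQm} without an explicit proof, treating it as an immediate unpacking of the definition of validity ($\boldsymbol{Q}_{\leq m}\in\mathscr{S}(\boldsymbol{P}_{\leq m})$), and your argument supplies exactly that unpacking --- splitting the closure condition over the disjoint pieces $\tilde{\boldsymbol{P}}_{<m}$ and $\tilde{\boldsymbol{P}}_{m}$ to obtain conditions 3 and 4, and equating $-I\notin\langle\boldsymbol{Q}_{\leq m}\rangle$ with conditions 1--2, using condition 3 together with the sign-symmetry of $\tilde{\boldsymbol{P}}$ to exclude $-I$ in the converse. Your observation that commutativity alone does not rule out $-I$ and that the level-$m$ closure condition is what closes this gap is the only non-routine point, and you handle it correctly.
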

Condition 1 is only dependent on $\boldsymbol{Q}_{>m}$ itself so it does not couple $\boldsymbol{Q}_{>m}$ with $\boldsymbol{Q}_{\leq m}$. Condition 2 requires that the added stabilizers $\boldsymbol{Q}_{>m}$ should commute with all previous stabilizers $\boldsymbol{Q}_{\leq m}$. Conditions 3 and 4 require that group multiplication operations between $\boldsymbol{Q}_{\leq m}$ and $\boldsymbol{Q}_{>m}$ does not generate new elements in $\tilde{\boldsymbol{P}}$, which can be further decomposed to $\tilde{\boldsymbol{P}}_{>m}$ and $\tilde{\boldsymbol{P}}_{\leq m}$. Thus conditions 2, 3, 4 give the coupling between $\boldsymbol{Q}_{>m}$ and $\boldsymbol{Q}_{\leq m}$. Now we want to construct $A_{m}$ to decouple them. With 
\begin{equation}
\tilde{\boldsymbol{P}}_{\text{invalid}}^{m}=\tilde{\boldsymbol{P}}_{\text{invalid}}^{m}(\boldsymbol{Q}_{\leq m})=\{P\in\tilde{\boldsymbol{P}}_{>m}|[P,\boldsymbol{Q}_{\leq m}]\neq0\}
\end{equation}
(i.e. Eq. \ref{eq:Pinvalid}), condition 2 can be rewritten as 
\begin{equation}
\begin{aligned} & [\boldsymbol{Q}_{\leq m},\boldsymbol{Q}_{>m}]=0\\
\Leftrightarrow & \boldsymbol{Q}_{>m}\cap\{P\in\tilde{\boldsymbol{P}}_{>m}|[P,\boldsymbol{Q}_{\leq m}]\neq0\}=\emptyset\\
\Leftrightarrow & \boldsymbol{Q}_{>m}\cap\tilde{\boldsymbol{P}}_{\text{invalid}}^{m}=\emptyset
\end{aligned}
\label{eq:Pinvalid_to_future}
\end{equation}
which is only dependent on $\boldsymbol{Q}_{>m}$. With
\begin{equation}
\boldsymbol{S}_{\text{proj}}^{m}=\boldsymbol{S}_{\text{proj}}^{m}(\boldsymbol{Q}_{\leq m})=\mathbb{P}_{>m-k}(\langle\boldsymbol{Q}_{\leq m}\rangle)
\end{equation}
(i.e. Eq. \ref{eq:Sproj}), condition 3 can be rewritten as
\begin{equation}
\begin{aligned}\boldsymbol{Q}_{>m} & =\langle\boldsymbol{Q}_{\leq m},\boldsymbol{Q}_{>m}\rangle\cap\tilde{\boldsymbol{P}}_{>m}\\
 & =\mathbb{P}_{>m-k}(\langle\boldsymbol{Q}_{\leq m},\boldsymbol{Q}_{>m}\rangle)\cap\tilde{\boldsymbol{P}}_{>m}\\
 & =\langle\mathbb{P}_{>m-k}(\langle\boldsymbol{Q}_{\leq m}\rangle),\boldsymbol{Q}_{>m}\rangle\cap\tilde{\boldsymbol{P}}_{>m}\\
 & =\langle\boldsymbol{S}_{\text{proj}}^{m},\boldsymbol{Q}_{>m}\rangle\cap\tilde{\boldsymbol{P}}_{>m},
\end{aligned}
\label{eq:Sproj_to_future}
\end{equation}
which is only dependent on $\boldsymbol{Q}_{>m}$. Here we used $\boldsymbol{Q}_{>m}\subseteq\mathcal{P}_{>m-k}$ and \thref{projectAB} in the third line. With
\begin{equation}
\boldsymbol{S}_{\text{right}}^{m}=\boldsymbol{S}_{\text{right}}^{m}(\boldsymbol{Q}_{\geq m})=\mathbb{P}_{\leq m}(\langle\boldsymbol{Q}_{\geq m}\rangle)
\end{equation}
(i.e. Eq. \ref{eq:Sright}), condition 4 can be rewritten as
\begin{equation}
\begin{aligned}\boldsymbol{Q}_{\leq m} & =\langle\boldsymbol{Q}_{\leq m},\boldsymbol{Q}_{>m}\rangle\cap\tilde{\boldsymbol{P}}_{\leq m}\\
 & =\langle\boldsymbol{Q}_{\leq m},\boldsymbol{Q}_{\geq m}\rangle\cap\tilde{\boldsymbol{P}}_{\leq m}\\
 & =\mathbb{P}_{\leq m}(\langle\boldsymbol{Q}_{\leq m},\boldsymbol{Q}_{\geq m}\rangle)\cap\tilde{\boldsymbol{P}}_{\leq m}\\
 & =\langle\boldsymbol{Q}_{\leq m},\mathbb{P}_{\leq m}(\langle\boldsymbol{Q}_{\geq m}\rangle\rangle\cap\tilde{\boldsymbol{P}}_{\leq m}\\
 & =\langle\boldsymbol{Q}_{\leq m},\boldsymbol{S}_{\text{right}}^{m}\rangle\cap\tilde{\boldsymbol{P}}_{\leq m},
\end{aligned}
\label{eq:Sright_to_past_der}
\end{equation}
which is only dependent on $\boldsymbol{Q}_{\leq m}$. Here we used $\boldsymbol{Q}_{\leq m}\subseteq\mathcal{P}_{\leq m}$ and \thref{projectAB} in the fourth line. As a summary, given 
\begin{equation}
A_{m}(\boldsymbol{Q})=(\boldsymbol{S}_{\text{proj}}^{m}(\boldsymbol{Q}_{\leq m}),\tilde{\boldsymbol{P}}_{\text{invalid}}^{m}(\boldsymbol{Q}_{\leq m}),\boldsymbol{S}_{\text{right}}^{m}(\boldsymbol{Q}_{\geq m})),
\end{equation}
the four conditions are now
\begin{equation}
\begin{aligned}-I & \notin\langle\boldsymbol{Q}_{>m}\rangle\\
\boldsymbol{Q}_{>m}\cap\tilde{\boldsymbol{P}}_{\text{invalid}}^{m} & =\emptyset\\
\langle\boldsymbol{S}_{\text{proj}}^{m},\boldsymbol{Q}_{>m}\rangle\cap\tilde{\boldsymbol{P}}_{>m} & =\boldsymbol{Q}_{>m}.
\end{aligned}
\label{eq:Qfuture_all}
\end{equation}
for conditions 1, 2, 3, which depend on $\boldsymbol{Q}_{>m}$, and
\begin{equation}
\langle\boldsymbol{Q}_{\leq m},\boldsymbol{S}_{\text{right}}^{m}\rangle\cap\tilde{\boldsymbol{P}}_{\leq m}=\boldsymbol{Q}_{\leq m}.\label{eq:Sright_to_past}
\end{equation}
for condition 4, which depends on $\boldsymbol{Q}_{\leq m}$. Recalling that \thref{Qpast_to_Qfuture} additionally requires $\boldsymbol{Q}_{\leq m}\in\mathscr{S}(\boldsymbol{P}_{\leq m})$, we conclude that Eq. \ref{eq:Markov_independence_Q2} holds with
\begin{equation}
\{\boldsymbol{Q}_{\leq m}|A_{m}\}=\{\boldsymbol{Q}_{\leq m}\in\mathscr{S}(\boldsymbol{P}_{\leq m})\text{ satisfying Eq. \ref{eq:Sright_to_past}}\}\label{eq:expr_Am_to_Qpast}
\end{equation}
and
\begin{equation}
\{\boldsymbol{Q}_{>m}|A_{m}\}=\{\boldsymbol{Q}_{>m}\subseteq\tilde{\boldsymbol{P}}_{>m}\text{ satisfying Eq. \ref{eq:Qfuture_all}}\}.\label{eq:expr_Am_to_Qfuture}
\end{equation}

Now we show that $A_{m}$ is a state machine. The existence of the state function $G$ is already given in Eq. \ref{eq:G}. Therefore we just need to prove Eq. \ref{eq:Am_restriction_Q}. The ``$\leftarrow"$ part is trivial due to the state function $G$. For the ``$\rightarrow"$ part, $\{\boldsymbol{Q}_{\leq m},A_{m}\}\rightarrow\{A_{1},...,A_{m}\}$ can be obtained by $A_{i}=(\boldsymbol{S}_{\text{proj}}^{i}(\boldsymbol{Q}_{\leq i}),\tilde{\boldsymbol{P}}_{\text{invalid}}^{i}(\boldsymbol{Q}_{\leq i}),\boldsymbol{S}_{\text{right}}^{i}=\mathbb{P}_{\leq i}(\boldsymbol{Q}_{\geq i}\rangle)=\mathbb{P}_{\leq i}(\boldsymbol{S}_{\text{right}}^{m}))$ for $i\leq m$. To prove $\{\boldsymbol{Q}_{>m},A_{m}\}\rightarrow\{A_{m},...,A_{n}\}$, we show that $A_{m+1}$ can be obtained by $A_{m}$ and $\boldsymbol{S}_{\text{right}}^{m+1}$, i.e
\begin{equation}
A_{m+1}=A_{m+1}(A_{m},\boldsymbol{S}_{\text{right}}^{m+1}).
\end{equation}
Once it is true, $\{A_{m},...,A_{n}\}$ can be obtained by applying $A_{i+1}=A_{i+1}(A_{i},\boldsymbol{S}_{\text{right}}^{i+1}(\boldsymbol{Q}_{\geq i+1}))$ for $i=m,...,n-1$. To construct $A_{m+1}(A_{i},\boldsymbol{S}_{\text{right}}^{m+1})$, we only need to derive $\boldsymbol{S}_{\text{proj}}^{m+1}$ and $\tilde{\boldsymbol{P}}_{\text{invalid}}^{m+1}$. Specifically, we have:

\begin{equation}
\begin{aligned}\boldsymbol{S}_{\text{proj}}^{m+1}(\boldsymbol{Q}_{\leq m+1}) & =\mathbb{P}_{>m-k+1}(\langle\boldsymbol{Q}_{\leq m},\boldsymbol{Q}_{m+1}\rangle)\\
 & =\langle\mathbb{P}_{>m-k+1}(\langle\boldsymbol{Q}_{\leq m}\rangle),\boldsymbol{Q}_{m+1}\rangle\\
 & =\langle\mathbb{P}_{>m-k+1}(\boldsymbol{S}_{\text{proj}}^{m}(\boldsymbol{Q}_{\leq m})),\boldsymbol{Q}_{m+1}\rangle\\
 & :=\boldsymbol{S}_{\text{proj}}^{m+1}(\boldsymbol{S}_{\text{proj}}^{m},\boldsymbol{Q}_{m+1}),
\end{aligned}
\label{eq:Sproj_next_der}
\end{equation}
where the second line used $\boldsymbol{Q}_{m+1}\subseteq\mathcal{P}_{>m-k+1}$ and \thref{projectAB}, and

\begin{equation}
\begin{aligned}\tilde{\boldsymbol{P}}_{\text{invalid}}^{m+1}(\boldsymbol{Q}_{\leq m+1}) & =\{P\in\tilde{\boldsymbol{P}}_{>m+1}|[P,\boldsymbol{Q}_{\geq m+1}]\neq0\}\\
 & =\{P\in\tilde{\boldsymbol{P}}_{\geq m+1}|P\in\tilde{\boldsymbol{P}}_{\text{invalid}}^{m}(\boldsymbol{Q}_{\leq m})\text{ or }[P,\boldsymbol{Q}_{m+1}]\neq0\}\\
 & :=\tilde{\boldsymbol{P}}_{\text{invalid}}^{m+1}(\tilde{\boldsymbol{P}}_{\text{invalid}}^{m},\boldsymbol{Q}_{m+1}),
\end{aligned}
\label{eq:Pinvalid_next_der}
\end{equation}
which are the same with Eq. \ref{eq:Sproj_next} and Eq. \ref{eq:Pinvalid_next}. Finally, $\boldsymbol{Q}_{m+1}$ is given by $\boldsymbol{S}_{\text{right}}^{m+1}\cap\tilde{\boldsymbol{P}}_{m+1}$ according to Eq. \ref{eq:G}, thus we conclude that $A_{m+1}$ can indeed be rewritten as a function of $A_{m}$ and $\boldsymbol{S}_{\text{right}}^{m+1}$.

\subsection{Derivation of Eq. \ref{eq:Sright_next_conditions} \label{appendix:proof_F}}

\thref{relaxed_F} requires that, for any path $\{A_{1},...,A_{m+1}|A_{i+1}\in\tilde{F}(A_{i})\}$, if $A_{m+1}\in\mathcal{A}_{m+1}$, then $A_{m}\in\mathcal{A}_{m}$. We require a stronger version: if $A_{m+1}=A_{m+1}(\boldsymbol{Q})$ for some \textbf{$\boldsymbol{Q}\in\mathscr{S}(\boldsymbol{P})$}, then $A_{m}=A_{m}(\boldsymbol{Q}^{\prime})$, where $\boldsymbol{Q}^{\prime}=\boldsymbol{Q}_{>m+1}\cup\boldsymbol{Q}_{\leq m}^{\prime}$, $\boldsymbol{Q}_{\leq m}^{\prime}=\cup_{i=1}^{m}G(A_{m})$. Clearly it implies $A_{m}\in\mathcal{A}_{m}$ given $A_{m+1}\in\mathcal{A}_{m+1}$. Additionally, we require that $\boldsymbol{Q}_{\leq m}^{\prime}\in\{\boldsymbol{Q}_{\leq m}|A_{m}\}$. As shown in Eq. \ref{eq:expr_Am_to_Qpast}, it is equivalent to

\begin{equation}
\begin{aligned}\boldsymbol{Q}_{\leq m}^{\prime} & \in\mathscr{S}(\boldsymbol{P}_{\leq m}),\\
\langle\boldsymbol{Q}_{\leq m}^{\prime},\boldsymbol{S}_{\text{right}}^{m}\rangle\cap\tilde{\boldsymbol{P}}_{\leq m} & =\boldsymbol{Q}_{\leq m}^{\prime}.
\end{aligned}
\label{eq:relaxed_Am_requirement}
\end{equation}

Now we construct $\tilde{F}(A_{m})$ to satisfy the above two requirements. Following the logic of Sec. \ref{sec:F}, we essentially need to construct $\tilde{\mathcal{S}}_{\text{right}}^{m+1}(A_{m})$, and $\tilde{F}(A_{m})$ is then given in Eq. \ref{eq:F_Am_relaxed}. The latter requirement $\boldsymbol{Q}_{\leq m}^{\prime}\in\{\boldsymbol{Q}_{\leq m}|A_{m}\}$ is equivalent to say that, if Eq. \ref{eq:relaxed_Am_requirement} holds for $m$, it also holds for $m+1$. Given $\boldsymbol{Q}_{\leq m}^{\prime}=\cup_{i=1}^{m}G(A_{i})\in\mathscr{S}(\boldsymbol{P}_{\leq m})$, we want to construct $\boldsymbol{Q}_{m+1}^{\prime}=\boldsymbol{S}_{\text{right}}^{m+1}\cap\tilde{\boldsymbol{P}}_{m+1}$ such that $\boldsymbol{Q}_{\leq m+1}^{\prime}=\boldsymbol{Q}_{\leq m}^{\prime}\cup\boldsymbol{Q}_{m+1}^{\prime}\in\mathscr{S}(\boldsymbol{P}_{\leq m+1})$. We can derive it by replacing all subscripts $>m$ to $m+1$ in \thref{Qpast_to_Qfuture} and the following derivations. With such replacements, we require $\boldsymbol{Q}_{m+1}^{\prime}$ to satisfy
\begin{equation}
\begin{aligned}-I & \notin\langle\boldsymbol{Q}_{m+1}^{\prime}\rangle\\
\boldsymbol{Q}_{m+1}^{\prime}\cap\tilde{\boldsymbol{P}}_{\text{invalid}}^{m} & =\emptyset\\
\langle\boldsymbol{S}_{\text{proj}}^{m},\boldsymbol{Q}_{m+1}^{\prime}\rangle\cap\tilde{\boldsymbol{P}}_{m+1} & =\boldsymbol{Q}_{m+1}^{\prime}
\end{aligned}
\label{eq:Qnext_all}
\end{equation}
for the first three conditions, and
\begin{equation}
\langle\boldsymbol{Q}_{\leq m}^{\prime},\boldsymbol{Q}_{m+1}^{\prime}\rangle\cap\tilde{\boldsymbol{P}}_{\leq m}=\boldsymbol{Q}_{\leq m}^{\prime}\label{eq:Qnext_last}
\end{equation}
for the last condition. Similarly, given $\langle\boldsymbol{Q}_{\leq m}^{\prime},\boldsymbol{S}_{\text{right}}^{m}\rangle\cap\tilde{\boldsymbol{P}}_{\leq m}=\boldsymbol{Q}_{\leq m}^{\prime}$, we require $\langle\boldsymbol{Q}_{\leq m+1}^{\prime},\boldsymbol{S}_{\text{right}}^{m+1}\rangle\cap\tilde{\boldsymbol{P}}_{\leq m+1}=\boldsymbol{Q}_{\leq m+1}^{\prime}$, which can be decomposed to the $\tilde{\boldsymbol{P}}_{\leq m}$ part and $\tilde{\boldsymbol{P}}_{\leq m}$ part as:
\begin{equation}
\begin{aligned}\boldsymbol{Q}_{m+1}^{\prime} & =\langle\boldsymbol{Q}_{\leq m+1}^{\prime},\boldsymbol{S}_{\text{right}}^{m+1}\rangle\cap\tilde{\boldsymbol{P}}_{m+1}\\
 & =\langle\mathbb{P}_{\leq m+1}(\boldsymbol{Q}_{\leq m+1}^{\prime}),\boldsymbol{S}_{\text{right}}^{m+1}\rangle\cap\tilde{\boldsymbol{P}}_{m+1}\\
 & =\langle\boldsymbol{S}_{\text{proj}}^{m+1}(\boldsymbol{Q}_{\leq m+1}^{\prime}),\boldsymbol{S}_{\text{right}}^{m+1}\rangle\cap\tilde{\boldsymbol{P}}_{m+1}\\
 & =\langle\boldsymbol{S}_{\text{proj}}^{m+1}(\boldsymbol{S}_{\text{proj}}^{m},\boldsymbol{Q}_{m+1}^{\prime}),\boldsymbol{S}_{\text{right}}^{m+1}\rangle\cap\tilde{\boldsymbol{P}}_{m+1}
\end{aligned}
\label{eq:F_for_Qnext}
\end{equation}
where $\boldsymbol{S}_{\text{proj}}^{m+1}(\boldsymbol{S}_{\text{proj}}^{m},\boldsymbol{Q}_{m+1}^{\prime})$ is given in Eq. \ref{eq:Sproj_to_future}, and 
\begin{equation}
\begin{aligned}\boldsymbol{Q}_{\leq m}^{\prime} & =\langle\boldsymbol{Q}_{\leq m+1}^{\prime},\boldsymbol{S}_{\text{right}}^{m+1}\rangle\cap\tilde{\boldsymbol{P}}_{\leq m}\\
 & =\langle\boldsymbol{Q}_{\leq m}^{\prime},\boldsymbol{S}_{\text{right}}^{m+1}\rangle\cap\tilde{\boldsymbol{P}}_{\leq m}\\
 & =\langle\boldsymbol{Q}_{\leq m}^{\prime},\langle\mathbb{P}_{\leq m}(\boldsymbol{S}_{\text{right}}^{m+1}),\boldsymbol{Q}_{m}^{\prime}\rangle\rangle\cap\tilde{\boldsymbol{P}}_{\leq m},
\end{aligned}
\label{eq:F_for_Qpast}
\end{equation}
where in the second line we used $\boldsymbol{Q}_{m+1}^{\prime}=\boldsymbol{S}_{\text{right}}^{m+1}\cap\tilde{\boldsymbol{P}}_{m+1}\subseteq\boldsymbol{S}_{\text{right}}^{m+1}$. By comparing with $\langle\boldsymbol{Q}_{\leq m}^{\prime},\boldsymbol{S}_{\text{right}}^{m}\rangle\cap\tilde{\boldsymbol{P}}_{\leq m}=\boldsymbol{Q}_{\leq m}^{\prime}$, Eq. \ref{eq:F_for_Qpast} is satisfied if
\begin{equation}
\langle\mathbb{P}_{\leq m}(\boldsymbol{S}_{\text{right}}^{m+1}),\boldsymbol{Q}_{m}^{\prime}\rangle=\boldsymbol{S}_{\text{right}}^{m}.\label{eq:connect_Sright}
\end{equation}

Now we consider the former requirement $A_{m}=A_{m}(\boldsymbol{Q}^{\prime})$, which is equivalent to $\boldsymbol{S}_{\text{proj}}^{m}=\boldsymbol{S}_{\text{proj}}^{m}(\boldsymbol{Q}_{\leq m}^{\prime})$, $\tilde{\boldsymbol{P}}_{\text{invalid}}^{m}=\tilde{\boldsymbol{P}}_{\text{invalid}}^{m}(\boldsymbol{Q}_{\le m}^{\prime})$, and $\boldsymbol{S}_{\text{right}}^{m}=\boldsymbol{S}_{\text{right}}^{m}(\boldsymbol{Q}_{\geq m}^{\prime}=\boldsymbol{Q}_{>m+1}\cup G(A_{m+1}))$ given $A_{m+1}=A_{m+1}(\boldsymbol{Q})$, where $\boldsymbol{Q}^{\prime}=\boldsymbol{Q}_{>m+1}\cup\boldsymbol{Q}_{\leq m}^{\prime}$, $\boldsymbol{Q}_{\leq m}^{\prime}=\cup_{i=1}^{m}G(A_{m})$. The first two are trivial since $\boldsymbol{S}_{\text{proj}}^{m}$ and $\tilde{\boldsymbol{P}}_{\text{invalid}}^{m}$ are recursively constructed by $\boldsymbol{S}_{\text{proj}}^{i+1}=\boldsymbol{S}_{\text{proj}}^{i+1}(\boldsymbol{S}_{\text{proj}}^{i},\boldsymbol{Q}_{i+1}^{\prime})$ and $\tilde{\boldsymbol{P}}_{\text{invalid}}^{i+1}(\tilde{\boldsymbol{P}}_{\text{invalid}}^{i},\boldsymbol{Q}_{i+1}^{\prime})$. Combined with Eq. \ref{eq:Sproj_next_der} and Eq. \ref{eq:Pinvalid_next_der}, we can recursive derive $\boldsymbol{S}_{\text{proj}}^{i+1}=\boldsymbol{S}_{\text{proj}}^{i+1}(\boldsymbol{Q}_{\leq i+1}^{\prime})$ from $\boldsymbol{S}_{\text{proj}}^{i}=\boldsymbol{S}_{\text{proj}}^{i+1}(\boldsymbol{Q}_{\leq i}^{\prime})$, $\tilde{\boldsymbol{P}}_{\text{invalid}}^{i+1}=\tilde{\boldsymbol{P}}_{\text{invalid}}^{i+1}(\boldsymbol{Q}_{\le i+1}^{\prime})$ from $\tilde{\boldsymbol{P}}_{\text{invalid}}^{i}=\tilde{\boldsymbol{P}}_{\text{invalid}}^{i}(\boldsymbol{Q}_{\le i}^{\prime})$. We show that $\boldsymbol{S}_{\text{right}}^{m}=\boldsymbol{S}_{\text{right}}^{m}(\boldsymbol{Q}_{\geq m}^{\prime}=\boldsymbol{Q}_{>m+1}\cup G(A_{m+1}))$ can be derived from Eq. \ref{eq:connect_Sright} as
\begin{equation}
\begin{aligned}\boldsymbol{S}_{\text{right}}^{m}(\boldsymbol{Q}_{\geq m}^{\prime}) & =\mathbb{P}_{\leq m}(\langle\boldsymbol{Q}_{\geq m}^{\prime}\rangle)\\
 & =\langle\mathbb{P}_{\leq m}(\langle\boldsymbol{Q}_{\geq m+1}^{\prime}\rangle),\boldsymbol{Q}_{m}^{\prime}\rangle\\
 & =\langle\mathbb{P}_{\leq m}(\mathbb{P}_{\leq m+1}(\langle\boldsymbol{Q}_{\geq m+1}^{\prime}\rangle)),\boldsymbol{Q}_{m}^{\prime}\rangle\\
 & =\langle\mathbb{P}_{\leq m}(\boldsymbol{S}_{\text{right}}^{m+1}(\boldsymbol{Q}_{\geq m+1}^{\prime})),\boldsymbol{Q}_{m}^{\prime}\rangle\\
 & =\langle\mathbb{P}_{\leq m}(\boldsymbol{S}_{\text{right}}^{m+1}),\boldsymbol{Q}_{m}^{\prime}\rangle\\
 & =\boldsymbol{S}_{\text{right}}^{m},
\end{aligned}
\end{equation}
where in the last line we used Eq. \ref{eq:connect_Sright}.

Until now, we have shown that $A_{m+1}\in\mathcal{A}_{m+1}$ implies $A_{m}\in\mathcal{A}_{m}$ if Eq. \ref{eq:Qnext_all}, Eq. \ref{eq:Qnext_last}, Eq. \ref{eq:F_for_Qnext}, and Eq. \ref{eq:connect_Sright} are satisfied. We now simplify these conditions as follows. First we show that Eq. \ref{eq:Qnext_last} can be derived from Eq. \ref{eq:connect_Sright} and $\langle\boldsymbol{Q}_{\leq m}^{\prime},\boldsymbol{S}_{\text{right}}^{m}\rangle\cap\tilde{\boldsymbol{P}}_{\leq m}=\boldsymbol{Q}_{\leq m}^{\prime}$ by observing
\begin{equation}
\begin{aligned}\boldsymbol{Q}_{\leq m}^{\prime} & =\langle\boldsymbol{Q}_{\leq m}^{\prime}\rangle\cap\tilde{\boldsymbol{P}}_{\leq m}\\
 & \subseteq\langle\boldsymbol{Q}_{\leq m}^{\prime},\boldsymbol{Q}_{m+1}^{\prime}\rangle\cap\tilde{\boldsymbol{P}}_{\leq m}\\
 & \subseteq\langle\boldsymbol{Q}_{\leq m}^{\prime},\boldsymbol{S}_{\text{right}}^{m+1}\rangle\cap\tilde{\boldsymbol{P}}_{\leq m}\\
 & =\langle\boldsymbol{Q}_{\leq m}^{\prime},\mathbb{P}_{\leq m}(\boldsymbol{S}_{\text{right}}^{m+1})\rangle\cap\tilde{\boldsymbol{P}}_{\leq m}\\
 & =\langle\boldsymbol{Q}_{\leq m}^{\prime},\boldsymbol{S}_{\text{right}}^{m}\rangle\cap\tilde{\boldsymbol{P}}_{\leq m}\\
 & =\boldsymbol{Q}_{\leq m}^{\prime},
\end{aligned}
\end{equation}
which implies that all the $\subseteq$ relations must be $=$. Similarly, the third line in Eq. \ref{eq:Qnext_all} can be derived from Eq. \ref{eq:F_for_Qnext} by observing
\begin{equation}
\begin{aligned}\boldsymbol{Q}_{m+1}^{\prime} & =\langle\boldsymbol{Q}_{m+1}^{\prime}\rangle\cap\tilde{\boldsymbol{P}}_{m+1}\\
 & \subseteq\langle\boldsymbol{S}_{\text{proj}}^{m},\boldsymbol{Q}_{m+1}^{\prime}\rangle\cap\tilde{\boldsymbol{P}}_{m+1}\\
 & \subseteq\langle\boldsymbol{S}_{\text{proj}}^{m+1},\boldsymbol{S}_{\text{right}}^{m+1}\rangle\cap\tilde{\boldsymbol{P}}_{m+1}\\
 & =\boldsymbol{Q}_{m+1}^{\prime}.
\end{aligned}
\end{equation}
Thus it simplies to the four conditions in Eq. \ref{eq:Sright_next_conditions}, i.e.
\begin{equation}
\begin{aligned}-I & \notin\langle\boldsymbol{Q}_{m+1}^{\prime}\rangle,\\
\boldsymbol{Q}_{m+1}^{\prime}\cap\tilde{\boldsymbol{P}}_{\text{invalid}}^{m} & =\emptyset,\\
\langle\mathbb{P}_{\leq m}(\boldsymbol{S}_{\text{right}}^{m+1}),\boldsymbol{Q}_{m}^{\prime}\rangle & =\boldsymbol{S}_{\text{right}}^{m},\\
\boldsymbol{S}_{\text{proj}}^{m+1}\cap\boldsymbol{S}_{\text{right}}^{m+1} & =\boldsymbol{Q}_{m+1}^{\prime},
\end{aligned}
\end{equation}
where $\boldsymbol{S}_{\text{proj}}^{m+1}=\boldsymbol{S}_{\text{proj}}^{m+1}(\boldsymbol{S}_{\text{proj}}^{m},\boldsymbol{Q}_{m+1}^{\prime}=\boldsymbol{S}_{\text{right}}^{m+1}\cap\tilde{\boldsymbol{P}}_{m+1})$ is a function of $\boldsymbol{S}_{\text{right}}^{m+1}$. This expression is the same with Eq. \ref{eq:Sright_next_conditions}.

Besides, one can easily verify that Eq. \ref{eq:Sright_next_conditions} is satisfied for $A_{m}(\boldsymbol{Q})$ and $A_{m+1}(\boldsymbol{Q})$ if $\boldsymbol{Q}\in\mathscr{S}(\boldsymbol{P})$. This means that $\tilde{F}$ defined by Eq. \ref{eq:F_Am_relaxed} and Eq. \ref{eq:relaxed_Sright_next} has the same behavior with $F$ in the valid region $\mathcal{A}_{m}$, which is the first requirement of \thref{relaxed_F}. Finally, we can conclude that such $\tilde{F}$ is a relaxed transition function according to \thref{relaxed_F}.

\subsection{Proof of \thref{local_nS} \label{appendix:proof_local_nS} }
\begin{proof}
Following Sec. \ref{sec:F}, and especially Eq. \ref{eq:relaxed_Am_requirement}, for each (partial) path $\{A_{i=1}^{m}|A_{i+1}\in\tilde{F}(A_{i})\}$, let $\boldsymbol{Q}_{\leq m}=\cup_{i\leq m}\boldsymbol{Q}_{i}=\cup_{i\leq m}\boldsymbol{S}_{\text{right}}^{i}\cap\tilde{\boldsymbol{P}}_{i}$, we always have $\boldsymbol{Q}_{\leq m}\in\mathscr{S}(\boldsymbol{P}_{\leq m})$, $\boldsymbol{S}_{\text{proj}}^{m}=\boldsymbol{S}_{\text{proj}}^{m}(\boldsymbol{Q}_{\leq m})$ and $\tilde{\boldsymbol{P}}_{\text{invalid}}^{m}=\tilde{\boldsymbol{P}}_{\text{invalid}}^{m}(\boldsymbol{Q}_{\leq m})$. In the following we give the upper bound of the number of pairs $(\boldsymbol{S}_{\text{proj}}^{m},\tilde{\boldsymbol{P}}_{\text{invalid}}^{m})$ and the number of $\boldsymbol{S}_{\text{right}}^{m}$, respectively.

First, we can prove that the pair $(\boldsymbol{S}_{\text{proj}}^{m},\tilde{\boldsymbol{P}}_{\text{invalid}}^{m})$ can be determined from 
\begin{equation}
\boldsymbol{S}^{\prime}=\mathbb{P}_{>m-2k+2}(\langle\boldsymbol{Q}_{\leq m}\rangle).
\end{equation}
For $\boldsymbol{S}_{\text{proj}}^{m}$ we have
\begin{align}
\boldsymbol{S}_{\text{proj}}^{m} & =\mathbb{P}_{>m-k}(\langle\boldsymbol{Q}_{\leq m}\rangle)\\
 & =\mathbb{P}_{>m-k}(\boldsymbol{S}^{\prime}).
\end{align}
For $\tilde{\boldsymbol{P}}_{\text{invalid}}^{m}$, since $P\in\tilde{\boldsymbol{P}}_{>m}$ has $q_{P}^{\text{begin}}>m-k+1$, any $Q\in\boldsymbol{Q}_{<m}$ with $\{P,Q\}=0$ must have $q_{P}^{\text{last}}>m-k+1$ and thus $q_{P}^{\text{begin}}>m-2k+2$, so we have $Q\in\boldsymbol{S}^{\prime}$. Thus
\begin{align}
\tilde{\boldsymbol{P}}_{\text{invalid}}^{m} & =\{P\in\tilde{\boldsymbol{P}}_{>m}|[P,\boldsymbol{Q}_{\leq m}]\neq0\}\\
 & =\{P\in\tilde{\boldsymbol{P}}_{>m}|[P,\boldsymbol{S}^{\prime}]\neq0\}.
\end{align}
For simplify we use $\boldsymbol{S}_{\text{proj}}^{m}(\boldsymbol{S}^{\prime})$ and $\tilde{\boldsymbol{P}}_{\text{invalid}}^{m}(\boldsymbol{S}^{\prime})$ as the shorthands of the above two formulas.

Next, we have
\begin{equation}
\boldsymbol{S}^{\prime}\in\{\langle\boldsymbol{Q}\rangle|\boldsymbol{Q}\in\mathscr{S}(\mathbb{T}_{>m-2k+2}(\tilde{\boldsymbol{P}}_{\leq m}))\}\label{eq:num_Sprime}
\end{equation}
Since any $P\in\boldsymbol{P}_{\leq m}$ with $\mathbb{T}_{>m-2k+2}\neq I$ must have $q_{P}^{\text{last}}>m-2k+2$, thus $\mathbb{T}_{>m-2k+2}(\tilde{\boldsymbol{P}}_{\leq m})$ has at most $(2k-2)M$ non-identity elements. Also $\boldsymbol{S}^{\prime}$ is a $2k-2$ qubit stabilizer group. According to \thref{nS}, there are at most $(2(2k-2)M)^{2k-1}$ choices of $\boldsymbol{S}^{\prime}$.

Finally, according to Eq. \ref{eq:relaxed_Sright_next}, we have 
\begin{equation}
\boldsymbol{S}_{\text{right}}^{m}\in\mathcal{T}_{\text{right}}^{m}\equiv\{\langle\boldsymbol{Q}\rangle|\boldsymbol{Q}\in\mathscr{S}(\mathbb{T}_{\leq m}(\boldsymbol{P}_{\geq m}))\}\label{eq:num_Sright}
\end{equation}
For the similar reason, there are at most $kM$ non-identity elements in $\mathbb{T}_{\leq m}(\tilde{\boldsymbol{P}}_{\geq m})$. Thus there are at most $(2kM)^{k+1}$ choices of $\boldsymbol{S}_{\text{right}}^{m}$.

Combining the above results, we can construct
\begin{equation}
\tilde{\mathcal{A}}_{m}^{\prime}=\{(\boldsymbol{S}_{\text{proj}}^{m}(\boldsymbol{S}^{\prime}),\tilde{\boldsymbol{P}}_{\text{invalid}}^{m}(\boldsymbol{S}^{\prime}),\boldsymbol{S}_{\text{right}}^{m})|\boldsymbol{S}^{\prime},\boldsymbol{S}_{\text{right}}^{m}\text{\text{satisfy Eq. \ref{eq:num_Sprime} and Eq. \ref{eq:num_Sright}}}\}.
\end{equation}
Clearly we have $\tilde{\mathcal{A}}_{m}^{\prime}\supseteq\tilde{\mathcal{A}}_{m}$ and $|\tilde{\mathcal{A}}_{m}^{\prime}|\leq(2(2k-2)M)^{2k-1}\cdot(2kM)^{k+1}<(4kM)^{3k}$. Also both $\mathbb{T}_{\leq m}(\tilde{\boldsymbol{P}}_{\geq m})$ and $\mathbb{T}_{>m-2k+2}(\tilde{\boldsymbol{P}}_{<m})$ can be determined by $\mathbb{T}_{m-2k+1,m}(\tilde{\boldsymbol{P}})$. Thus $\tilde{\mathcal{A}}_{m}^{\prime}$ can be determined by $\mathbb{T}_{m-2k+1,m}(\tilde{\boldsymbol{P}})$.
\end{proof}

\subsection{Proof of \thref{periodic_SM} \label{appendix:proof_periodic}}

We prove that the stabilizer ground state per site of an infinite periodic local and sparse Hamiltonian with period $l$ is given by:
\begin{equation}
E_{\text{gs}}^{\text{periodic}}=\min_{c,A,\{A_{m=0}^{cl}|A_{i+1}\in\tilde{F}(A_{i}),A_{0}\simeq A_{cl}\simeq A\}}\frac{1}{cl}\sum_{m=1}^{cl}h_{m}(G(A_{m})).
\end{equation}
Strictly speaking, let $H_{n}$ be the $n$-qubit Hamiltonian truncated from the infinite Hamiltonian (with arbitrary truncation strategy at edges), $E_{\text{gs}}^{\text{periodic}}$ above is defined by the minimum per-site stabilizer ground state of $H_{n}$ in the limit of $n\rightarrow\infty$.
\begin{equation}
E_{\text{gs}}^{\text{periodic}}=\lim_{n\rightarrow\infty}\frac{1}{n}\min_{\text{\ensuremath{n}-qubit stabilizer state}|\psi\rangle}\langle\psi|H_{n}|\psi\rangle.
\end{equation}

We first prove that, for any chain of state machine $A_{0},A_{1},...A_{n}$ such that $A_{m+1}\in F(A_{m})$ for each $m$, if $n\geq N_{A}l$ we can always find $c_{1}$ and $c_{2}$ that $A_{c_{1}l}\simeq A_{c_{2}l}$, and $|c_{1}-c_{2}|\leq N_{A}$, where $N_{A}$ is defined in \thref{local_nS}. According to \thref{local_nS}, each possible $A_{m}$ is taken from candidate values $\tilde{\mathcal{A}}_{m}^{\prime}\supseteq\tilde{\mathcal{A}}_{m}$, and $\tilde{\mathcal{A}}_{m}^{\prime}$ is periodic with period $l$, i.e. $\tilde{\mathcal{A}}_{m}^{\prime}\simeq\tilde{\mathcal{A}}_{m+l}^{\prime}$, since it is determined by local terms $\mathbb{T}_{m-2k+1,m}(\tilde{\boldsymbol{P}})$ in the Hamiltonian. Now we consider the equivalence set $\{A_{cl}|c\in\mathbb{Z}\}$. Since each $A_{cl}\in\tilde{\mathcal{A}}_{cl}\simeq\tilde{\mathcal{A}}_{0}$ for each $c$, and $\tilde{\mathcal{A}}_{0}$ has up to $N_{A}$ values according to \thref{local_nS}, we can find some $A_{c_{1}l}\simeq A_{c_{2}l}$ with $|c_{1}-c_{2}|\leq N_{A}$.

If $E_{\text{gs}}^{\text{periodic}}$ is not the stabilizer ground state energy per site, then for sufficiently large $n$, the stabilizer ground state energy $E_{\min}^{(n)}$ of $H_{n}=\sum_{P\in\boldsymbol{P}}w_{P}P$ can be lower than $nE_{\text{gs}}^{\text{periodic}}$ by arbitrary amount of energy $E_{C}$, i.e. $E_{\min}^{(n)}\leq nE_{\text{gs}}^{\text{periodic}}-E_{C}$. Let the stabilizer group of the corresponding $n$-qubit stabilizer state be $\boldsymbol{S}=\langle\boldsymbol{Q}^{\star}\rangle$, $\boldsymbol{Q}^{\star}\in\mathscr{S}(\boldsymbol{P})$, and let $A_{m}=A_{m}(\boldsymbol{Q}^{\star})$ for each $m$. The expectation energy is given by $E=\sum_{m=1}^{n}h_{m}(G(A_{m}))$. If $n>N_{A}l$, we can find $0\leq c_{1}\leq c_{2}\leq n$ such that $A_{c_{1}l}\simeq A_{c_{2}l}\simeq A$. Then we have $\Delta E=\sum_{m=c_{1}l+1}^{c_{2}l}h_{m}(G(A_{m}))\geq(c_{2}-c_{1})lE_{\text{gs}}^{\text{periodic}}$ according to the definition of $E_{\text{gs}}^{\text{periodic}}$. Now we remove $\{A_{i=c_{1}l+1}^{c_{2}l}\}$ from the path $\{A_{m=1}^{n}\}$, and consider the new path $\{A_{i}^{\prime}|0\leq i\leq n^{\prime}\}$, such that $A_{i}^{\prime}=A_{i}$ for $i\leq c_{1}l$ and $A_{i}^{\prime}\simeq A_{i+(c_{2}-c_{1})l}$ for $i>c_{1}l$, where $n^{\prime}=n-(c_{2}-c_{1})l$. Obviously, it still satisfies $A_{m+1}^{\prime}\in F(A_{m}^{\prime})$ for each $m$, thus it maps to some $n^{\prime}$-qubit stabilizer state with energy $E^{(n^{\prime})}=E_{\min}^{(n)}-\Delta E\leq E_{\min}^{(n)}-(c_{2}-c_{1})\tilde{E}_{\text{min}}\leq n^{\prime}\tilde{E}_{\min}-E_{C}$. We can then continue the above deleting process and create $\{A_{i}^{\prime\prime}|0\leq i\leq n^{\prime\prime}\}$, $\{A_{i}^{\prime\prime\prime}|0\leq i\leq n^{\prime\prime\prime}\}$, ..., until we end up with some $n^{*}\leq N_{A}l$ and the corresponding path $\{A_{i}^{*}|0\leq i\leq n^{*}\}$. Such path can map to some $n^{*}$-qubit stabilizer state $|\psi^{*}\rangle$ with energy $\langle\psi^{*}|H_{n^{*}}|\psi^{*}\rangle=E^{(n^{*})}\leq n^{*}\tilde{E}_{\min}-E_{C}$. However $\langle\psi^{*}|H_{n^{*}}|\psi^{*}\rangle$ should have a finite lower bound in the order of $O(N_{A}l)$, which is independent of $n$. Thus $E_{C}$ cannot be arbitrarily large, which conflicts with the assumption. Thus we conclude that $E_{\text{gs}}^{\text{periodic}}$ is the stabilizer ground state energy per site.

\subsection{Proof of \thref{local_preparation} \label{appendix:proof_local_preparation}}

\begin{figure}[tbh]
\centering

\includegraphics[width=0.75\linewidth]{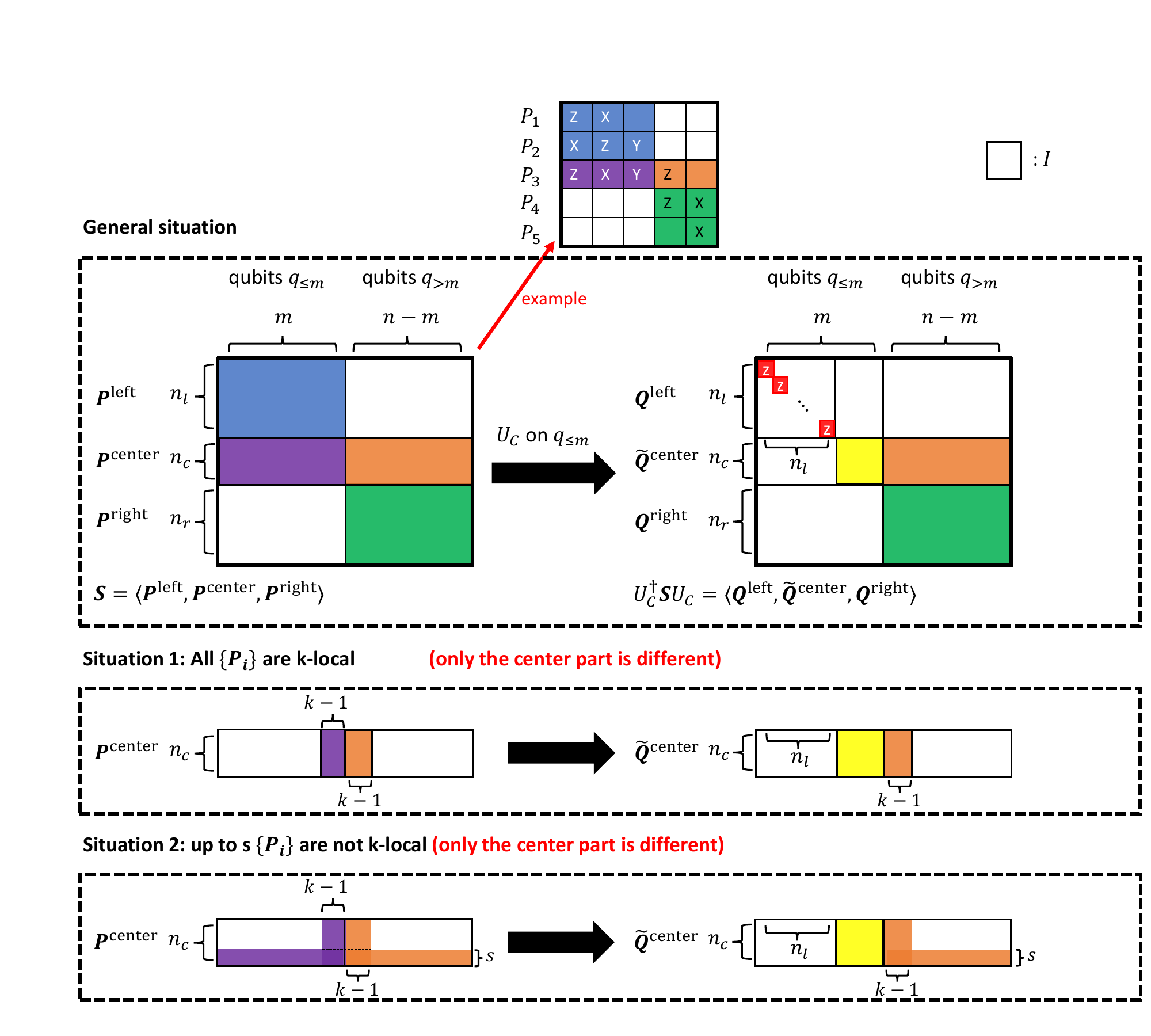}

\caption{Schematic diagram of a single iteration of the algorithm for Theorem 4. Blocks in the same color indicate the same values. Three situations are discussed, including (1) the general situation that all $\{P_{i}\}$ have no locality, (2) all $\{P_{i}\}$ are $k$-local, and (3) up to $s$ elements in $\{P_{i}\}$ are not $k$-local. The performed operation is the same for these three situations (i.e. simply the same algorithm with different types of inputs) except the choice of $m$ in these cases is different. For a given $m$, the qubits are divided into the left part $q_{\protect\leq m}$ and the right part $q_{>m}$, and $\{P_{i}\}$ are divided to $\boldsymbol{P}^{\text{left}}$, $\boldsymbol{P}^{\text{center}}$, and $\boldsymbol{P}^{\text{right}}$ by the part that each $P_{i}$ acts on. A Clifford $U_{C}$ is applied on $q_{\protect\leq m}$ to transform $\boldsymbol{P}^{\text{left}}$ to $\{Z_{1},Z_{2},...,Z_{n_{l}}\}$. One can prove that the other generators of the stabilizer groups are effectively transformed to qubits $q_{>n_{l}}$, i.e. the generators become two decoupled parts. In the general situation that $\{P_{i}\}$ are non-local, we cannot guarantee that $n_{l}>0$. If there are up to $s$ elements in $\{P_{i}\}$ that are not $k$-local, then for any $m\protect\geq3k-3$, one can guarantee that $n_{l}\protect\geq m-2k+2-s$. Thus we can let $m=4(k+s)$, which gives $n_{l}\protect\geq2k+3s+2\sim O(k+s)$. Thus with $O(n/(k+s))$ numbers of $O(k+s)$-qubit Clifford transformations we transform $\langle\boldsymbol{P}\rangle$ to $\langle Z_{1},...,Z_{n}\rangle$.}

\label{fig:preparation}

\end{figure}

Let $q_{I}$ be a shorthand of qubits in the index set $I$ with the same convention of $\tilde{\boldsymbol{P}}_{I}$ in Sec. \ref{sec:mapping}. We first consider a general situation that $\boldsymbol{P}=\{P_{i}\}$ are nonlocal, and we introduce a procedure to transform the stabilizers $\boldsymbol{S}$ and the generators \textbf{$\boldsymbol{P}$}. (see Fig \ref{fig:preparation}, general situation) Let $m$ be some arbitray integer, divide $\boldsymbol{P}=\{P_{1},..,P_{n}\}$ to three parts $\boldsymbol{P}^{\text{left}}=\{P_{1}^{\text{left}},...,P_{n_{l}}^{\text{left}}\}$, $\boldsymbol{P}^{\text{center}}=\{P_{1}^{\text{center}},...,P_{n_{c}}^{\text{center}}\}$, and $\boldsymbol{P}^{\text{right}}=\{P_{1}^{\text{right}},...,P_{n_{r}}^{\text{right}}\}$ such that, each $P_{i}^{\text{left}}$ only acts on qubits $q_{\leq m}$ (the blue block), each $P_{i}^{\text{right}}$ only acts on qubits $q_{>m}$ (the green block), and each $P_{i}^{\text{center}}$ acts on both qubits $q_{\leq m}$ (the purple block) and $q_{>m}$ (the orange block), where $n_{l},n_{c},n_{r}$ are number of elements in $\boldsymbol{P}^{\text{left}},\boldsymbol{P}^{\text{center}},\boldsymbol{P}^{\text{right}}$ respectively. For each $P_{i}^{\text{center}}$, we can uniquely decompose it to the left and right part, i.e. $P_{i}^{\text{center}}=P_{l,i}^{\text{center}}P_{r,i}^{\text{center}}$ (up to a choice of $\pm1$ sign), where $P_{l,i}^{\text{center}}$ (the purple block) only acts on $q_{\leq m}$, and $P_{r,i}^{\text{center}}$ (the orange block) only acts on qubits $q_{>m}$. Next, we construct a Clifford transformation $U_{C}$ on $q_{\leq m}$ such that $U_{C}^{\dagger}P_{i}^{\text{left}}U_{C}=Z_{i}$ (small red blocks) for each $P_{i}^{\text{left}}$. The existence of such $U_{C}$ comes from the fact that elements in $\boldsymbol{P}_{\text{left}}$ commute with each other and are independent. Now we consider the elements in $U_{C}^{\dagger}\boldsymbol{P}U_{C}$, which are generators of the transformed stabilizer group $U_{C}^{\dagger}\boldsymbol{S}U_{C}$. We keep the above conventions but use symbol $Q$ to denote the transformed Pauli operators, i.e. $\boldsymbol{Q}=U_{C}^{\dagger}\boldsymbol{P}U_{C}$, $\boldsymbol{Q}^{\mu}=U_{C}^{\dagger}\boldsymbol{P}^{\mu}U_{C}$ and $Q_{i}^{\mu}=U_{C}^{\dagger}P_{i}^{\mu}U_{C}$ for $\mu=\text{left},\text{center},\text{right}$. Thus we have $Q_{i}^{\text{left}}=Z_{i}$ according to the definition, and $Q_{i}^{\text{right}}=P_{i}^{\text{right}}$ since $U_{C}$ acts on $q_{\leq m}$. For $Q_{i}^{\text{center}}=Q_{l,i}^{\text{center}}Q_{r,i}^{\text{center}}$, we have $Q_{r,i}^{\text{center}}=P_{r,i}^{\text{center}}$. We then focus on $Q_{l,i}^{\text{center}}$. Since elements of a stabilizer group commute with each other, we have
\begin{equation}
\begin{aligned}[][Q_{l,i}^{\text{center}},Z_{j}^{\text{left}}] & =[Q_{l,i}^{\text{center}},Q_{j}^{\text{left}}]\\
 & =[P_{l,i}^{\text{center}},P_{j}^{\text{left}}]\\
 & =[P_{i}^{\text{center}},P_{j}^{\text{left}}]\\
 & =0
\end{aligned}
\end{equation}
for each $i,j$, where the first line uses $Q_{j}^{\text{left}}=Z_{j}^{\text{left}}$, the second line uses the fact that Clifford transformation does not change commutation relation, the third line uses the fact that $P_{r,i}^{\text{center}}$ is on $q_{>m}$ so commutes with $P_{j}^{\text{left}}$, and the last line use the fact that generators of a stabilizer group commute. Thus we can write 
\begin{equation}
Q_{l,i}^{\text{center}}=\Big(\prod_{k=1}^{n_{l}}(Q_{k}^{\text{left}})^{s_{k}}\Big)\tilde{Q}_{l,i}^{\text{center}}
\end{equation}
such that $s_{k}\in\{0,1\}$, and $\tilde{Q}_{l,i}^{\text{center}}$ (the yellow block) acts on qubits $q_{n_{l}+1,m}$ (i.e. between qubit $n_{l}+1$ and $m$). Since $\langle P,PQ\rangle=\langle P,Q\rangle$, for the transformed stabilizer group $\langle\boldsymbol{Q}\rangle=\{\boldsymbol{Q}^{\text{left}},\boldsymbol{Q}^{\text{center}},\boldsymbol{Q}^{\text{right}}\}$, we can now remove the $\prod_{k=1}^{n_{l}}(Q_{k}^{\text{left}})^{s_{k}}$ part in each $Q_{i}^{\text{center}}=\prod_{k=1}^{n_{l}}(Q_{k}^{\text{left}})^{s_{k}}\tilde{Q}_{l,i}^{\text{center}}Q_{r,i}$. Rigorously, let $\tilde{Q}_{i}^{\text{center}}=\tilde{Q}_{l,i}^{\text{center}}P_{r,i}^{\text{center}}$ and $\tilde{\boldsymbol{Q}}^{\text{center}}=\{\tilde{Q}_{i}^{\text{center}}\}$, we have $\langle\boldsymbol{Q}\rangle=\langle\boldsymbol{Q}^{\text{left}},\tilde{\boldsymbol{Q}}^{\text{center}},\boldsymbol{Q}^{\text{right}}\rangle$ as well, where elements in $\tilde{\boldsymbol{Q}}^{\text{center}}$ acts on $q_{>n_{l}}$ and elements in $\boldsymbol{Q}^{\text{right}}$ acts on $q_{>m}$. Thus the first $n_{l}$ sites are decoupled from the other sites in the transformed stabilizer group $\langle\boldsymbol{Q}\rangle$, i.e. $\langle\boldsymbol{Q}\rangle=\langle Z_{1},...,Z_{n_{l}}\rangle\otimes\langle\tilde{\boldsymbol{Q}}^{\text{center}},\boldsymbol{Q}^{\text{right}}\rangle$, where $\langle\tilde{\boldsymbol{Q}}^{\text{center}},\boldsymbol{Q}^{\text{right}}\rangle$ is on qubits $q_{>n_{l}}$. Therefore, we have successfully reduced the $n$-qubit problem to a $n-n_{l}$ qubit problem on $q_{>n_{l}}$. All we do until now is some Clifford operation on $q_{\leq m}$. Since $\tilde{Q}_{l,i}^{\text{center}}$ can be viewed as a truncation of $Q_{l,i}^{\text{center}}$ with the global sign kept, we write it as
\begin{equation}
\tilde{\boldsymbol{Q}}^{\text{center}}=\mathbb{T}_{>n_{l}}^{*}(U_{C}^{\dagger}\boldsymbol{P}^{\text{center}}U_{C}),
\end{equation}
where $\mathbb{T}_{I}^{*}$ indicates the truncation operation to qubits $I$ with signs kept (similar to \thref{def:truncation}).

For general stabilizer groups and generators, there is no guarantee that $n_{l}>0$, so the above procedure may eventually do nothing. Now we consider the situation that $P_{1},...,P_{n}$ are all $k$-local (see Fig. \ref{fig:preparation}, situation 1), and we hope to construct some $m\sim O(k)$, so that when we can iteratively execute the above procedure, we can reduce the problem size by some $n_{l}>0$ in each iteration until $n=0$. When we execute the above procedure for the first time, each $P_{i}^{\text{center}}$ (the purple and orange block) is between qubit $m-k+2$ to $m+k-1$ (totally $2k-2$) due to the $k$-local condition, so $n_{c}\leq2k-2$, where we used the fact that a $n$-qubit stabilizer group has up to $n$ independent generators. For the same reason, we have $n_{l}\leq m$ and $n_{r}\leq n-m$. Since $n_{c}+n_{l}+n_{r}=n$, we should have $n_{l}=n-n_{c}-n_{r}\geq n-(2k-2)-(n-m)=m-2k+2$. Thus we just need to let $m>2k-2$, so that we have $n_{l}>0$ in the first iteration. When it comes to later iterations, the situation becomes slightly different. As explained above, the ``right'' part is unchanged (i.e.$\boldsymbol{Q}^{\text{right}}=\boldsymbol{P}^{\text{right}}$), so each $Q_{i}^{\text{right}}$ is still $k$-local. However, for each $\tilde{Q}_{i}^{\text{center}}=\tilde{Q}_{l,i}^{\text{center}}P_{r,i}^{\text{center}}$, $\tilde{Q}_{l,i}^{\text{center}}$ (the yellow block) is on $q_{n_{l}+1,m}$ and $P_{r,i}^{\text{center}}$ (the orange block) is on $q_{m+1,m+k-1},$so we can only ensure that $Q_{i}^{\text{center}}$ is on $q_{n_{l}+1,m+k-1}$. In the reduced $n-n_{l}$ qubit problem (i.e. the first $n_{l}$ qubits are removed), it is between qubit $1$ and $m+k-n_{l}-1\leq3k-3$, so it is $3k-3$ local. Fortunately, in the above analysis to derive $n_{l}\geq m-2k+2$, we only require each $P\in\boldsymbol{P}^{\text{center}}$ to be $k$-local, but we don't require it for $\boldsymbol{P}^{\text{left}}$ and $\boldsymbol{P}^{\text{right}}$. Therefore we want to additionally ensure that all $\tilde{Q}_{i}^{\text{center}}$ in the previous iteration enter into $\boldsymbol{P}^{\text{left}}$ in the next iteration, so $\boldsymbol{P}^{\text{center}}$ in the next iteration are all taken from $\boldsymbol{Q}^{\text{right}}=\boldsymbol{P}^{\text{right}}$ in the previous iteration, which are $k$-local. Recall that $\boldsymbol{P}^{\text{left}}$ are elements of $\boldsymbol{P}$ on $q_{\leq m}$ only, so we just need to have $m\geq3k-3$. To summarize, with $m\geq3k-3$, we can iteratively apply a $m$-qubit Clifford transformation to reduce the problem size by $n_{l}\geq m-2k+2$. Thus we could, for example, let $m=4k$, so each time we reduce the problem size by $O(k)$. Finally, with $O(n/k)$ number of $O(k)$-qubit Clifford transformations, we can transform the original stabilizer group to $\langle Z_{1},Z_{2},...,Z_{n}\rangle$. Since $O(k)$-qubit Clifford transformation can decompose to $O(k^{2}/\log k)$ single-qubit and double-qubit Clifford transformations \cite{aaronson2004improved}, the total number of single-qubit and double-qubit Clifford transformations we need is $O(n/k)\times O(k^{2}/\log k)=O(nk/\log k)$.

Finally, we consider the situation that there are up to $s$ elements in $\{P_{1},...,P_{n}\}$ that are not $k$-local. (see Fig. \ref{fig:preparation}, situation 2) In fact, we can use the same procedure with $m=4(k+s)$ and no other modifications. To see the reason, in the first iteration, we have all these $s$ elements classified into $\boldsymbol{P}^{\text{center}}$ (the wide purple and orange blocks at the bottom), so we have $n_{c}=|\boldsymbol{P}^{\text{center}}|\leq2k-2+s$ ($2k-2$ local elements and $s$ non-local elements). After the Clifford transformation, for those $k$-local $P_{i}^{\text{center}}$ (the narrow purple block), the corresponding $Q_{i}^{\text{center}}$ (the narrow orange block) are still $3k-3$ local, and for those non-local $P_{i}^{\text{center}}$ (the wide purple block), the corresponding $Q_{i}^{\text{center}}$ (the wide orange bock) are still generally nonlocal (can also accidentally be local). Thus in the next iteration, those local $Q_{i}^{\text{center}}$ enter into $\boldsymbol{P}^{\text{left}}$ (including the nonlocal ones that accidentally become local), and those non-local $Q_{i}^{\text{center}}$ enter into $\boldsymbol{P}^{\text{center}}$. So we still have $n_{c}\leq2k-2+s$ in the next iteration. Thus, with $m\geq3k-3$, we can iteratively apply $m$-qubit Clifford transformation to reduce the problem size by $n_{l}\geq m-n_{c}=m-2k+2-s$. By choosing, for example, $m=4(k+s)$, we achieve the same conclusion with $L\sim O(nk^{\prime}/\log k^{\prime})$, where $k^{\prime}=k+s$. In fact, when $s=0$, this algorithm reduces to the previous situation. A pseudo-code of this algorithm in shown in Algorithm \ref{alg:local_preparation}.

\begin{algorithm}[H]
\caption{Algorithm to find Clifford operations for \thref{local_preparation}}
\label{alg:local_preparation}
\begin{algorithmic}[1]
\State $m = 4(k+s)$
\State Initiate an empty list of single-qubit and double-qubit Clifford operations $\boldsymbol{U}$
\While{$n = |\boldsymbol{P}| > 0$}
    \State Classify each $P\in\boldsymbol{P}$ to $\boldsymbol{P}^{\text{left}}$, $\boldsymbol{P}^{\text{right}}$, $\boldsymbol{P}^{\text{center}}$ if $P$ acts on only $q_{\leq m}$, only $q_{>m}$, or both parts, respectively
    \State $n_{l} = |\boldsymbol{P}^{\text{left}}|$
    \State Construct Clifford transformation $U_{C}$ such that $U_{C}^{\dagger}\boldsymbol{P}^{\text{left}}U_{C} = \{Z_{1}, ..., Z_{n_{l}}\}$
    \State Decompose $U_{C}$ to single-qubit and double-qubit Clifford operations $U_{C} = U_{1} U_{2} \cdots U_{N}$ with the standard method \cite{aaronson2004improved}
    \State Append $\{U_{1}, U_{2}, ..., U_{N}\}$ (the order matters) to the end of $\boldsymbol{U}$
    \State  $\boldsymbol{P}\rightarrow\mathbb{T}_{>n_{l}}^{*}(U_{C}^{\dagger}\boldsymbol{P}^{\text{center}}U_{C})\cup\boldsymbol{P}^{\text{right}}$

\EndWhile
\State \textbf{Output} $\boldsymbol{U}$
\end{algorithmic}
\end{algorithm}

\bibliographystyle{quantum}
\bibliography{main}
\clearpage

\newcommand{\beginsupplement}{%
        \setcounter{table}{0}
        \renewcommand{\thetable}{S\arabic{table}}%
        \setcounter{figure}{0}
        \renewcommand{\thefigure}{S\arabic{figure}}%
     }



\end{document}